\newtheorem{theorem}{Theorem}
\newtheorem{lemma}[theorem]{Lemma}
\newtheorem{corollary}[theorem]{Corollary}
\newtheorem{proposition}[theorem]{Proposition}
\newcommand{\BibTeX}{B\kern-.05em{\sc i\kern-.025em b}\kern-.08em\TeX}
\begin{document}

%%%%%%%%%%%%%%%%%%%%%%%%%%%%%%%%%%%%%%%%%%%%%%%%%%%%%%%%%%%%%%%%%%%%%%%%

\begin{frontmatter}

%%% Use this command to specify your submission number.
%%% In doubleblind mode, it will be printed on the first page.

\paperid{616} 

%%% Use this command to specify the title of your paper.

\title{Computational Complexity of Standpoint LTL}
%and its Fragments}

%%% Use this combinations of commands to specify all authors of your 
%%% paper. Use \fnms{} and \snm{} to indicate everyone's first names 
%%% and surname. This will help the publisher with indexing the 
%%% proceedings. Please use a reasonable approximation in case your 
%%% name does not neatly split into "first names" and "surname".
%%% Specifying your ORCID digital identifier is optional. 
%%% Use the \thanks{} command to indicate one or more corresponding 
%%% authors and their email address(es). If so desired, you can specify
%%% author contributions using the \footnote{} command.

\author[A]{\fnms{St\'ephane}~\snm{Demri}}
\author[B]{\fnms{Przemys{\l}aw Andrzej}~\snm{Wałęga}%\orcid{0000-0003-2922-047}%\thanks{Corresponding Author. Email: przemyslaw.walega@cs.ox.ac.uk.}
}

\address[A]{Universit{\'e} Paris-Saclay, ENS Paris-Saclay, CNRS, LMF, 91190, Gif-sur-Yvette, France}
\address[B]{University of Oxford, Queen Mary University of London, United Kingdom}

%%% Use this environment to include an abstract of your paper.

\begin{abstract}
  Standpoint linear temporal logic \SLTL is a recent formalism able
  to model possibly conflicting commitments made by distinct agents, taking into account
  aspects of temporal reasoning. 
  In this paper, we analyse  the computational properties of  $\SLTL$.
  First,  we establish logarithmic-space reductions between the satisfiability problems for
  the multi-dimensional modal logic
  $\PTLSfive$
  and 
  $\SLTL$. This leads to the \expspace-completeness of the satisfiability problem in $\SLTL$,
  which is a surprising result in view of previous investigations.
  Next, we present a method of restricting $\SLTL$ so that the obtained fragment is a strict extension of both
  the (non-temporal) standpoint logic and
    linear-time temporal logic $\LTL$, but the  satisfiability problem
  is \pspace-complete in this fragment. 
  Thus, we show how to combine standpoint logic with $\LTL$ so that the worst-case
  complexity of the obtained combination is not higher than of pure $\LTL$. 
\end{abstract}

\end{frontmatter}

%%%%%%%%%%%%%%%%%%%%%%%%%%%%%%%%%%%%%%%%%%%%%%%%%%%%%%%%%%%%%%%%%%%%%%%%

\section{Introduction}
\label{section-introduction}

\paragraph{Standpoint Logic.} 
Recently, a new framework based on modal logics was developed in order
to interpret languages in the presence of vagueness~\citep{GomezAlvarez19}. 
The framework is called `standpoint logic'
where standpoints (a concept first introduced by~\citet{Bennett06}
in a logical context, see also~\cite{Bennett11})
are used to interpret vague expressions.
Logical reasoning about vagueness has a long tradition stemming from fuzzy logics~\cite{Zadeh65,Zadeh75},
to information logics based on
rough sets~\cite{Pawlak82,Orlowska94,Banerjee&Chakraborty&Szalas24}.
In standpoint logic, each standpoint $\astandpoint$ is associated  with modalities $\Diamond_{\astandpoint}$ and
$\Box_{\astandpoint}$,  and with a set of interpretations (a.k.a. {\em precisifications})
corresponding to $\astandpoint$. 
While $\Diamond_{\astandpoint} \aformula$ reads as ``according to $\astandpoint$, it is conceivable that
$\aformula$'', dually, $\Box_{\astandpoint} \aformula$ reads as ``according to $\astandpoint$, it is unequivocal that $\aformula$''. 
The language of standpoint logic is equipped also with a binary operator $\preceq$ between standpoints, such that
$\astandpoint \preceq \astandpoint'$ is interpreted as ``the standpoint $\astandpoint$
is sharper than $\astandpoint'$'' leading to the validity of the modal axiom
$\Box_{\astandpoint'}  \avarprop \implication \Box_{\astandpoint}  \avarprop$.
This is reminiscent to partially-ordered ($\Sfour$) modal operators~\citep{Allwein&Harrison10},
role hierarchies in description logics~\citep{Baaderetal17},
and more generally to grammar logics~\citep{Farinas&Penttonen88},
for which computational properties are well studied~\citep{Baldoni&Giordano&Martelli98,Demri01}. 
Originally, standpoint logic framework was developed
for propositional calculus~\citep{GomezAlvarez19}, then also for predicate logic and
description logics~\citep{GomezAlvarez&Rudolph&Strass23,GomezAlvarezetal23}, and most recently for the
temporal logic $\LTL$~\citep{Gigante&GomezAlvarez&Lyon23}.
This framework has a potential for further combinations with logical formalisms
dedicated to knowledge representation and reasoning, allowing us to perform logical reasoning
about vagueness~\citep{Fine75,Bennett98,Bennett11}.

\paragraph{Standpoint \LTL.} As evoked above,
the recent paper~\citep{Gigante&GomezAlvarez&Lyon23}
introduced a multi-perspective approach by combining standpoints and temporal reasoning
expressed in the linear-time temporal logic $\LTL$~\cite{Pnueli77}, which is a very popular specification
language, for instance used for model-checking~\cite{baier2008principles},
temporal planning~\cite{Calvanese&DeGiacomo&Vardi02,Aminofetal19}, and temporal
reasoning with description logics~\cite{Baader&Ghilardi&Lutz12}.
This new formalism, called $\SLTL$, handles both evolutions of systems and changes of standpoints. 
\citet{Gigante&GomezAlvarez&Lyon23} develop a tableau-based proof system to reason about $\SLTL$, leading to a computational
analysis for deciding the satisfiability status of $\SLTL$ formulae.
Within $\SLTL$, each standpoint is interpreted as a set of
\LTL models (i.e. as a set of $\omega$-sequences of
propositional valuations, also known as traces) and the logical formalism has the ability to model possibly conflicting
commitments made by distinct agents. Hence,  $\SLTL$ significantly increases the
modelling capabilities offered by $\LTL$, and so, $\SLTL$ can be seen as a non-trivial extension of $\LTL$.
Our initial motivation in this work is to understand the computational properties of
$\SLTL$.
We agree with~\citet{Gigante&GomezAlvarez&Lyon23} that it is particularly desirable  to be able to decide
the satisfiability status of $\SLTL$ formulae in polynomial-space, which is
the best we can hope for in view of \pspace-completeness of \LTL~\cite{Sistla&Clarke85}.
However, in contrast to the results of~\citet{Gigante&GomezAlvarez&Lyon23}, we show that the \pspace membership can be guaranteed only for strict fragments of $\SLTL$
(see Section~\ref{section-pspace}) and \expspace is required for the full  $\SLTL$ (see Section~\ref{section-expspace-completeness}).

\paragraph{Our contributions.} We study the computational properties of
the satisfiability problem for the standpoint linear temporal logic $\SLTL$. 
We show that the problem is \expspace-complete (Theorem~\ref{theorem-expspace-completeness})
by establishing logarithmic-space reductions between $\SLTL$ and the multi-dimensional modal logic
$\PTLSfive$ whose satisfiability problem is known to be \expspace-complete~\citep{Gabbayetal03}.  
The obtained  \expspace-completeness of $\SLTL$ contrasts
with \pspace-membership   claimed by~\citet{Gigante&GomezAlvarez&Lyon23}.
In Section~\ref{section-example-expspace}, we  provide 
examples of \SLTL{} formulae whose satisfiability is challenging to check as their models
have an infinite set of traces and at every position, an exponential amount of valuations
are witnessed on such traces (see Proposition~\ref{proposition-expspace-examples}).
\cut{
In Section~\ref{section-pspace}, we  identify two fragments of $\SLTL$ which 
contain both \LTL and propositional standpoint logic, but the satisfiability problem can be decided in them
in polynomial-space.
}
In Section~\ref{section-no-ltl-in-boxs}, we  identify a fragment of $\SLTL$ which 
contains both \LTL and propositional standpoint logic, but the satisfiability problem can be decided 
in polynomial-space.
Since  $\LTL$ is known to be \pspace-complete~\cite{Sistla&Clarke85}, we obtain the same tight complexity result for the newly introduced fragment of \SLTL (Theorem~\ref{theorem-ltlpsl}).
To do so, we use the
automata-based approach following the general principles for \LTL~\citep{Vardi&Wolper94},
but with
non-trivial modifications.

\section{Logical Preliminaries}
\label{section-preliminaries}

In this section, we briefly introduce the standpoint linear temporal logic \SLTL,
the propositional standpoint logic \PSL, 
as well as the multi-dimensional modal logic $\PTLSfive$,
which we exploit in Section~\ref{section-expspace-completeness}.
For motivations and detailed presentation of these  logics,  we refer a reader
to~\citep{alvarez2022standpoint,Gigante&GomezAlvarez&Lyon23} and~\citep[Chapter 5]{Gabbayetal03}.

\subsection{Standpoint linear temporal logic $\SLTL$}
\label{section-introduction-sltl}

The $\SLTL$ \defstyle{formulae} are built over a countably infinite set $\varprop$ of \defstyle{propositional variables} and a countably infinite set $\standpoints = 
\set{\astandpoint, \astandpoint',\ldots}$ of \defstyle{standpoint symbols}
including the \defstyle{universal standpoint symbol} $\universalstandpoint$.
The $\SLTL$ formulae $\aformula$ are defined
according to the grammar:
\cut{
\begin{tightcenter}
$
\aformula ::=
\avarprop \mid
\astandpoint \preceq \astandpoint' \mid
\neg \aformula \mid
\aformula \wedge \aformula \mid
\Diamond_{\astandpoint} \aformula \mid \Box_{\astandpoint} \aformula \mid 
\mynext \aformula \mid \aformula \until \aformula
$,
\end{tightcenter}
}
\[
\aformula ::=
\avarprop \mid
\astandpoint \preceq \astandpoint' \mid
\neg \aformula \mid
\aformula \wedge \aformula \mid
\Diamond_{\astandpoint} \aformula \mid \Box_{\astandpoint} \aformula \mid 
\mynext \aformula \mid \aformula \until \aformula,
\]
where $\avarprop \in \varprop$  and $\astandpoint,\astandpoint' \in \standpoints$;
other  Boolean connectives and \LTL temporal operators (e.g. $\implication$, $\leftrightarrow$,
$\vee$, $\always$ for ``always in the future'', and $\sometimes$ for ``sometime in the future'') are treated as standard abbreviations. In terms  of expressivity, one modality among $\Diamond_{\astandpoint},\Box_{\astandpoint}$
is sufficient. 
An $\SLTL$ \defstyle{model}, $\amodel$, is a structure of the form $\amodel = \pair{\Pi}{\lambda}$ where,
\begin{itemize}
\itemsep 0 cm 
\item $\Pi \neq \emptyset$ is a 
set of $\LTL$ models (traces) of the form
$\altlmodel: \Nat \longrightarrow \powerset{\varprop}$,
\item $\lambda$ is a map of the form $\lambda: \standpoints \longrightarrow (\powerset{\Pi} \setminus \set{\emptyset})$
  such that $\lambda(\universalstandpoint) = \Pi$. 
\end{itemize}
For example, Figure~\ref{figure-m-construction} 
presents
an \SLTL model with six traces. 
The \defstyle{satisfaction relation}, $\amodel, \altlmodel, i \models \aformula$, for an \SLTL model 
$\amodel = \pair{\Pi}{\lambda}$, $\sigma \in \Pi$, 
$i \in \Nat$, and an \SLTL formula $\varphi$  is 
defined inductively as follows (we omit the standard clauses for Boolean connectives):
\begin{align*}
& \amodel, \altlmodel, i \models \avarprop 
&& \equivdef && \avarprop \in \altlmodel(i),
\\
& \amodel, \altlmodel, i \models \astandpoint \preceq \astandpoint' && \equivdef &&
\lambda(\astandpoint) \subseteq \lambda(\astandpoint'),
\\
& \amodel, \altlmodel, i \models \Diamond_{\astandpoint} \aformula && \equivdef &&
\amodel, \altlmodel', i \models \aformula, \text{ for some } \altlmodel' \in \lambda(\astandpoint),
\\
& \amodel, \altlmodel, i \models \Box_{\astandpoint} \aformula && \equivdef &&
\amodel, \altlmodel', i \models  \aformula,
\text{ for all } \altlmodel' \in \lambda(\astandpoint)
,
\\
& \amodel, \altlmodel, i \models \mynext \aformula && \equivdef && \amodel, \altlmodel, i+1 \models \aformula,
\\
& \amodel, \altlmodel, i \models \aformula \until \aformulabis && \equivdef &&
\text{there is } i' \geq i \text{ such that } \amodel, \altlmodel, i' \models \aformulabis 
\\
& && &&  \text{and } \amodel, \altlmodel, i'' \models \aformula \text{ for all } i \leq  i'' < i'.
\end{align*}
The satisfiability problem for $\SLTL$ takes as input an $\SLTL$ formula $\aformula$ and asks whether
there is a model $\amodel = \pair{\Pi}{\lambda}$ and  $\altlmodel \in \Pi$ such that
$\amodel, \altlmodel, 0 \models \aformula$. 
By way of example, we provide an \SLTL formula below taken from the medical devices example~\cite[Section 2.1]{Gigante&GomezAlvarez&Lyon23}
(more elaborated examples can be found
in~\citep{GomezAlvarez19,alvarez2022standpoint,Gigante&GomezAlvarez&Lyon23}):
\cut{
\begin{tightcenter}
  $
  \Box_{\universalstandpoint} (   \always \neg \mathit{Malf}  \to \mathit{Test} ) \land 
  \Box_{\mathsf{IT}} (  \mathit{Comp} \vee \mathit{Test} \to \mathit{Safe} ).
  $
\end{tightcenter}
}
\[
  \Box_{\universalstandpoint} (   \always \neg \mathit{Malf}  \to \mathit{Test} ) \land 
  \Box_{\mathsf{IT}} (  \mathit{Comp} \vee \mathit{Test} \to \mathit{Safe} ).
\]
The first conjunct states that all countries agree in their standpoints that if a
medical device never malfunctions, then it is safe according to testing. 
The second one states that Italy deems a device safe if it is safe according to testing
or it has been found safe by comparison.
% Note the interplay between the modalities $\Box_{\universalstandpoint}$ and $\always$;
% furthermore, the formula does not belong to the fragment \LTLPSL studied in
% Section~\ref{section-no-ltl-in-boxs}. 

Regarding our definition of \SLTL, it is worth observing that, as far as we can judge, the satisfiability problem is not formally
defined in~\cite[Section 2.2]{Gigante&GomezAlvarez&Lyon23}. Only
  the validity problem is defined. 
  In particular, non validity of the $\SLTL$ formula
  $\aformula$ is defined 
  as the existence of
  an \SLTL model $\amodel = \pair{\Pi}{\lambda}$ and $\altlmodel \in \Pi$ such that
  $\amodel, \altlmodel, 0 \models \neg \aformula$.
  So, our  definition of satisfiability is  dual to the notion of
  validity used by~\citet{Gigante&GomezAlvarez&Lyon23}.

Besides,  it is worth noting that the above presentation of $\SLTL$ differs slightly with the
definition of~\citet[Section 2.2]{Gigante&GomezAlvarez&Lyon23}, but it has no impact on our results, as described next.
$\SLTL$ formulae, as defined in~\cite{Gigante&GomezAlvarez&Lyon23},
are in negation normal form and allow for using the ``release'' 
\LTL operator.
In our definition negation is unrestricted and we do not use the  ``release'' 
operator, which
makes no substantial difference.
\citet{Gigante&GomezAlvarez&Lyon23} assumes also that the formulae $\astandpoint \preceq \astandpoint'$
  cannot be combined with other formulae, so
the way we define $\SLTL$ formulae  is slightly more expressive.
However, our \expspace-hardness proof (reduction in Lemma~\ref{reducetoSLTL})  does not use  formulae of the form $\astandpoint \preceq \astandpoint'$
(actually only the modalities $\Box_{\universalstandpoint}$, $\mynext$, and $\until$ are needed).
The \expspace-membership (Corollary~\ref{inEXPS}) for our, slightly richer language, clearly implies the same upper bound for the weaker
language of \citet{Gigante&GomezAlvarez&Lyon23}.

The change of standpoint performed with the modalities
$\Diamond_{\astandpoint}$ and $\Box_{\astandpoint}$ is reminiscent to the change of observational power studied
in~\cite{Barriereetal19} with the modalities $\Delta^o$. In both cases, a modality explicitly performs
a change in the way the forthcoming formulae are evaluated. 

\subsection{Propositional standpoint logic $\PSL$}
\label{section-introduction-psl}

In the sequel, we also consider \defstyle{propositional standpoint logic}~\citep{alvarez2022standpoint}
(herein, written $\PSL$) understood as the fragment of $\SLTL$ without
temporal connectives. The grammar of formulae is restricted
to
\[
\aformula ::=
\avarprop \mid
\astandpoint \preceq \astandpoint' \mid
\neg \aformula \mid
\aformula \wedge \aformula \mid
\Diamond_{\astandpoint} \aformula \mid \Box_{\astandpoint} \aformula,
\]
and the models are of the form $\amodel = \pair{\Pi}{V}$ where $\Pi$ is a finite non-empty set of  \defstyle{precisifications},
$V: \standpoints \cup \varprop \longrightarrow \powerset{\precisis}$ is a \defstyle{valuation}
such that
for all $\astandpoint \in \standpoints$, we have $V(\astandpoint) \neq \emptyset$
and $V(\universalstandpoint) = \precisis$.
The satisfaction relation is defined as follows (where $\aprecisi \in \Pi$ and we omit the obvious clauses for Boolean connectives):
\begin{align*}
  &\amodel, \aprecisi \models \avarprop && \equivdef &&
  \aprecisi \in V(\avarprop),
\\
& \amodel, \aprecisi \models \astandpoint \preceq \astandpoint' && \equivdef && V(\astandpoint) \subseteq V(\astandpoint'),
\\
& \amodel, \aprecisi \models \Diamond_{\astandpoint} \aformula && \equivdef &&
\amodel, \aprecisi' \models \aformula, \text{ for some } \aprecisi' \in V(\astandpoint),
\\
& \amodel, \aprecisi \models \Box_{\astandpoint} \aformula && \equivdef &&
\amodel, \aprecisi' \models \aformula, \text{ for all } \aprecisi' \in V(\astandpoint). 
\end{align*}
The satisfiability problem, for  an input formula $\aformula$, consists in checking whether there is some pair $\amodel, \aprecisi$  such that
$\amodel, \aprecisi \models \aformula$. This problem
is \np-complete.
To show \np-membership, ~\citet[Section 4.4.2]{GomezAlvarez19} proved that if a satisfiable formula $\aformula$ contains
$N_1$ many standpoint symbols and $N_2$ many diamond modal operators, then $\aformula$ is satisfied in a  model
$\amodel = \pair{\precisis}{V}$ such that $\card{\Pi} \leq N_1 + N_2 +1$. 
An alternative way to get
the \np-membership is to translate
$\aformula$ into a  formula $\bigwedge_{\astandpoint} \Diamond \astandpoint \wedge  \atranslation(\aformula)$
of the modal logic $\Sfive$~\cite{Blackburn&deRijke&Venema01}, 
where $\atranslation$ turns standpoint operators into  modal operators in the following manner: 
$\atranslation(\astandpoint \preceq \astandpoint') = \Box(\astandpoint \implication \astandpoint')$,
$\atranslation(\Diamond_{\astandpoint} \aformula)
= \Diamond(\astandpoint \wedge \atranslation(\aformula))$, and
$\atranslation(\Box_{\astandpoint} \aformula) = \Box (  \astandpoint \to \atranslation(\aformula))$.
The correctness of such a reduction relies naturally on the Kripke-style semantics for $\PSL$.
We will refine complexity analysis of \PSL in Section~\ref{section-no-ltl-in-boxs}, which will be essential to establish complexity of \SLTL fragments.

\subsection{Multi-dimensional modal logic $\PTLSfive$}

Another logic that is useful herein is the multi-dimensional modal logic
$\PTLSfive$~\citep[Chapter 5]{Gabbayetal03} defined as the product of $\LTL$ and $\Sfive$.
$\PTLSfive$ formulae are generated from the grammar
\[
\aformula::=
\avarprop \mid
\neg \aformula \mid 
\aformula \wedge \aformula \mid
\Diamond \aformula \mid \Box \aformula \mid 
\mynext \aformula \mid \aformula \until \aformula,
\]
where $\avarprop \in \varprop$ is a propositional
variable. As in \SLTL, we use standard abbreviations for other Boolean connectives and \LTL operators
($\implication$, $\vee$, $\always$, $\sometimes$, etc.).
The \defstyle{models} for $\PTLSfive$ are of the form $\amodel = \triple{\Nat \times W}{R}{L}$ where
$\pair{W}{R}$ is an $\Sfive$-frame (i.e. $R$ is an equivalence relation on $W$)
and $L: \Nat \times W \longrightarrow \powerset{\varprop}$. The satisfaction relation for
$\PTLSfive$ is defined as follows (again, we omit the standard clauses for Boolean connectives):
\begin{align*}
& \amodel, \pair{n}{w}  \models \avarprop  
&& \hspace{-1em}\equivdef && \hspace{-1em} \avarprop \in L(n,w),
\\
& \amodel, \pair{n}{w} \models \Diamond \aformula && \hspace{-1em}\equivdef && \hspace{-1em}
\amodel, \pair{n}{w'} \models \aformula,
  \text{ for some } w' \in R(w),
\\  
& \amodel, \pair{n}{w} \models \Box \aformula && \hspace{-1em}\equivdef && \hspace{-1em}
\amodel, \pair{n}{w'} \models \aformula,
  \text{ for all } w' \in R(w),
\\
& \amodel, \pair{n}{w} \models \mynext \aformula && \hspace{-1em}\equivdef && \hspace{-1em}
\amodel, \pair{n+1}{w} \models \mynext \aformula,
\\
& \amodel, \pair{n}{w} \models \aformula \until \aformulabis && \hspace{-1em}\equivdef && \hspace{-1em}
  \text{there is } n' \geq n \text{ such that } \amodel, \pair{n'}{w} \models \aformulabis 
\\
&  &&  && \hspace{-1em}  \text{and } \amodel, \pair{n''}{w} \models \aformula \text{ for all } n \leq n'' < n'.
\end{align*}
Therefore, the modalities $\Diamond$  and $\Box$ allow us to move within
the $\pair{W}{R}$ dimension  whereas the temporal connectives $\mynext$ and $\until$
allow us to move along the $\pair{\Nat}{\leq}$ dimension. 
The satisfiability problem for $\PTLSfive$ takes as input a $\PTLSfive$ formula $\aformula$ and asks whether
there is a model $\amodel = \triple{\Nat \times W}{R}{L}$ and $\pair{n}{w} \in \Nat \times W$
such that $\amodel, \pair{n}{w} \models \aformula$.
It is  known, that satisfaction of a formula can be always witnessed by a model $\amodel = \triple{\Nat \times W}{R}{L}$ with $R= W \times W$ and by a pair $(0,w)$ (i.e. its first component is the origin position $0$).
In the sequel we will use this assumption; in particular, we will assume that $R= W \times W$, and for simplicity of presentation we will drop the component $R$ from $\PTLSfive$ models.

It is worth noting that the above presentation of $\PTLSfive$ differs slightly from the
definitions by~\citet[Section 2.1]{Gabbayetal03}, but it has no impact on our results.
Indeed, \citet{Gabbayetal03}
use only the strict ``until'' operator, which we denote by $\until_{<}$ (and no next-time operator $\mynext$) and whose semantics is as follows:
\begin{align*}
& \amodel, \pair{n}{w} \models \aformula \until_{<} \aformulabis &&\hspace{-1.4em}\equivdef && \hspace{-1.1em}
  \text{there is } n' > n \text{ with } \amodel, \pair{n'}{w} \models \aformulabis 
\\
& && && \hspace{-2em} \text{and } \amodel, \pair{n''}{w} \models \aformula \text{ for all } n < n'' < n'.
\end{align*}
It is easy to see that  $\aformula \until_{<} \aformulabis$ can be encoded by 
$\mynext(\aformula  \until  \aformulabis)$
and therefore the \expspace-hardness for $\PTLSfive$ proved by~\citet[Theorem 5.43]{Gabbayetal03} applies
also to our version of $\PTLSfive$. 
As far as the \expspace-membership is concerned, the satisfiability
problem for our version of $\PTLSfive$ is in \expspace using the approach of~\citet[Theorem 6.65]{Gabbayetal03} dedicated
to $\PTLSfive$ with strict ``until'' and using a standard renaming technique~\citep[Proposition 2.10]{Gabbayetal03}.
Note that a naive translation $\atranslation$ 
from our language to a formula with strict ``until'' exploiting
 $\atranslation(\aformula \until \aformulabis) = \atranslation(\aformulabis) \vee 
(\atranslation(\aformula) \wedge
\atranslation(\aformula) \until_{<}
\atranslation(\aformulabis) )$,
would 
cause an exponential blow-up.
However, we can get a logarithmic-space reduction using the renaming technique,
where any subformula $\aformulater$ is associated with a fresh propositional variable $\avarprop_{\aformulater}$.
For instance, to capture the meaning of $\aformula \until \aformulabis$ we introduce an additional formula
$
\Box \always
\big(
\avarprop_{\aformula \until \aformulabis} \leftrightarrow
\avarprop_{\aformulabis} \vee 
( \avarprop_{\aformula} \wedge 
\avarprop_{\aformula} \until_{<}
\avarprop_{\aformulabis})\big).
$
This additional formula (propagating an equivalence all over the model),
if asserted in any  world of the form $\pair{0}{w}$, allows us to state that
$\avarprop_{\aformula \until \aformulabis}$ is equivalent with $\aformulabis \vee 
( \aformula \wedge 
\aformula \until_{<}
\aformulabis)$, in all elements in $\Nat \times W$.
The propagation is over the model because of the modality $\Box \always$ and we can always
assume that a world satisfying our formula is of the form $\pair{0}{w}$. 
As a conclusion, the version of $\PTLSfive$ involved in this paper admits also
an \expspace-complete satisfiability problem
(\expspace-hardness follows from the fact that $\aformula \until_{<} \aformulabis$ can be encoded by 
$\mynext(\aformula  \until  \aformulabis)$).

Let us conclude this section by evoking the relationships between \SLTL and the well-known modal logic
$\Sfive$~\citep{Blackburn&deRijke&Venema01}.
The logic $\SLTL$ contains a modality $\Box_{\universalstandpoint}$ where $\universalstandpoint$
can be understood as the universal
standpoint interpreted by the total set of
%% precisifications
traces and therefore  $\Box_{*}$ behaves
naturally as an $\Sfive$ modality, whence the component $\Sfive$ in $\PTLSfive$.
The presence of $\Sfive$ is not our finding as it has been already observed that
quantification over precisifications leads to $\Sfive$ modalities, see e.g., the works of~\citet[Section 2.1]{Bennett98}
and~\citet[page 43]{Bennett06}, as well as the presence of $\Sfive$ modalities for modelling
standpoints in~\cite[Section 3.5]{GomezAlvarez19} and in~\cite[Chapter 4]{GomezAlvarez19}.
Furthermore, the relationship with multi-dimensional modal logics is already
briefly evoked in~\cite[Section 7.3.3]{GomezAlvarez19}. More importantly,
an early introduction of some multi-dimensional modal logic is performed by~\citet[Section 2.1]{Bennett98} where
first-order logic and propositional modal logic $\Sfive$
are mixed. Hence, the fact that we use $\PTLSfive$ is not a total surprise, and the
need for multi-dimensional modal logics was in the air for some time.
Our contribution consists in establishing a formal result involving
multi-dimensional modal logics and in designing simple logarithmic-space reductions between
$\SLTL$ and  $\PTLSfive$, proving \expspace-hardness of $\SLTL$, despite the complexity
result claimed in~\cite[Theorem 28]{Gigante&GomezAlvarez&Lyon23}. 
The design of
a \pspace fragment completes our
analysis
and provides us with a fragment of $\SLTL$ which is a strict extension of both $\LTL$ and standpoint logic, but its
computational complexity is not higher than the complexity of pure $\LTL$.

\section{Satisfiability for $\SLTL$ is \expspace-complete}
\label{section-expspace-completeness} 

In this section we design logarithmic-space translations from $\SLTL$  to  $\PTLSfive$ formulae, and
vice versa.
As a result, we will obtain that the computational complexity of  satisfiability for $\SLTL$ formulae is the same as for   $\PTLSfive$ formulae, namely \expspace-complete.
Both of our translations  exploit similarities between
$\SLTL$ and  $\PTLSfive$ models.
More specifically,
we will consider an element $w\in W$ in an $\PTLSfive$ model $\pair{\Nat \times W}{L}$
as a name for an $\LTL$ trace $\sigma \in \Pi$ from an $\SLTL$ model $\pair{\Pi}{\lambda}$. 

\subsection{%\expspace-hardness: 
Translation from $\PTLSfive$ to $\SLTL$}
\label{section-expspace-hardness}

Let $\aformula$ be a $\PTLSfive$  formula and $\atranslation_1(\aformula)$ be its translation
obtained from $\aformula$ by replacing every occurrence of $\Diamond$ by $\Diamond_{\universalstandpoint}$
and every occurrence 
$\Box$ by $\Box_{\universalstandpoint}$ (an alternative translation consists in using
$\Diamond_{\astandpoint}$ and 
and $\Box_{\astandpoint}$ for some fixed $\astandpoint \in \standpoints$).
We show that this simple translation preserves satisfiability.

\begin{lemma}\label{reducetoSLTL}
$\aformula$ is  $\PTLSfive$-satisfiable iff $\atranslation_1(\aformula)$ is  $\SLTL$-satisfiable.
\end{lemma}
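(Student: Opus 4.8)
The plan is to exploit the structural similarity between $\PTLSfive$ models in the normal form $\amodel' = \pair{\Nat\times W}{L}$ (with $R = W\times W$, as recalled in the preliminaries) and $\SLTL$ models $\pair{\Pi}{\lambda}$: a world $w\in W$ is read as the name of an $\LTL$ trace, the set $W$ of worlds corresponds to the carrier $\Pi$, and the equivalence relation $R = W\times W$ corresponds to $\lambda(\universalstandpoint) = \Pi$. Since $\atranslation_1$ only replaces $\Diamond$ by $\Diamond_{\universalstandpoint}$ and $\Box$ by $\Box_{\universalstandpoint}$, leaving $\mynext$, $\until$ and the Boolean connectives untouched, the $\Sfive$-modalities of $\PTLSfive$ (which range over all of $W$) match exactly the universal-standpoint modalities of $\SLTL$ (which range over all of $\lambda(\universalstandpoint)=\Pi$). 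I would prove the two implications separately, each by a routine structural induction on $\PTLSfive$ formulae after exhibiting the model transformation. Note also that $\atranslation_1$ is a plain syntactic homomorphism on formulae, hence computable in logarithmic space, so nothing extra is needed for the complexity side.

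For the direction from $\PTLSfive$ to $\SLTL$, start from $\amodel' = \pair{\Nat\times W}{L}$ with $\amodel', \pair{0}{w_0}\models\aformula$. For every $w\in W$ define the trace $\sigma_w\colon\Nat\to\powerset{\varprop}$ by $\sigma_w(n) \egdef L(n,w)$, put $\Pi \egdef \set{\sigma_w \mid w\in W}$, and let $\lambda$ send every standpoint symbol to $\Pi$ (so $\lambda(\universalstandpoint)=\Pi$ and $\lambda(\astandpoint)\neq\emptyset$ for all $\astandpoint$, since $W\neq\emptyset$). Then show, by induction on the $\PTLSfive$ formula $\aformulabis$, that $\amodel', \pair{n}{w}\models\aformulabis$ iff $\pair{\Pi}{\lambda}, \sigma_w, n\models\atranslation_1(\aformulabis)$ for all $n\in\Nat$, $w\in W$. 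The atomic, Boolean and temporal cases are immediate from the definition of $\sigma_w$; for $\Diamond$ (dually $\Box$), $\amodel',\pair{n}{w}\models\Diamond\aformulabis$ iff some $w'\in W$ satisfies $\amodel',\pair{n}{w'}\models\aformulabis$, which by the induction hypothesis holds iff some element of $\set{\sigma_{w'}\mid w'\in W}=\Pi$ satisfies $\atranslation_1(\aformulabis)$ at $n$, i.e. iff $\pair{\Pi}{\lambda},\sigma_w,n\models\Diamond_{\universalstandpoint}\atranslation_1(\aformulabis)$ (here $\lambda(\universalstandpoint)=\Pi$ is used; the map $w\mapsto\sigma_w$ need not be injective, but this is irrelevant since $\Diamond_{\universalstandpoint}$ quantifies over the set $\Pi$). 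Applying the equivalence to $\aformulabis=\aformula$ at $\pair{0}{w_0}$ yields $\pair{\Pi}{\lambda}, \sigma_{w_0}, 0\models\atranslation_1(\aformula)$.

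Conversely, suppose $\pair{\Pi}{\lambda}, \sigma_0, 0\models\atranslation_1(\aformula)$ with $\sigma_0\in\Pi$. Since $\atranslation_1(\aformula)$ mentions no subformula $\astandpoint\preceq\astandpoint'$ and no standpoint symbol other than $\universalstandpoint$, the map $\lambda$ is irrelevant except for $\lambda(\universalstandpoint)=\Pi$. Take $W \egdef \Pi$, $R \egdef W\times W$, and $L(n,\sigma)\egdef\sigma(n)$, which gives a well-defined $\PTLSfive$ model $\amodel'$ (with $\pair{W}{R}$ an $\Sfive$-frame and $W\neq\emptyset$). A symmetric induction shows $\pair{\Pi}{\lambda},\sigma,n\models\atranslation_1(\aformulabis)$ iff $\amodel',\pair{n}{\sigma}\models\aformulabis$ for every $\PTLSfive$ formula $\aformulabis$, the $\Diamond_{\universalstandpoint}$ case matching the $\Diamond$ case because $\lambda(\universalstandpoint)=\Pi=W=R(\sigma)$. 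In particular $\amodel',\pair{0}{\sigma_0}\models\aformula$, so $\aformula$ is $\PTLSfive$-satisfiable. There is no deep obstacle here: the proof is two model translations plus two bookkeeping inductions; the only points requiring care are invoking the normal-form assumption $R=W\times W$ for $\PTLSfive$ so that $\Box$ ranges over all worlds (matching $\Box_{\universalstandpoint}$ ranging over all of $\Pi$), and checking that the rest of $\lambda$ plays no role because $\atranslation_1(\aformula)$ uses only the universal standpoint.
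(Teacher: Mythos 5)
Your proposal is correct and follows essentially the same route as the paper: the same two model transformations (worlds as trace names, $\Pi=W$ in one direction and $W=\Pi$ in the other, with $\lambda$ mapping every standpoint to $\Pi$) followed by the same bookkeeping structural inductions. Your additional remarks --- that $w\mapsto\sigma_w$ need not be injective but this is harmless, and that $\lambda$ is irrelevant outside $\universalstandpoint$ --- are accurate and only make explicit points the paper leaves implicit.
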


\begin{proof}[Proof sketch] 
%  Let $\aformula$ be an $\PTLSfive$ formula. 
Assume that there is a $\PTLSfive$ model
  $\amodel = \pair{\Nat \times W}{L}$  and $\pair{0}{w} \in \Nat \times W$ such that
  $\amodel, \pair{0}{w} \models \aformula$. 
  Let us build an $\SLTL$ model $\amodel' = \pair{\Pi}{\lambda}$ and a trace $\altlmodel \in \Pi$
  such that
  $\amodel, \altlmodel,0 \models \atranslation_1(\aformula)$.
  To this end, we let $\Pi$ consist of all traces of the form
  \[
  \altlmodel_{w'} \egdef L((0,w')), L((1,w')), L((2,w')), L((3,w')), \ldots
  \]
  with $w' \in W$;
  note that we represent  a trace $\altlmodel': \Nat \longrightarrow \powerset{\varprop}$ as an $\omega$-sequence $\altlmodel'(0),\altlmodel'(1), \altlmodel'(2), \ldots$.
  We let $\lambda(\astandpoint) \egdef \Pi$, for each $\astandpoint \in \standpoints$ (for the construction it is only important that $\lambda(\universalstandpoint) = \Pi$ as no other standpoint symbol occurs in $\atranslation_1(\aformula)$).  
We can prove by 
structural induction that, for all subformulae $\aformulabis$ of $\aformula$ and
  for all $\pair{n}{w'} \in \Nat \times W$, we have
    $\amodel, \pair{n}{w'} \models \aformulabis$ iff 
    $\amodel', \altlmodel_{w'}, n \models \atranslation_1(\aformulabis)$.  
  By way of example
  we handle below the case when
  the subformula is of the form $\Box \aformulabis$.
  It suffices to observe that the following statements are equivalent:
  \begin{itemize}
  \itemsep 0 cm 
  \item $\amodel, \pair{n}{w'} \models \Box \aformulabis$
  \item $\amodel, \pair{n}{w''} \models \aformulabis$ for all
    $w'' \in W$  \hfill 
     (by definition of $\models$)
     %for $\PTLSfive$)
  \item $\amodel', \altlmodel_{w''}, n \models
        \atranslation_1(\aformulabis)$ for all
        $w'' \in W$
        \hfill (by  induction hypothesis)
  \item $\amodel', \altlmodel', n \models
        \atranslation_1(\aformulabis)$ for all
        $\altlmodel' \in \Pi$
        \hfill (by definition of $\Pi$)
  \item $\amodel', \altlmodel_{w'}, n \models
        \Box_{*} \atranslation_1(\aformulabis)$
        \hfill (by definition of $\models$)% for $\SLTL$)
  \item $\amodel', \altlmodel_{w'}, n \models
        \atranslation_1(\Box \aformulabis)$
        \hfill (by definition of $\atranslation_1$)
  \end{itemize}
Therefore, we can show that $\amodel, \altlmodel_w,0 \models \atranslation_1(\aformula)$.
    
  For the opposite implication, assume that 
  there are an $\SLTL$ model $\amodel = \pair{\Pi}{\lambda}$ and a trace
  $\altlmodel \in \Pi$
  such that
  $\amodel, \altlmodel,0 \models \atranslation_1(\aformula)$.
  Let us build a $\PTLSfive$ model  $\amodel' = \pair{\Nat \times W}{L}$ and $\pair{0}{w} \in \Nat \times W$ such that
  $\amodel', \pair{0}{w} \models \aformula$.
  We let $W \egdef \Pi$  and $L(\pair{n}{\altlmodel'}) \egdef \altlmodel'(n)$, for all $\pair{n}{\altlmodel'} \in \Nat \times W$.  
  We can show by structural induction that  for all subformulae $\aformulabis$ of $\aformula$,
   all $\altlmodel' \in \Pi$, and  all $n \in \Nat$, we have
%  \begin{center}
    $\amodel, \altlmodel', n  \models \atranslation_1(\aformulabis)$  iff 
    $\amodel', \pair{n}{\altlmodel'} \models \aformulabis$.  
  By way of example, 
  we handle below the case when
  the subformula is of the form $\aformulabis_1 \until \aformulabis_2$. To this end, we observe that the below statements are equivalent: 
  \begin{itemize}
  \item $\amodel, \altlmodel', n  \models
    \atranslation_1(\aformulabis_1 \until \aformulabis_2)$
  \item $\amodel, \altlmodel', n  \models
    \atranslation_1(\aformulabis_1) \until \atranslation_1(\aformulabis_2)$
    \hfill (by definition of $\atranslation_1$)
  \item There is $n' \geq n$ such that
    $\amodel, \altlmodel', n'  \models \atranslation_1(\aformulabis_2)$
    and for all $n \leq n'' < n'$, we have
    $\amodel, \altlmodel', n''  \models \atranslation_1(\aformulabis_1)$
    \hfill (by definition of $\models$)
  \item There is $n' \geq n$ such that
    $\amodel', \pair{n'}{\altlmodel'} \models \aformulabis_2$
    and for all $n \leq n'' < n'$, we have
    $\amodel', \pair{n''}{\altlmodel'} \models \aformulabis_1$
    \hfill (by  induction hypothesis)
  \item $\amodel', \pair{n}{\altlmodel'} \models
    \aformulabis_1 \until \aformulabis_2$
    \hfill (by definition of $\models$)
  \end{itemize}     
   This allows us to show that $\amodel', \pair{0}{\altlmodel} \models \aformula$.
   %% Further details of the proof can be found in Appendix~\ref{appendix-reducetoSLTL}. 
\end{proof}

\noindent As $\atranslation_1(\aformula)$ is computed in logarithmic-space, we get the following. 

\begin{corollary}\label{EXPShard}
$\SLTL$-satisfiability is \expspace-hard.
\end{corollary}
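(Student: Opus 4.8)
The plan is to combine Lemma~\ref{reducetoSLTL} with the known \expspace-hardness of $\PTLSfive$-satisfiability. Recall that \citet[Theorem 5.43]{Gabbayetal03} established \expspace-hardness of satisfiability for $\PTLSfive$ in the strict-until formulation, and we already observed in Section~\ref{section-preliminaries} that $\aformula \until_{<} \aformulabis$ is expressible as $\mynext(\aformula \until \aformulabis)$, so the same lower bound transfers to the version of $\PTLSfive$ used here. It therefore suffices to exhibit a logarithmic-space many-one reduction from $\PTLSfive$-satisfiability to $\SLTL$-satisfiability, and to take the translation $\atranslation_1$ from Section~\ref{section-expspace-hardness} as that reduction.

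Concretely, I would argue in two short steps. First, correctness of $\atranslation_1$ as a reduction is exactly the statement of Lemma~\ref{reducetoSLTL}: $\aformula$ is $\PTLSfive$-satisfiable iff $\atranslation_1(\aformula)$ is $\SLTL$-satisfiable. Second, I would check that $\atranslation_1$ is computable in logarithmic space: it merely traverses the syntax tree of the input and rewrites each occurrence of $\Diamond$ as $\Diamond_{\universalstandpoint}$ and each occurrence of $\Box$ as $\Box_{\universalstandpoint}$, leaving propositional variables, Boolean connectives, $\mynext$ and $\until$ untouched. Such a homomorphic rewriting needs only a constant number of pointers into the input together with a counter of logarithmic size, and it produces an output of size linear in $|\aformula|$. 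Chaining the two steps yields a log-space (in fact essentially length-preserving) reduction from an \expspace-hard problem to $\SLTL$-satisfiability, hence the latter is \expspace-hard.

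There is no genuine obstacle here; the statement is an immediate corollary. The only points that warrant a line of care are: (i) the output of $\atranslation_1$ really lies in the syntactic class we defined as $\SLTL$ — it does, since $\Diamond_{\universalstandpoint}$, $\Box_{\universalstandpoint}$, $\mynext$, $\until$ and the Booleans are all $\SLTL$ connectives, and in particular no subformula of the form $\astandpoint \preceq \astandpoint'$ is ever produced, so the reduction even lands in the restricted fragment used by \citet{Gigante&GomezAlvarez&Lyon23}; and (ii) the source lower bound we invoke concerns the same formulation of $\PTLSfive$ fixed in Section~\ref{section-preliminaries}, which was already reconciled there via the $\until_{<}$-to-$\mynext(\cdot\until\cdot)$ encoding.
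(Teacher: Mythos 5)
Your argument is correct and matches the paper's proof exactly: the paper derives this corollary from Lemma~\ref{reducetoSLTL} together with the observation that $\atranslation_1$ is computable in logarithmic space, relying on the \expspace-hardness of $\PTLSfive$ already reconciled in Section~\ref{section-preliminaries} via the $\until_{<}$-to-$\mynext(\cdot\until\cdot)$ encoding. Your extra remarks on the output landing in the restricted fragment and on the formulation of $\PTLSfive$ are also consistent with the paper's discussion.
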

Note that this result
contradicts the \pspace upper bound shown by~\citet[Theorem 28]{Gigante&GomezAlvarez&Lyon23}, as \pspace and \expspace
are known to be distinct complexity classes. 

\subsection{%\expspace-membership: 
Translation from $\SLTL$ to $\PTLSfive$}

The translation from $\SLTL$ to $\PTLSfive$ is (slightly) more complex since  $\SLTL$ models interpret standpoint symbols, which have no natural counterpart in $\PTLSfive$ models.
As we will show, however, standpoints can be simulated in $\PTLSfive$ with fresh   propositional variables.
This, in particular, requires an additional formula (called $\aformulater_n$ below) simulating the requirements that each standpoint in an \SLTL model has at least one associated trace and that if a trace is assigned to a standpoint, it is so throughout the entire timeline.
As we show, such requirements can be easily expressed in $\PTLSfive$.

Given an $\SLTL$ formula $\aformula$, we construct a $\PTLSfive$ formula $\atranslation_2(\aformula)$, by applying  the translation map
$\atranslation_2$ such that $\atranslation_2$
is the identity map for propositional variables, it
is homomorphic for Boolean and temporal connectives, and the following hold for all $\astandpoint, \astandpoint' \in \standpoints$:
\begin{align*}
\atranslation_2(\Diamond_{*} \aformulabis) 
&\egdef 
\Diamond \atranslation_2(\aformulabis),
&
\atranslation_2(\Diamond_{\astandpoint} \aformulabis) 
&\egdef
        \Diamond (\astandpoint \wedge \atranslation_2(\aformulabis)),
\\
\atranslation_2(\Box_{*} \aformulabis) 
&\egdef 
\Box \atranslation_2(\aformulabis),
&
\atranslation_2(\Box_{\mathtt{s}} \aformulabis) 
&\egdef 
        \Box (\astandpoint \implication \atranslation_2(\aformulabis)),
\\
& & 
\atranslation_2(\mathtt{s} \preceq \astandpoint')
&\egdef 
            \Box(\astandpoint \implication \astandpoint'). 
\end{align*}
Assuming that the standpoint symbols in $\aformula$ are $\astandpoint_1, \ldots,\astandpoint_n$, we let
\[
\aformulater_n \egdef
\bigwedge_{1 \leq i \leq n} (\Diamond \astandpoint_i)
\wedge
\Box \bigwedge_{1 \leq i \leq n} (\always \mathtt{s_i} \vee \always \neg \astandpoint_i). 
\]
As we show next, checking satisfiability of $\varphi$ in $\SLTL$ reduces to checking satisfiability of $\aformulater_n \wedge \atranslation_2(\aformula)$ in $\PTLSfive$.

\begin{lemma}\label{reducefromSLTL}
  $\aformula$ is $\SLTL$-satisfiable iff
  $\aformulater_n\!\wedge\! \atranslation_2(\aformula)$ is  $\PTLSfive$-satisfiable.
\end{lemma}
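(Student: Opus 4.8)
The plan is to mimic the strategy of Lemma~\ref{reducetoSLTL}: construct a model for one logic from a model of the other, reading each world $w \in W$ of a $\PTLSfive$ model as a name for a trace $\altlmodel \in \Pi$ of an $\SLTL$ model (and conversely), and then carry out a structural induction matching each $\SLTL$ formula with its $\atranslation_2$-translation. The new ingredient compared with Lemma~\ref{reducetoSLTL} is that standpoint symbols $\astandpoint_1,\dots,\astandpoint_n$ no longer correspond to anything in $\PTLSfive$, so they are encoded by the fresh propositional variables $\astandpoint_1,\dots,\astandpoint_n$, and the conjunct $\aformulater_n$ is what makes this encoding faithful.

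\medskip
\noindent\textbf{Left-to-right.}
First I would take an $\SLTL$ model $\amodel = \pair{\Pi}{\lambda}$ with $\amodel,\altlmodel,0 \models \aformula$ and build a $\PTLSfive$ model $\amodel' = \pair{\Nat \times W}{L}$ by setting $W \egdef \Pi$ and, for $\altlmodel' \in \Pi$ and $n \in \Nat$, $L(\pair{n}{\altlmodel'}) \egdef \altlmodel'(n) \cup \csetsc{\astandpoint_i}{\altlmodel' \in \lambda(\astandpoint_i)}$; that is, the fresh variable $\astandpoint_i$ is made true exactly at those worlds whose trace belongs to $\lambda(\astandpoint_i)$, and this holds at \emph{every} time point of that trace. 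I would then check that $\amodel' \models \aformulater_n$ at every world of the form $\pair{0}{\altlmodel'}$: the conjunct $\Diamond \astandpoint_i$ holds because $\lambda(\astandpoint_i) \neq \emptyset$, and $\Box \bigwedge_i (\always \astandpoint_i \vee \always \neg \astandpoint_i)$ holds because membership of a trace in $\lambda(\astandpoint_i)$ is a property of the whole trace, hence the truth value of $\astandpoint_i$ is constant along each column $\set{\altlmodel'} \times \Nat$. Next I would prove by structural induction on subformulae $\aformulabis$ of $\aformula$ that for all $\altlmodel' \in \Pi$ and all $n$, $\amodel,\altlmodel',n \models \aformulabis$ iff $\amodel', \pair{n}{\altlmodel'} \models \atranslation_2(\aformulabis)$. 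The Boolean, $\mynext$ and $\until$ cases are exactly as in Lemma~\ref{reducetoSLTL}; the new cases are $\astandpoint \preceq \astandpoint'$, which translates $\lambda(\astandpoint) \subseteq \lambda(\astandpoint')$ into validity of $\astandpoint \implication \astandpoint'$ across $W$ (using that $\astandpoint_i$-truth is column-constant), and $\Box_{\astandpoint}\aformulabis$ / $\Diamond_{\astandpoint}\aformulabis$, where the relativisation $\astandpoint \implication \atranslation_2(\aformulabis)$ (resp.\ $\astandpoint \wedge \atranslation_2(\aformulabis)$) exactly restricts the universal (resp.\ existential) quantifier over $W$ to the worlds encoding traces in $\lambda(\astandpoint)$; the universal-standpoint cases are immediate since $R = W \times W$ encodes the total trace set. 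Applying this at $\pair{0}{\altlmodel}$ yields $\amodel' \models \aformulater_n \wedge \atranslation_2(\aformula)$.

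\medskip
\noindent\textbf{Right-to-left.}
Conversely, given $\amodel' = \pair{\Nat \times W}{L}$ with $\amodel', \pair{0}{w} \models \aformulater_n \wedge \atranslation_2(\aformula)$ (recall we may assume $R = W \times W$ and the satisfying point is at time $0$), I would build an $\SLTL$ model by letting $\Pi$ consist of the traces $\altlmodel_{w'} \egdef L(\pair{0}{w'}) \cap \varprop,\, L(\pair{1}{w'}) \cap \varprop,\, \dots$ for $w' \in W$ (intersecting with $\varprop$ to discard the auxiliary symbols), and setting $\lambda(\astandpoint_i) \egdef \csetsc{\altlmodel_{w'}}{\amodel', \pair{0}{w'} \models \astandpoint_i}$ and $\lambda(\universalstandpoint) \egdef \Pi$ (extending $\lambda$ arbitrarily, say to $\Pi$, on standpoint symbols not occurring in $\aformula$). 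Here the two conjuncts of $\aformulater_n$ do the work: $\Diamond \astandpoint_i$ guarantees $\lambda(\astandpoint_i) \neq \emptyset$, so $\lambda$ is a legitimate $\SLTL$ standpoint map; and the propagated disjunction $\always \astandpoint_i \vee \always \neg \astandpoint_i$ guarantees that $\astandpoint_i$ is true at $\pair{0}{w'}$ iff it is true at $\pair{n}{w'}$ for every $n$, which is precisely what makes the definition of $\lambda(\astandpoint_i)$ time-independent and hence consistent with the inductive step. (A small care point: two distinct worlds $w', w''$ may yield the same trace; one should check that $\aformulater_n$ forces them to agree on every $\astandpoint_i$, so $\lambda(\astandpoint_i)$ is well defined as a set of traces — this follows again from column-constancy of the $\astandpoint_i$.) Then the same structural induction as above, read in the other direction, gives $\amodel, \altlmodel_w, 0 \models \aformula$.

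\medskip
\noindent\textbf{Main obstacle.} The routine parts (Boolean, temporal, universal-standpoint cases of the induction) are copied verbatim from Lemma~\ref{reducetoSLTL}. The one place that genuinely needs $\aformulater_n$ — and the step I expect to require the most care — is ensuring the \emph{time-invariance} of standpoint membership: an $\SLTL$ standpoint is a fixed set of traces, whereas a priori a $\PTLSfive$ valuation could make $\astandpoint_i$ flicker along a column, and the $\Box(\always \astandpoint_i \vee \always \neg \astandpoint_i)$ conjunct is exactly the device preventing this. Getting the quantifier phrasing right (the $\Box$ propagating the column-constancy across the $R = W \times W$ dimension, and the assumption that the satisfying world sits at time $0$) is where one must be precise; everything else is bookkeeping.
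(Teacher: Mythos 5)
Your proof is correct and follows essentially the same route as the paper's: the same model constructions in both directions (with $\lambda(\astandpoint_i)$ read off from the truth of $\astandpoint_i$ at time $0$, which is equivalent to the paper's condition $\always \astandpoint_i$ under $\aformulater_n$) and the same structural induction. The only quibble is your parenthetical claim that $\aformulater_n$ forces two worlds with identical propositional traces to agree on each $\astandpoint_i$ --- it does not --- but this is harmless, since $\lambda(\astandpoint_i)$ is well defined as a set comprehension and the translated standpoint modalities $\Diamond(\astandpoint_i \wedge \cdot)$, $\Box(\astandpoint_i \implication \cdot)$ are evaluated globally over $W$, so the induction is unaffected.
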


\begin{proof}[Proof sketch] 
Assume that $\aformula$ is $\SLTL$-satifiable, so
  there are an \SLTL model $\amodel = \pair{\Pi}{\lambda}$ and a trace $\altlmodel \in \Pi$
  such that
  $\amodel, \altlmodel,0 \models \aformula$.
  Let us build a $\PTLSfive$ model  $\amodel' = \pair{\Nat \times W}{L}$
  such that $W \egdef \Pi$ and $
    L( (n,\altlmodel') )  \egdef \altlmodel'(n)  \cup
    \set{\astandpoint  \mid \altlmodel' \in \lambda(\astandpoint)}
    $ for all $\pair{n}{\altlmodel'} \in \Nat \times W$. 
We  show that   $\amodel', \pair{0}{\sigma} \models \aformulater_n \wedge \atranslation_2(\aformula)$.
  First, we observe that $\amodel', \pair{0}{\altlmodel} \models  \aformulater_n$.
  This follows from the definition of $\aformulater_n$ and the fact that  for all $\altlmodel' \in \Pi$ and for all $j \geq 0$,
  we have $\astandpoint_i \in L( ( j,\altlmodel' ))$ iff
  $\altlmodel' \in \lambda(\astandpoint_i)$. 
  Second, by structural induction, we can show that for all subformulae
  $\aformulabis$ of $\aformula$,
   all $\altlmodel' \in \Pi$, and all $j \in \Nat$, we get
    $\amodel, \altlmodel', j  \models \aformulabis$  iff 
    $\amodel', \pair{j}{\altlmodel'} \models \atranslation_2(\aformulabis)$.   
    Hence $\amodel', \pair{0}{\sigma} \models \aformulater_n \wedge \atranslation_2(\aformula)$.
  
  Now, assume that there is a $\PTLSfive$ model
  $\amodel = \pair{\Nat \times W}{L}$ and $\pair{0}{w} \in \Nat \times W$ such that
  $\amodel, \pair{0}{w} \models \aformulater_n \wedge \atranslation_2(\aformula)$.
  Let us build an $\SLTL$ model $\amodel' = \pair{\Pi}{\lambda}$ and a trace $\altlmodel \in
  \Pi$ such that
  $\amodel, \altlmodel,0 \models \aformula$.
  As in the proof of Lemma~\ref{reducetoSLTL},  let $\Pi$ consist  of all traces
  \[
  \altlmodel_{w'} \egdef L((0,w')), L((1,w')), L((2,w')), L((3,w')), \ldots
  \]
  with $w' \in W$.
  This time, the definition of $\lambda$ exploits the assumption that
  $\amodel, \pair{0}{w} \models \aformulater_n$.
  In particular, to define $\lambda$  we let $\altlmodel_{w'} \in \lambda(\astandpoint_i)$
  $\equivdef$ $\amodel, \pair{0}{w'} \models  \always \astandpoint_i$,
  for all $i \in \interval{1}{n}$ and $w' \in W$. 
  By the definition of $\lambda$ and the form of $\aformulater_n$, for each
  $\astandpoint_i$  we have $\lambda(\astandpoint_i) \neq \emptyset$. This guarantees that
  $\amodel$ is an \SLTL model.
  By structural induction, we can show that for all subformulae $\aformulabis$ of $\aformula$ and
  for all $\pair{j}{w'} \in \Nat \times W$, we have
    $\amodel, \pair{j}{w'} \models \atranslation_2(\aformulabis)$   iff  
    $\amodel', \altlmodel_{w'}, j \models \aformulabis$.  
  Hence, $\amodel', \altlmodel_{w}, 0 \models \aformula$ and so
  $\aformula$ is satisfiable in $\SLTL$.
\end{proof}
\noindent As the construction of 
$\aformulater_n \wedge \atranslation_2(\aformula)$ is feasible in logarithmic-space, we obtain the following corollary.%%
\begin{corollary}\label{inEXPS}
$\SLTL$-satisfiability is in \expspace.
\end{corollary}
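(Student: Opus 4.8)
The plan is to derive the upper bound immediately from Lemma~\ref{reducefromSLTL} together with the \expspace{} upper bound for $\PTLSfive$ recalled in Section~\ref{section-preliminaries}. Given an input $\SLTL$ formula $\aformula$, let $\astandpoint_1,\ldots,\astandpoint_n$ be the standpoint symbols occurring in it; first I would compute the $\PTLSfive$ formula $\aformulater_n \wedge \atranslation_2(\aformula)$. The map $\atranslation_2$ is defined by a straightforward recursion on the structure of $\aformula$ --- it is the identity on propositional variables, homomorphic for the Boolean and temporal connectives, and given by the displayed clauses for $\Diamond_{\astandpoint}$, $\Box_{\astandpoint}$ and $\astandpoint \preceq \astandpoint'$ --- so $\atranslation_2(\aformula)$ has size linear in $\size{\aformula}$ and is produced by a deterministic transducer using only logarithmic work space (it suffices to scan the input while maintaining a pointer into it and a bounded amount of bookkeeping). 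Likewise $\aformulater_n$ is a conjunction of $O(n)$ subformulae, each of constant size in the $\astandpoint_i$ and with $n \leq \size{\aformula}$, hence also computable in logarithmic space. Thus $\aformula \mapsto \aformulater_n \wedge \atranslation_2(\aformula)$ is a logarithmic-space reduction, and by Lemma~\ref{reducefromSLTL} it reduces $\SLTL$-satisfiability to $\PTLSfive$-satisfiability.

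Next I would invoke the analysis of $\PTLSfive$ from Section~\ref{section-preliminaries}: the satisfiability problem for the variant of $\PTLSfive$ used here (with $\mynext$ and the non-strict $\until$) is in \expspace, since it reduces in logarithmic space --- via the renaming technique, introducing a fresh propositional variable per subformula --- to the strict-$\until$ fragment of~\citet{Gabbayetal03}, for which an \expspace{} decision procedure is available~\citep[Theorem 6.65]{Gabbayetal03}. Running this procedure on $\aformulater_n \wedge \atranslation_2(\aformula)$ therefore decides whether $\aformula$ is $\SLTL$-satisfiable.

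It only remains to check that composing these steps keeps us within \expspace, which is routine: $\aformulater_n \wedge \atranslation_2(\aformula)$ has size polynomial (in fact linear) in $\size{\aformula}$, so the subsequent procedure runs in space $2^{p(\size{\aformula})}$ for some polynomial $p$; more abstractly, \expspace{} is closed under logarithmic-space (indeed polynomial-time) many-one reductions, and we have exhibited such a reduction from $\SLTL$-satisfiability to $\PTLSfive$-satisfiability. Hence $\SLTL$-satisfiability is in \expspace. I do not expect any genuine obstacle here: all the substantive work lies in Lemma~\ref{reducefromSLTL} and in the preliminary treatment of $\PTLSfive$, and the only point deserving a moment of care is verifying that $\atranslation_2$ and $\aformulater_n$ are indeed logarithmic-space computable and of polynomial size, which is clear from their purely syntactic, structurally recursive definitions. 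Combined with Corollary~\ref{EXPShard}, this gives the \expspace-completeness of $\SLTL$-satisfiability.
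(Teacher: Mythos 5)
Your proposal is correct and follows exactly the paper's route: Lemma~\ref{reducefromSLTL} gives a logarithmic-space reduction from $\SLTL$-satisfiability to $\PTLSfive$-satisfiability via $\aformulater_n \wedge \atranslation_2(\aformula)$, and the \expspace{} upper bound for the relevant variant of $\PTLSfive$ (via the renaming technique and~\citep[Theorem 6.65]{Gabbayetal03}) together with closure of \expspace{} under such reductions yields the corollary. The paper states this in one sentence; your added checks that $\atranslation_2$ and $\aformulater_n$ are of linear size and logspace-computable are the right (and only) details to verify.
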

\noindent Together with Corollary~\ref{EXPShard} we obtain tight complexity bounds for $\SLTL$, which is the main result of this section:
\begin{theorem} \label{theorem-expspace-completeness}
$\SLTL$-satisfiability is  \expspace-complete.
\end{theorem}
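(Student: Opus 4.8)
The plan is to obtain the theorem by matching the lower and upper bounds supplied by the two preceding corollaries. For the \expspace lower bound I would invoke Corollary~\ref{EXPShard}: Lemma~\ref{reducetoSLTL} shows that the syntactic map $\atranslation_1$ (replace $\Diamond$ by $\Diamond_{\universalstandpoint}$ and $\Box$ by $\Box_{\universalstandpoint}$, leave everything else untouched) is a correctness-preserving reduction from $\PTLSfive$-satisfiability to $\SLTL$-satisfiability; it is plainly computable in logarithmic space; and $\PTLSfive$-satisfiability is \expspace-hard by~\citet[Theorem 5.43]{Gabbayetal03}, which transfers to our variant of $\PTLSfive$ through the encoding $\aformula \until_{<} \aformulabis$ by $\mynext(\aformula \until \aformulabis)$ recalled in Section~\ref{section-preliminaries}. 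For \expspace-membership I would invoke Corollary~\ref{inEXPS}: Lemma~\ref{reducefromSLTL} shows that $\aformula \mapsto \aformulater_n \wedge \atranslation_2(\aformula)$ is a correctness-preserving, logarithmic-space reduction from $\SLTL$-satisfiability to $\PTLSfive$-satisfiability, and $\PTLSfive$-satisfiability is in \expspace by~\citet[Theorem 6.65]{Gabbayetal03} together with the renaming technique recalled in Section~\ref{section-preliminaries}. Since \expspace is closed under logarithmic-space reductions, $\SLTL$-satisfiability lies in \expspace; combined with the hardness bound this gives \expspace-completeness.

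The only genuinely delicate point is already discharged inside Lemma~\ref{reducefromSLTL}, and it is the direction $\SLTL \to \PTLSfive$: standpoint symbols have no primitive counterpart in $\PTLSfive$, so they are simulated by fresh propositional variables $\astandpoint_1,\dots,\astandpoint_n$, and one must additionally enforce the two structural invariants of $\SLTL$ models — every standpoint is interpreted by a \emph{non-empty} set of traces, and a trace's membership in a standpoint is \emph{time-independent} — neither of which an arbitrary $\PTLSfive$ model respects. The conjunct $\aformulater_n = \bigwedge_i \Diamond \astandpoint_i \wedge \Box \bigwedge_i (\always \astandpoint_i \vee \always \neg \astandpoint_i)$ does exactly this, and the back translation sets $\altlmodel_{w'} \in \lambda(\astandpoint_i)$ iff $\amodel,(0,w') \models \always \astandpoint_i$, after which the structural induction — whose $\Box_{\astandpoint}$, $\Diamond_{\astandpoint}$, and $\astandpoint \preceq \astandpoint'$ cases are the interesting ones, the Boolean and temporal cases being immediate from homomorphicity of $\atranslation_2$ — goes through unchanged. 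So at the level of the theorem there is no new obstacle; the work sits in the two lemmas, and the statement follows by assembling them.

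Finally, it is worth stressing why the statement is noteworthy rather than routine: the upper bound collides with the \pspace membership claimed in~\citet[Theorem 28]{Gigante&GomezAlvarez&Lyon23}, so before combining the corollaries one should be confident that the hardness reduction of Lemma~\ref{reducetoSLTL} is airtight — and indeed it uses only $\Box_{\universalstandpoint}$, $\mynext$, and $\until$ and no $\preceq$ atoms, hence applies even to the more restricted syntax of~\citet{Gigante&GomezAlvarez&Lyon23}, while Corollary~\ref{inEXPS} for our slightly richer language a fortiori bounds the weaker one. An alternative route would bypass $\PTLSfive$ and prove both bounds directly for $\SLTL$ (an exponential-model / exponential-width-automaton argument for membership, a direct tiling reduction for hardness), but routing through $\PTLSfive$ is shorter and lets us quote its complexity off the shelf.
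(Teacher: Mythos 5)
Your proposal is correct and follows exactly the paper's own route: the theorem is obtained by combining Corollary~\ref{EXPShard} (hardness via the logarithmic-space translation $\atranslation_1$ from $\PTLSfive$) with Corollary~\ref{inEXPS} (membership via the logarithmic-space translation $\aformula \mapsto \aformulater_n \wedge \atranslation_2(\aformula)$ into $\PTLSfive$). Your identification of the delicate point — enforcing non-emptiness and time-independence of standpoint interpretations through $\aformulater_n$ — matches the paper's treatment in Lemma~\ref{reducefromSLTL}.
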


Logarithmic-space reductions between $\SLTL$ and $\PTLSfive$ emphasize how close are these formalisms, a property that
remained unnoticed so far. This allows us to establish the \expspace-completeness of $\SLTL$-satisfiability in a transparent way.
Therefore, the analysis of the properties of the tableau-style calculus for $\SLTL$ designed
by~\citet{Gigante&GomezAlvarez&Lyon23} needs, in the best case, lead to the \expspace upper bound.

\section{\pspace  Fragment of $\SLTL$}
\label{section-pspace}

Corollary~\ref{EXPShard} can be viewed as a negative
result for the usability of $\SLTL$ (it is of course  positive in terms of knowledge
about $\SLTL$ properties), but as shown below, there is some room to find interesting
fragments that include both  $\LTL$ and $\PSL$, but can be decided in polynomial-space.
Before presenting a fragment of  $\SLTL$ in which satisfiability is 
\pspace-complete, however, we provide more intuitions behind \expspace-hardness of full \SLTL.
This is helpful to discard syntactic features that lead to high complexity.

\subsection{Computationally challenging behaviour of \SLTL}
\label{section-example-expspace}

To obtain a better understanding of the \expspace-hardness of full \SLTL and of the mismatch between our result  
in Section~\ref{section-expspace-hardness} and the \pspace bound stated by~\citet[Theorem 28]{Gigante&GomezAlvarez&Lyon23}, we  present a specific \SLTL formula $\aformula_{C}$
whose satisfiability is particularly hard to decide.
This will also prove useful for constructing \pspace fragments of \SLTL, as they should disallow features
expressing the computationally challenging behaviour of $\aformula_{C}$.

Our construction of $\aformula_{C}$  exploits the following $\LTL$ formula $C_n$, encoding a binary counter 
from 0 to $2^n-1$, where the propositional variables $\avarprop_1, \dots, \avarprop_n$ correspond to consecutive bits of the
counter with $p_1$ being the most significant bit:
\begin{multline*}
(\neg \avarprop_1 \wedge \cdots \wedge \neg \avarprop_n) \wedge
 \always(\avarprop_1 \wedge \cdots \wedge \avarprop_n \implication
\mynext(\neg \avarprop_1 \wedge \cdots \wedge \neg \avarprop_n)) 
\\
\wedge \bigwedge_{1 \leq i \leq n}
\always
\Big(
(\neg \avarprop_i \wedge \bigwedge_{i < i' \leq n} \avarprop_{i'})
\implication
\big(
\bigwedge_{i < i' \leq n} ( \mynext  \neg \avarprop_{i'} )
\wedge
\mynext \avarprop_i
\\
\wedge
\bigwedge_{1 \leq i' < i} ( \avarprop_{i'} \leftrightarrow \mynext \avarprop_{i'})
\big)
\Big).
\end{multline*}
In every position $i$ of an \LTL trace $\sigma$, the propositional variables $\avarprop_1, \dots, \avarprop_n$ encode
the bits $b_1, \dots, b_n$ (with $b_1$ being the most significant bit) such that $b_j = 1$ iff $\avarprop_j \in \sigma(i)$.
We write $\sigma,i \models \mathtt{C}=m$ if the counter has  value $m$ in the position $i$, that is $b_1 \dots b_n$
represents the number $m$.
It is easy to see that if $\sigma,i \models C_n$, then
the counter has value $0$ in the position $i$ and in every next  position this value increases by 1 until the counter reaches the value $2^n-1$.
If this is the case, in the next position the value is $0$ and the process of counting (modulo $2^n$) repeats.
Hence,
$\altlmodel, i  \models \mathtt{C}=0$ and
for any $j\geq i$, we have 
$\altlmodel, j \models \mathtt{C}=m'$ where $m' \equiv j-i \; (\mathsf{mod} \; {2^n} )$.

We use $C_n$ to define 
$\aformula_{C} \egdef
\always
\big(
\Diamond_{\astandpoint} (C_n \wedge \avarprop \wedge \mynext \always \neg \avarprop)
\big)$ (of polynomial-size in $n$).
As we show next,
if $\amodel, \altlmodel, 0 \models \aformula_{C}$ for some \SLTL model $\amodel = \pair{\Pi}{\lambda}$, then $\Pi$ must contain infinitely many traces and
for every position $i >2^n$
there are exponentially many (at least  $2^n$) traces in 
$\Pi$ which are pairwise different with respect to the valuation of $p_1, \dots, p_n$ at the position $i$.
The latter, in particular,  
seems to contradict
the first paragraph of the proof of~\cite[Lemma 27]{Gigante&GomezAlvarez&Lyon23} used to argue for \pspace satisfiability of \SLTL.

\begin{proposition}\label{proposition-expspace-examples}
  The formula $\aformula_{C}$ is $\SLTL$ satisfiable. Moreover, for each \SLTL model
  $\amodel = \pair{\Pi}{\lambda}$ and $\altlmodel \in \Pi$ such that
  $\amodel, \altlmodel, 0 \models \aformula_{C}$, the following hold:
  (1) $\lambda(\astandpoint)$ is infinite,
  (2) for each  $i > 2^n$ and  $ m \in \{0, \dots ,2^n-1\}$, there is $\sigma \in \Pi$ such that 
      $
    \amodel, \sigma, i \models \mathtt{C}=m
    $, and
  (3) for all positions $i > 2^n$, there are $\altlmodel_1$, \ldots,
    $\altlmodel_{2^n}$ in $\Pi$ such that
    $
    \card{
    \set{\altlmodel_j(i) \cap \set{\avarprop_1, \ldots, \avarprop_n} \mid j \in \{1, \dots, 2^n\} }
    } = 2^n.
    $
\end{proposition}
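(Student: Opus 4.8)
The plan is to prove the satisfiability claim by an explicit construction, and then to derive properties~(1)--(3) directly from the semantics of $\aformula_{C}$, observing that (3) is an immediate consequence of (2), and (2) is obtained by instantiating the $\always$-guarded $\Diamond_{\astandpoint}$ at a suitably shifted position.

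For satisfiability, I would build a model $\amodel = \pair{\Pi}{\lambda}$ whose traces are indexed by the position at which the counter is ``reset''. For each $k \in \Nat$, let $\altlmodel_k$ be the trace that at every position $j \geq k$ carries the $n$-bit encoding of $j - k \bmod 2^n$ over $\avarprop_1, \dots, \avarprop_n$, and additionally carries $\avarprop$ exactly when $j = k$; on positions $j < k$, put $\altlmodel_k(j) \egdef \emptyset$. Take $\Pi \egdef \set{\altlmodel_k \mid k \in \Nat}$ and $\lambda(\astandpoint') \egdef \Pi$ for every standpoint symbol $\astandpoint'$ (in particular for $\astandpoint$ and $\universalstandpoint$). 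A routine check of the binary-counter encoding shows $\amodel, \altlmodel_k, k \models C_n \wedge \avarprop \wedge \mynext \always \neg \avarprop$, so for every $\sigma \in \Pi$ and every position $i$ the trace $\altlmodel_i \in \lambda(\astandpoint)$ witnesses $\amodel, \sigma, i \models \Diamond_{\astandpoint}(C_n \wedge \avarprop \wedge \mynext \always \neg \avarprop)$; hence $\amodel, \altlmodel_0, 0 \models \aformula_{C}$.

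For the structural part, fix any $\amodel = \pair{\Pi}{\lambda}$ and $\altlmodel \in \Pi$ with $\amodel, \altlmodel, 0 \models \aformula_{C}$. Since $\aformula_{C}$ starts with $\always$, for every position $j$ there is a trace $\tau_j \in \lambda(\astandpoint) \subseteq \Pi$ with $\amodel, \tau_j, j \models C_n \wedge \avarprop \wedge \mynext \always \neg \avarprop$. For~(1), the conjunct $\avarprop \wedge \mynext \always \neg \avarprop$ forces $\avarprop \in \tau_j(j)$ and $\avarprop \notin \tau_j(j')$ for all $j' > j$, so $j \mapsto \tau_j$ is injective (were $\tau_{j_1} = \tau_{j_2}$ with $j_1 < j_2$, then $\avarprop$ would be both in and not in $\tau_{j_1}(j_2)$), whence $\lambda(\astandpoint)$ is infinite. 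For~(2), given $i > 2^n$ and $m \in \set{0, \dots, 2^n - 1}$, apply the observation at position $j \egdef i - m$, which is positive since $i > 2^n > m$; from $\amodel, \tau_{i - m}, i - m \models C_n$ and the stated behaviour of $C_n$ we get $\amodel, \tau_{i - m}, i \models \mathtt{C} = m'$ for $m' \equiv m \pmod{2^n}$, and $m' = m$ because $0 \leq m < 2^n$. For~(3), using~(2) pick for each $m \in \set{0, \dots, 2^n - 1}$ some $\sigma_m \in \Pi$ with $\amodel, \sigma_m, i \models \mathtt{C} = m$; then $\sigma_m(i) \cap \set{\avarprop_1, \dots, \avarprop_n}$ is exactly the $n$-bit encoding of $m$, so these $2^n$ traces realise pairwise distinct valuations of $\avarprop_1, \dots, \avarprop_n$ at position $i$, giving the claim after renaming.

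I expect no genuine obstacle. The only mildly delicate steps are the bookkeeping inside the satisfiability construction (checking that $\altlmodel_k$ really satisfies $C_n$ at position $k$, i.e.\ the standard ``binary counter in $\LTL$'' verification) and the modular-arithmetic shift in~(2). The proposition is essentially a careful reading of what $\aformula_{C}$ asserts, and the main effort is organising the argument so that~(3) reduces to~(2) and~(2) reduces to applying the outer $\always$ at the shifted position $i - m$.
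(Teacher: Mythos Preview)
Your proposal is correct and follows essentially the same approach as the paper: an explicit family of traces $\altlmodel_k$ that start the counter at position $k$ for satisfiability, injectivity of the witness map $j \mapsto \tau_j$ via the $\avarprop$-marker for~(1), the modular-arithmetic behaviour of $C_n$ for~(2), and~(3) as an immediate corollary of~(2). The only cosmetic difference is that for~(2) the paper fixes the block of witnesses $\tau_1,\ldots,\tau_{2^n}$ and argues their counter values at position $i$ are pairwise distinct (hence exhaust $\{0,\ldots,2^n-1\}$), whereas you pick the witness $\tau_{i-m}$ directly for each target $m$; both arguments are equally straightforward.
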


    \begin{proof}
      Assume that $\amodel, \altlmodel, 0 \models \aformula_{C}$. 
      Firstly, let us show that $\lambda(\astandpoint)$ is necessarily infinite.
      Since $\amodel, \altlmodel, 0 \models \always(\Diamond_{\mathtt{s}} (C_n \wedge \avarprop
      \wedge \mynext \always \neg \avarprop))$, for each $i \geq 0$,
      there is a witness trace $\altlmodel^{\dag}_i \in \lambda(\astandpoint)$
      such that $\amodel, \altlmodel^{\dag}_i, i \models C_n \wedge \avarprop \wedge \mynext \always \neg \avarprop$.
      Hence, for all $i' > i$, we have
      $\amodel, \altlmodel^{\dag}_i, i' \models \neg \avarprop$ and
      $\amodel, \altlmodel^{\dag}_{i'}, i' \models \avarprop$, and so, $\altlmodel^{\dag}_i$ is distinct from $\altlmodel^{\dag}_{i'}$.
      Consequently, $\set{\altlmodel^{\dag}_0,\altlmodel^{\dag}_1,\ldots}$ is an infinite set of distinct traces, all belonging to  $\lambda(\astandpoint)$.

     To show the second item in the proposition,  let $i > 2^n$ and let $\altlmodel^{\dag}_1,\ldots,\altlmodel^{\dag}_{2^n}$ be $2^n$ witness traces
     as defined above.
     So, for all $j \in \{1, \dots, 2^n\}$, we have $\amodel, \altlmodel^{\dag}_j, j \models C_n$. 
     Assuming that $\amodel, \altlmodel^{\dag}_1, i \models \mathtt{C=}m_1$ for some
     $m_1 \in \{0, \dots,2^n-1\}$,
     for all $j \in \{1,\dots, 2^n\}$ we have $\amodel, \altlmodel^{\dag}_j, i \models \mathtt{C=}m_j$
     with $m_j \equiv m_1 + (j-1) \; (\mathsf{mod} \; {2^n} )$.
     Since $m_1, \ldots, m_{2^n}$ are pairwise distinct,  we obtain also that
     $\card{
    \set{\altlmodel_j^{\dag}(i) \cap \set{\avarprop_1, \ldots, \avarprop_n} \mid j \in \{1, \dots, 2^n\}}
     } = 2^n
     $, and so, the third item from the proposition holds.

     Let us conclude by showing that $\aformula_{C}$ is satisfiable, as stated in the proposition.
     For this we can construct a model $\amodel = \pair{\Pi}{\lambda}$ with infinitely many traces $\Pi \egdef \set{\altlmodel_0, \altlmodel_1, \ldots}$ and with
     $\lambda(\astandpoint) \egdef \Pi$.
     Each trace $\sigma_j \in \Pi$ is such that  $\amodel, \altlmodel_j, j \models
     C_n \wedge \avarprop \wedge \mynext \always \neg \avarprop$. Note that such $\sigma_j$ needs to exist and has a uniquely
     determined valuation of $\avarprop,\avarprop_1, \dots, \avarprop_n$ in all positions not smaller than $j$, whereas their
     valuation in positions smaller than $j$ is irrelevant for our construction. 
     Then, by the form of $\aformula_{C}$, we get  $\amodel, \altlmodel, 0 \models \aformula_{C}$.
\end{proof}

\noindent
Proposition~\ref{proposition-expspace-examples} specifies the types of computationally hard formulae of \SLTL.
In forthcoming Section~\ref{section-no-ltl-in-boxs}, 
we introduce a way of restricting \SLTL, in which the behaviour of $\aformula_{C}$ cannot be reconstructed,
and which yields a \pspace-complete satisfiability problem.%%

\subsection{Limiting the interplay between connectives}
\label{section-no-ltl-in-boxs}%%

In this section we show that a  \pspace upper bound for the satisfiability problem in \SLTL can be obtained by limiting the  interplay between standpoint and temporal operators.
To this end, we focus on \SLTL formulae which do not allow for \LTL connectives occurring in the scope of standpoint modal operators; in the sequel we call this fragment $\LTLPSL$. 
By way of example,
the  formula $  \Box_{\universalstandpoint} (   \always \neg \mathit{Malf}  \to \mathit{Test} )$ from Section~\ref{section-introduction-sltl}
does not belong to $\LTLPSL$.
However the following is already an $\LTLPSL$ formula:
$   \always (\Box_{\universalstandpoint}  \neg  \mathit{Malf} )  \to \Box_{\universalstandpoint}  \mathit{Test} $, which 
states that if always in future all countries
agree that  a
medical device does not malfunction, then all these countries agree that this device is safe according to testing.
It is worth observing that 
the formula $\aformula_{C}$ from Section~\ref{section-example-expspace}, illustrating computationally challenging behaviour,
does not belong to $\LTLPSL$ (because of $\mynext \always$ in the scope of $\Diamond_{\astandpoint}$).
In contrast, any formula of the form
$\always \Box_{\universalstandpoint} \aformula_1
\wedge \always (\aformula_2 \rightarrow \mynext \aformula_3)$
where $\aformula_1, \aformula_2, \aformula_3$ are \PSL formulae (see Section~\ref{section-introduction-psl})
are in $\LTLPSL$.

As we show in the remaining part of this section, we can check satisfiability of $\LTLPSL$ formulae in \pspace by adapting the
B\"uchi automaton construction developed
for \LTL~\cite{baier2008principles,demri2016temporal} and by performing  model-theoretical constructions based
on a refinement of the finite model property for $\PSL$ (see Theorem~\ref{theorem-psl}). 

Let us fix an $\LTLPSL$ formula $\aformula$ and let $\varprop(\aformula)$ be the set of propositional variables
occurring in it. 
Let $I$ (for `inequality') be the set of  subformulae of $\aformula$ of the form $\astandpoint
\preceq \astandpoint'$.
To check satisfiability of $\aformula$, our procedure considers all possible
partitions $D = (I^+,I^-)$ of $I$ into sets $I^+$ and $I^-$; intuitively  such a partition  states that
formulae in $I^+$ are true and those in $I^-$ are false. Since $\astandpoint \preceq \astandpoint'$
is a global statement (either it is satisfied in all traces and all positions of a model, or it is not satisfied in any of them), its truth value can be non-deterministically guessed from the very beginning. Moreover, such a guess is feasible in \pspace since  \npspace = \pspace~\cite{Savitch70}.
Given $D=(I^+,I^-)$, we encode the above intuition with  the  formula
\[
\aformulater_D \egdef
\always \left( \bigwedge_{(\astandpoint \preceq \astandpoint') \in I^+} ( \astandpoint \preceq \astandpoint') 
\land 
\bigwedge_{(\astandpoint \preceq \astandpoint') \in I^-}  ( \Diamond_{\astandpoint}  \ \avarprop_{\astandpoint,\astandpoint'}
\wedge \neg \Diamond_{\astandpoint'}  \ \avarprop_{\astandpoint, \astandpoint'}) \right) 
\]
where each $\avarprop_{\astandpoint, \astandpoint'}$ is a fresh propositional variable. This can be done as $\varprop$ is countably infinite.
Note that  $\aformulater_D$ encodes that each $\mathtt{s} \preceq \mathtt{s'}$ in $I^+$ holds true always in future and that each $\mathtt{s} \preceq \mathtt{s'}$ in $I^-$ is always false.
We also define
$
\aformula_D \egdef
\aformula[I^+ \mapsto \top,I^- \mapsto \bot]
\wedge \aformulater_D
$,
where $\aformula[I^+ \mapsto \top,I^- \mapsto \bot]$ is the formula obtained from $\aformula$ by replacing each
$(\astandpoint \preceq \astandpoint') \in I^+$ with $\top$ and each
$(\astandpoint \preceq \astandpoint') \in I^-$ with $\bot$. Note that
$\aformulater_D$ is an $\LTLPSL$ formula.

The purpose of the construction of $\aformula_D$, leading to Lemma~\ref{avormulad}, is to
nondeterministically choose which inequalities $\astandpoint \preceq \astandpoint'$ hold, and in particular
to enforce the non-satisfaction of some $\astandpoint \preceq \astandpoint'$  by the existence of two traces,
see the formula $\aformulater_{D}$. Notice that $\astandpoint \preceq \astandpoint'$ occurs only positively in $\aformulater_{D}$.
In forthcoming Theorem~\ref{theorem-psl}, we show how to construct \PSL models based on $\aformula_1$, defined as a
conjunction of inequalities of the form $\astandpoint \preceq \astandpoint'$. This step is therefore instrumental
to handle the inequalities $\astandpoint \preceq \astandpoint'$. 
\begin{lemma}\label{avormulad}
Satisfiability of $\aformula$ reduces to checking if there exists a partition  $D=(I^+,I^-)$ of $I$  such that  $\aformula_D$ is satisfiable.
\end{lemma}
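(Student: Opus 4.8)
The plan is to show the two directions of the reduction separately, using the observation that the inequality atoms $\astandpoint \preceq \astandpoint'$ have a \emph{global} character in any $\SLTL$ model: either $\lambda(\astandpoint) \subseteq \lambda(\astandpoint')$ holds, in which case $\amodel,\altlmodel,i \models \astandpoint \preceq \astandpoint'$ at every trace $\altlmodel$ and every position $i$, or it fails everywhere. Hence for any model $\amodel$, the set $I$ of inequality subformulae of $\aformula$ splits canonically into those satisfied by $\amodel$ (call it $I^+$) and those not satisfied (call it $I^-$); this yields a partition $D_\amodel = (I^+,I^-)$.

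First I would prove the forward direction. Suppose $\amodel = \pair{\Pi}{\lambda}$ and $\altlmodel \in \Pi$ are such that $\amodel,\altlmodel,0 \models \aformula$. Take $D = D_\amodel$ as above. I claim $\amodel,\altlmodel,0 \models \aformula_D$. For the conjunct $\aformula[I^+ \mapsto \top, I^- \mapsto \bot]$: by a routine induction on the structure of $\aformula$, replacing each $(\astandpoint \preceq \astandpoint') \in I^+$ by $\top$ and each one in $I^-$ by $\bot$ preserves truth in $\amodel$ at $(\altlmodel,0)$, precisely because each such atom has the truth value indicated by $D_\amodel$ \emph{at every point}, so the substitution is truth-preserving under the compositional semantics. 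For the conjunct $\aformulater_D$: each $(\astandpoint \preceq \astandpoint') \in I^+$ holds globally, so $\always(\astandpoint \preceq \astandpoint')$ holds at $(\altlmodel,0)$. For each $(\astandpoint \preceq \astandpoint') \in I^-$ we have $\lambda(\astandpoint) \not\subseteq \lambda(\astandpoint')$, so pick a witness trace $\altlmodel_{\astandpoint,\astandpoint'} \in \lambda(\astandpoint) \setminus \lambda(\astandpoint')$; since $\avarprop_{\astandpoint,\astandpoint'}$ is fresh, modify the traces of $\Pi$ so that $\avarprop_{\astandpoint,\astandpoint'}$ holds exactly on $\altlmodel_{\astandpoint,\astandpoint'}$ at all positions (doing this for all pairs in $I^-$ simultaneously, using distinct fresh variables), which changes nothing about the satisfaction of $\aformula[I^+ \mapsto \top, I^- \mapsto \bot]$ but makes $\Diamond_{\astandpoint}\avarprop_{\astandpoint,\astandpoint'} \wedge \neg\Diamond_{\astandpoint'}\avarprop_{\astandpoint,\astandpoint'}$ true globally. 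So $\aformulater_D$ holds, and $\aformula_D$ is satisfiable.

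For the converse, suppose $\aformula_D$ is satisfiable for some partition $D = (I^+,I^-)$, witnessed by $\amodel,\altlmodel,0 \models \aformula_D$. The conjunct $\aformulater_D$ forces, globally, that every $(\astandpoint \preceq \astandpoint') \in I^+$ is true (so $\lambda(\astandpoint) \subseteq \lambda(\astandpoint')$) and that every $(\astandpoint \preceq \astandpoint') \in I^-$ is false (the subformula $\Diamond_{\astandpoint}\avarprop_{\astandpoint,\astandpoint'} \wedge \neg \Diamond_{\astandpoint'}\avarprop_{\astandpoint,\astandpoint'}$ guarantees a trace in $\lambda(\astandpoint)\setminus\lambda(\astandpoint')$, so $\lambda(\astandpoint) \not\subseteq \lambda(\astandpoint')$). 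Consequently in the \emph{same} model $\amodel$, at the \emph{same} point $(\altlmodel,0)$, the atom $\astandpoint \preceq \astandpoint'$ has the truth value $\top$ for $I^+$-members and $\bot$ for $I^-$-members, matching exactly the substitution; a routine compositional induction then gives $\amodel,\altlmodel,0 \models \aformula$, so $\aformula$ is $\SLTL$-satisfiable. Finally, since $I$ has size at most $|\aformula|$, there are finitely many partitions $D$, and the reduction is as claimed (a non-deterministic guess of $D$ costing only polynomial space).

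The main obstacle is keeping the two compositional inductions honest: one must check that substituting $\top$/$\bot$ for the inequality atoms is truth-preserving, which relies entirely on the global (position- and trace-independent) semantics of $\preceq$, and one must verify that introducing fresh propositional variables $\avarprop_{\astandpoint,\astandpoint'}$ to witness the failed inequalities does not disturb the satisfaction of the rest of $\aformula$ — this is immediate from freshness but should be stated carefully, especially since several such witnesses are added at once. Everything else is bookkeeping.
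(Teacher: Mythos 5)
Your proof is correct and follows essentially the same route as the paper's: both exploit the global (trace- and position-independent) semantics of $\astandpoint \preceq \astandpoint'$ to justify the $\top/\bot$ substitution, and both use the fresh variables $\avarprop_{\astandpoint,\astandpoint'}$ with $\Diamond_{\astandpoint}$-witnesses to encode the negated inequalities; the paper merely packages the two directions as a single chain of equivalences. Your explicit remarks on why the substitution is truth-preserving and why adding the fresh witness variables does not disturb the rest of the formula are points the paper leaves implicit, and they are handled correctly.
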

\noindent Therefore, in what follows we will consider  an arbitrary partition $D=(I^+,I^-)$ of $I$, and  focus on checking satisfiability of $\aformula_D$.

We write $\cl(\aformula_D)$ to denote the \defstyle{closure set} of $\aformula_D$ defined as 
the smallest  set containing all the subformulae $\aformulabis$ of $\aformula_D$ as well as
$\perp$ and $\top$,
which is 
closed under taking negations (as usually, we do not allow for
double negations by
identifying $\neg \neg \aformulabis$ with $\aformulabis$)
and such that $\aformulabis \until \aformulabis' \in  \cl(\aformula_D)$ implies  $\mynext (\aformulabis \until \aformulabis') \in \cl(\aformula_D)$. We can observe that
$\card{\cl(\aformula_D)} \leq 4 \cdot \size{\aformula_D}$
and the number of formulae in the set $\cl(\aformula_D)$, whose outermost connective is a modality of the form $\Diamond_{\astandpoint}$ or
$\Box_{\astandpoint}$, is bounded by $\size{\aformula_D}$.

As in the standard automaton construction for  \LTL~\cite{baier2008principles,demri2016temporal,Esparza&Blondin23}, we call a
set $B \subseteq \cl(\aformula)$
\defstyle{maximally consistent} if it satisfies the   properties:
\begin{itemize}
\itemsep 0 cm 
  \item $\top \in B$ and $\perp \not \in B$; 
    %% \item
    $\aformulabis \in B$ iff $\neg \aformulabis \notin B$, for all  $\neg \aformulabis \in \cl(\aformula_D)$,
  \item 
    $\aformulabis_1 \land \aformulabis_2 \in B$ iff $\aformulabis_1 \in B$ and  $\aformulabis_2 \in B$,
    for all $\aformulabis_1 \land \aformulabis_2 \in \cl(\aformula_D)$,
  \item     $\aformulabis_1 \until  \aformulabis_2 \in B$ iff
    $\aformulabis_2 \in B$ or $\{ \aformulabis_1, \mynext (\aformulabis_1 \until  \aformulabis_2) \} \subseteq B$, for all $\aformulabis_1 \until \aformulabis_2 \in \cl(\aformula_D)$.
  \end{itemize} 
Moreover, we introduce an additional property specific to \SLTL; we say that
$B$ is \defstyle{standpoint-consistent} if the $\PSL$ formula $\bigwedge_{\aformulabis \in B \cap \PSL} \aformulabis$ (we treat here $\PSL$ as the set of all well-formed $\PSL$ formulae)
is satisfiable in standpoint logic \PSL. 
It is worth noting that
checking standpoint-consistency is decidable and, in particular, \np-complete~\cite[Corollary 3]{alvarez2022standpoint} as we have
discussed in  Section~\ref{section-introduction-sltl} (see also the forthcoming Theorem~\ref{theorem-psl}).

We call $B \subseteq \cl(\aformula_D)$
\defstyle{s-elementary} if it is maximally consistent and  standpoint-consistent.
We will use \defstyle{s-elementary} sets as states of an automaton. 

We are ready now to  define an automaton $\aautomaton_{\aformula_D}$ which we  use for checking satisfiability of $\aformula_D$.
We let $\aautomaton_{\aformula_D}$
be a generalised nondeterministic B\"uchi automaton 
$
\aautomaton_{\aformula_D} =
( \autstates, \powerset{\varprop(\aformula_D)}, \delta, \autstates_0 , \acccon),
$
whose components are as follows:
\begin{itemize}
\itemsep 0 cm 
\item  the set  of states, $\autstates$, consists of all s-elementary  sets $B \subseteq \cl(\aformula_D)$,

\item the automaton alphabet is the powerset $\powerset{\varprop(\aformula_D)}$, 

\item the set of initial states is $\autstates_0 = \set{ B \in \autstates \mid \aformula_D \in B}$,

\item the set of accepting sets, $\acccon$, is the family containing for each $\aformulabis_1 \until \aformulabis_2 \in \cl(\aformula_D)$ the set 
$
\{ B \in \autstates \mid \aformulabis_1 \until \aformulabis_2 \notin B \text{ or } \aformulabis_2 \in B \},
$

\item the transition
  %% function
  relation $\delta: \autstates \times \powerset{\varprop(\aformula_D)}
  \longrightarrow \powerset{\autstates}$ is such that $\delta(B,A) = \emptyset$ if $A \neq B \cap \varprop(\aformula_D)$,
  and otherwise $\delta(B,A)$ is the set of all s-elementary $B' \subseteq \cl(\aformula_D)$ which satisfy
 the following condition: for each $\mynext \aformulabis \in \cl(\aformula_D)$, 
we have $\mynext \aformulabis \in B$ iff $\aformulabis \in B'$. 
\end{itemize}
In what follows we show that
accepting runs of $\aautomaton_{\aformula_D}$ correspond to \SLTL models of $\aformula_D$, and so,  satisfiability of $\aformula_D$ is equivalent to nonemptiness of $\aautomaton_{\aformula_D}$ language. 
In the next lemma we show the first implication of this equivalence.

\begin{lemma}\label{lemma-automaton}
If $\aformula_D$ is $\SLTL$-satisfiable,  then the language of $\aautomaton_{\aformula_D}$ is non-empty.
\end{lemma}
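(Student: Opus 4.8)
The plan is to take an \SLTL model of $\aformula_D$ and extract from it an accepting run of $\aautomaton_{\aformula_D}$, following the standard \LTL recipe but with attention to the standpoint-consistency requirement on states. So suppose $\amodel = \pair{\Pi}{\lambda}$ and $\altlmodel \in \Pi$ satisfy $\amodel, \altlmodel, 0 \models \aformula_D$. For each position $i \in \Nat$ I would define the set
\[
B_i \egdef \csetc{\aformulabis}{\cl(\aformula_D)}{\amodel, \altlmodel, i \models \aformulabis}.
\]
The first task is to check that each $B_i$ is \textbf{s-elementary}, i.e. both maximally consistent and standpoint-consistent. Maximal consistency is immediate from the semantic clauses for $\neg$, $\wedge$, and $\until$ (the last using the fixpoint identity $\aformulabis_1 \until \aformulabis_2 \equiv \aformulabis_2 \vee (\aformulabis_1 \wedge \mynext(\aformulabis_1 \until \aformulabis_2))$, which is why $\cl$ was closed under adding $\mynext(\aformulabis_1 \until \aformulabis_2)$). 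For standpoint-consistency: the \PSL-formula $\bigwedge_{\aformulabis \in B_i \cap \PSL} \aformulabis$ is satisfied by the \PSL model obtained from $\amodel$ by taking precisifications $\Pi$ with valuation $V(\avarprop) = \csetc{\altlmodel'}{\Pi}{\amodel, \altlmodel', i \models \avarprop}$ and $V(\astandpoint) = \lambda(\astandpoint)$, with $\altlmodel$ itself as the distinguished precisification --- here I use crucially that in $\LTLPSL$ no temporal operator occurs inside a standpoint modality, so every $\PSL$-subformula's truth at position $i$ depends only on the "slice" of $\amodel$ at time $i$, which is genuinely a \PSL model. Hence $B_i$ is standpoint-consistent.

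Next I would verify that the sequence $B_0, B_1, B_2, \ldots$ is a run of $\aautomaton_{\aformula_D}$ reading the word $w = A_0 A_1 A_2 \cdots$ with $A_i \egdef \altlmodel(i) \cap \varprop(\aformula_D)$. We have $A_i = B_i \cap \varprop(\aformula_D)$ by definition of $B_i$, and for each $\mynext \aformulabis \in \cl(\aformula_D)$ the semantic clause for $\mynext$ gives $\mynext \aformulabis \in B_i$ iff $\amodel, \altlmodel, i+1 \models \aformulabis$ iff $\aformulabis \in B_{i+1}$; so $B_{i+1} \in \delta(B_i, A_i)$. Since $\aformula_D \in B_0$, the run starts in an initial state. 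Finally, the run is accepting: fix $\aformulabis_1 \until \aformulabis_2 \in \cl(\aformula_D)$ and suppose for contradiction that from some point on every $B_i$ contains $\aformulabis_1 \until \aformulabis_2$ but not $\aformulabis_2$; then by the semantic clause for $\until$ at position $i$ there is $i' \geq i$ with $\amodel, \altlmodel, i' \models \aformulabis_2$, i.e. $\aformulabis_2 \in B_{i'}$, contradicting the assumption. Hence the run visits the corresponding accepting set infinitely often, for every until-subformula, so it is accepted by the generalised B\"uchi condition, and the language of $\aautomaton_{\aformula_D}$ is non-empty.

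I expect the only genuinely non-routine point to be the standpoint-consistency of the $B_i$'s: one must argue carefully that the restriction defining $\LTLPSL$ --- no \LTL connective under a standpoint modality --- is exactly what makes the time-$i$ slice of an \SLTL model into a bona fide \PSL model witnessing satisfiability of $\bigwedge_{\aformulabis \in B_i \cap \PSL} \aformulabis$. Everything else is the textbook \LTL-to-B\"uchi correspondence, and the membership-in-$\delta$ and acceptance arguments transfer verbatim. One should also double-check that the extra conjunct $\aformulater_D$ inside $\aformula_D$ causes no trouble here: it is an $\LTLPSL$ formula of the required shape (a $\always$ over \PSL-formulas), so it is handled uniformly by the same construction and imposes no new obligation beyond what the semantics already guarantees.
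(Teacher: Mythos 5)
Your proposal is correct and follows essentially the same route as the paper: define $B_i$ as the set of closure formulae true at $(\amodel,\altlmodel,i)$, check s-elementarity, verify the transition and initial-state conditions, and use the standard until-based argument for acceptance; your extra detail on standpoint-consistency (the time-$i$ slice of $\amodel$ is a \PSL model because $\LTLPSL$ forbids temporal operators under standpoint modalities) is exactly the point the paper leaves implicit. The only cosmetic caveat is that the slice may have infinitely many precisifications while the paper's \PSL models are finite by definition, but satisfiability of $\bigwedge_{\aformulabis \in B_i \cap \PSL}\aformulabis$ then follows from the finite model property of \PSL, so nothing is lost.
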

\begin{proof}[Proof sketch]
Assume that  $\amodel,\altlmodel,0 \models \aformula_D$, for some \SLTL model $\amodel = (\Pi,\lambda)$ and $\altlmodel \in \Pi$.
We let states $B_0,B_1, \dots$ be such that each $B_i = \{ \aformulabis \in \cl(\aformula_D) \mid \amodel,\altlmodel,i \models \aformulabis \}$
and we let input word $A_0 A_1 \dots $ be such that each 
$A_i = B_i \cap \varprop(\aformula_D)$.
As we will show, $B_0,B_1, \dots$ is a run of $\aautomaton_{\aformula_D}$
on the word $A_0 A_1 \dots $, and this run is accepting.
For the former statement, it suffices to observe that, by the definition, each $B_i$ is s-elementary and
$B_{i+1} \in \delta(B_i,A_i)$.
To prove that the run $B_0,B_1, \dots$ is accepting, we can show that for each accepting set $F \in \acccon$, there are infinitely many $i$ with $B_i \in F$.
This follows from the standard argument used in the automaton construction for
\LTL~\cite{baier2008principles,demri2016temporal}, since the accepting sets $\acccon$ in our
construction are defined in the same way as in the case of \LTL.
As $B_0,B_1, \dots$ is an accepting run of $\aautomaton_{\aformula_D}$, the language of $\aautomaton_{\aformula_D}$ is non-empty.
\end{proof}%%
Next, we  will show that the  converse of Lemma~\ref{lemma-automaton} (forthcoming  Lemma~\ref{lemma-automaton-converse}) holds true.
This direction is more complex and requires establishing new properties
for $\PSL$. 
Given an accepting run   $B_0, B_1, \dots$ of $\aautomaton_{\aformula_D}$,
we aim to construct an \SLTL model of $\aformula_D$.
By
standpoint-consistency of the $B_i$'s, there exists a sequence $\pair{\amodel_0}{\aprecisi_0}, 
\pair{\amodel_1}{\aprecisi_1}, \ldots$ of pointed $\PSL$ models (i.e. pairs consisting of a $\PSL$  model and
one of its precisifications) such that for all $i \geq 0$,
we have $\amodel_i, \aprecisi_i \models \bigwedge_{\aformulabis \in B_i \cap \PSL} \aformulabis$.
From such a sequence, we  aim to construct 
an $\SLTL$
model 
$\amodel = \pair{\precisis}{\lambda}$ and a trace $\altlmodel \in \precisis$ such that $\amodel, \altlmodel \models \aformula_D$.
The main challenge is to construct the set  $\precisis$  of traces in $\amodel$ from
the sets $\precisis_0, \precisis_1,\ldots$ of precisifications  in models $\amodel_0, \amodel_1, \dots$.
This task is reminiscent of known constructions
of models from partial models, for instance, as developed
in the mosaic method for temporal logics~\citep{Marx&Mikulas&Reynolds00}, see also~\cite{Wolter&Zakharyaschev98}. 
However, in our case, the construction requires exploiting  properties specific to \PSL.

To address this challenge, we  show a new model-theoretic result  
for \PSL, which is interesting on its own as, for example, it implies \np-membership of \PSL-satisfiability.
Our result states that the satisfiability of $\PSL$ formulae can be witnessed by 
\PSL models of a specific form and size, so it establishes a ``normalised small model property'' for \PSL.
We  show the result for \PSL formulae which are relevant for our construction.
In particular, for formulae of the form $\aformula_1 \wedge \aformula_2$ such that $\aformula_1$ is a conjunction of formulae $\astandpoint \preceq \astandpoint'$
and $\aformula_2$ is a \PSL formula in negation normal form with no occurrences of $\astandpoint \preceq \astandpoint'$
(see the formula $\aformula_D$ involved in Lemma~\ref{avormulad}).
Without any loss of generality, we can assume that the universal standpoint
$\universalstandpoint$ occurs in $\aformula_1$ (indeed, we can  always add to $\aformula_1$ the formula
$\universalstandpoint \preceq \universalstandpoint$, as it is equivalent to $\top$).

Our normalised small model property states
that satisfiability of $\aformula_1 \land \aformula_2$ implies existence of some  \PSL model of a specific form.
This model has
precisifications  represented by pairs $(S,j)$ such that $j$ is a natural number and $S$ 
belongs to a set $\mathbb{S}$ defined as follows (based on the form of $\aformula_1$).
We let $R$ be the reflexive and transitive closure of
  the relation $\set{\pair{\astandpoint}{\astandpoint'}  \mid
    \astandpoint \preceq \astandpoint' \ \mbox{occurs in} \ \aformula_1}
  \cup \set{\pair{\astandpoint}{\universalstandpoint} \mid
    \astandpoint \in \standpoints(\aformula_1 \land \aformula_2)}$.
Hence, if $\pair{\astandpoint}{\astandpoint'} \in R$, then $\aformula_1$ entails   $\astandpoint \preceq \astandpoint'$.
We let $R(\astandpoint)$ be the set of all $\astandpoint'$ such that $\pair{\astandpoint}{\astandpoint'} \in R$.
Then we define the set
$\mathbb{S} \egdef 
\set{
R(\astandpoint) 
\mid \astandpoint \in \standpoints(\aformula_1 \land \aformula_2)}$.
With these symbols in hand we are ready to formulate the small model property of \PSL.

\begin{theorem}
  \label{theorem-psl}
Let $\aformula_1 \land \aformula_2$ be a \PSL-formula of the form described above, which mentions
$N_1$ standpoint symbols and
$N_2$ occurrences of $\Diamond$. 
If $\aformula_1 \land \aformula_2$ is \PSL-satisfiable, then
for every natural number 
$N > N_1 + N_2 $
there is a $\PSL$ model $\amodel^{\star} = \pair{\precisis^{\star}}{V^{\star}}$ such that:%%
\begin{enumerate}
\item $\precisis^{\star} = \mathbb{S} \times \{1, \dots, N\}$, 
\item $\amodel^{\star}, \pair{S^{*}}{1} \models \aformula_1 \wedge \aformula_2$ with $S^{*} = R(*)$,  and
\item for each $\pair{S}{j} \in \precisis^{\star}$, it holds that
  $S = \set{\astandpoint \in \standpoints(\aformula_1 \land \aformula_2)
  \mid \pair{S}{j} \in V^{\star}(\astandpoint)} $.
\end{enumerate}%%
%% where $\mathbb{S}$ and $R$ are as defined  above the theorem, 
%% %% and $S^{*} = R(*)$.  
% whereas $S^{*} = R(*)$ is a distinguished element of $\mathbb{S}$.
\end{theorem}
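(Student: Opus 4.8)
The plan is to refine the classical small-model argument for $\PSL$ (the one behind its \np-membership) into a construction that, besides bounding the number of precisifications, forces every precisification to live inside an $R$-cone and to carry its standpoint membership on its sleeve. Fix a $\PSL$ model $\amodel = \pair{\precisis}{V}$ and $\aprecisi_0 \in \precisis$ with $\amodel, \aprecisi_0 \models \aformula_1 \wedge \aformula_2$. Since $\amodel$ validates $\aformula_1$ and the rules generating $R$ (reflexivity, transitivity, and $\astandpoint \preceq \universalstandpoint$) are sound, $V(\astandpoint) \subseteq V(\astandpoint')$ whenever $\pair{\astandpoint}{\astandpoint'} \in R$. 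For each $S \in \mathbb{S}$ fix a generator $g(S) \in \standpoints(\aformula_1 \wedge \aformula_2)$ with $R(g(S)) = S$, choosing $g(S^{*}) = \universalstandpoint$ where $S^{*} = R(\universalstandpoint)$, and set $E_S \egdef V(g(S))$. Then $E_S \neq \emptyset$, and since $g(S) \preceq \astandpoint$ is entailed for every $\astandpoint \in S$ while $g(S) \in S$, one has $E_S = \bigcap_{\astandpoint \in S} V(\astandpoint)$; in particular $E_{S^{*}} = \precisis \ni \aprecisi_0$, and $V(\astandpoint) \subseteq E_{R(\astandpoint)}$ for every standpoint $\astandpoint$.

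Next I build $\amodel^{\star} = \pair{\precisis^{\star}}{V^{\star}}$ with $\precisis^{\star} \egdef \mathbb{S} \times \{1,\dots,N\}$, together with an ``emulation'' map $e$ sending each $(S,j) \in \precisis^{\star}$ to some $e(S,j) \in E_S \subseteq \precisis$. As $\aformula_2$ is in negation normal form it has at most $N_2$ occurrences of $\Diamond$; for each subformula $\Diamond_{\astandpoint}\aformulabis$ of $\aformula_2$ that is true in $\amodel$ (whether a standpoint modality holds at a precisification does not depend on that precisification) fix a witness $w_{\Diamond_{\astandpoint}\aformulabis} \in V(\astandpoint)$ with $\amodel, w_{\Diamond_{\astandpoint}\aformulabis} \models \aformulabis$. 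Choose $e$ so that $e(S^{*},1) = \aprecisi_0$, so that for each such true diamond some slot in cone $R(\astandpoint)$ emulates $w_{\Diamond_{\astandpoint}\aformulabis}$ (legitimate since $w_{\Diamond_{\astandpoint}\aformulabis} \in V(\astandpoint) \subseteq E_{R(\astandpoint)}$), and so that every remaining slot $(S,j)$ emulates a fixed element of $E_S$. This is feasible since the true diamonds impose at most $N_2$ designated placements overall, spread across the cones, plus one more ($\aprecisi_0$) in $S^{*}$, so no cone needs more than $N_2+1 \le N$ slots. Finally set $V^{\star}(\avarprop) \egdef \set{(S,j) \mid e(S,j) \in V(\avarprop)}$ and, as item~3 demands, $V^{\star}(\astandpoint) \egdef \set{(S,j) \mid \astandpoint \in S}$. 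Since $\astandpoint \in R(\astandpoint) \in \mathbb{S}$ we get $V^{\star}(\astandpoint) \neq \emptyset$, and $\universalstandpoint$ lies in every $S \in \mathbb{S}$, so $V^{\star}(\universalstandpoint) = \precisis^{\star}$; moreover $V^{\star}$ validates every inequality in $R$ (each $S \in \mathbb{S}$ is $R$-upward closed), hence all of $\aformula_1$. Thus $\amodel^{\star}$ is a $\PSL$ model meeting items~1 and~3 and validating $\aformula_1$.

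It remains to prove $\amodel^{\star}, \pair{S^{*}}{1} \models \aformula_2$, which follows from the transfer lemma: for every subformula $\aformulabis$ of $\aformula_2$ and every $(S,j)$, if $\amodel, e(S,j) \models \aformulabis$ then $\amodel^{\star}, (S,j) \models \aformulabis$. Atoms and negated atoms transfer by the definition of $V^{\star}$ on propositional variables, and $\wedge,\vee$ are immediate from the induction hypothesis. For $\Box_{\astandpoint}\aformulabis$: every element of $V(\astandpoint)$ satisfies $\aformulabis$ in $\amodel$, and any $(S',j') \in V^{\star}(\astandpoint)$ has $\astandpoint \in S'$, hence $e(S',j') \in E_{S'} \subseteq V(\astandpoint)$, so the induction hypothesis applies. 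For $\Diamond_{\astandpoint}\aformulabis$: $\Diamond_{\astandpoint}\aformulabis$ is then true in $\amodel$, so its witness $w_{\Diamond_{\astandpoint}\aformulabis}$ was placed at some $(R(\astandpoint),j_0)$ with $e(R(\astandpoint),j_0) = w_{\Diamond_{\astandpoint}\aformulabis}$; as $\astandpoint \in R(\astandpoint)$ this pair is in $V^{\star}(\astandpoint)$, and the induction hypothesis gives $\amodel^{\star},(R(\astandpoint),j_0) \models \aformulabis$. Instantiating the lemma at $(S^{*},1)$, where $e(S^{*},1)=\aprecisi_0$ and $\amodel,\aprecisi_0 \models \aformula_2$, yields item~2.

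The crux is that one cannot keep the ``honest'' valuation: a precisification of $\amodel$ may belong to two $\preceq$-incomparable standpoints, whose union of $R$-cones need not be any single $R(\astandpoint)$ and so need not lie in $\mathbb{S}$. The remedy is to let the cone of a precisification over-approximate its membership (item~3 forces $V^{\star}(\astandpoint)=\set{(S,j)\mid\astandpoint\in S}$), to route each diamond witness into the \emph{single} cone $R(\astandpoint)$, and to exploit the identity $E_S = V(g(S)) = \bigcap_{\astandpoint\in S}V(\astandpoint)$; the transfer lemma then goes through, and it needs only the one-way implication precisely because $\aformula_2$ is in negation normal form (a $\Box_{\astandpoint}$ can only become easier to satisfy when worlds are added, and each needed $\Diamond_{\astandpoint}$ keeps a witness). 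The remaining thing to watch is the slot count --- at most $N_2$ diamond-driven placements spread over the cones, plus one for the root --- which is exactly why the bound $N > N_1 + N_2$ is enough.
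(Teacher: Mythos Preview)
Your argument is correct. The core ingredients are the same as the paper's --- pick witnesses for the diamond subformulae, force each precisification's standpoint set to lie in $\mathbb{S}$, and exploit the one-way transfer that negation normal form grants --- but the packaging differs. The paper decomposes the construction into four sequential model transformations (\textbf{Elect}, \textbf{Select}, \textbf{Normalise}, \textbf{Populate}), each time re-verifying that satisfaction of the conjunction is preserved; you instead build $\amodel^{\star}$ directly via an emulation map $e\colon \mathbb{S}\times\{1,\dots,N\}\to\precisis$ and prove a single transfer lemma. Your identity $E_S = V(g(S)) = \bigcap_{\astandpoint\in S} V(\astandpoint)$ is what makes the $\Box_{\astandpoint}$ case of the transfer lemma go through in one shot, and it lets you skip the paper's \textbf{Elect} step entirely: you never need $\aprecisi_0$ to have exactly the standpoint set $S^{*}$, because the standpoint membership of $(S^{*},1)$ in $\amodel^{\star}$ is decreed by item~3 rather than inherited from $\aprecisi_0$. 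The paper's staged proof is perhaps easier to verify piecewise, while your direct construction is shorter and makes the role of the bound $N>N_1+N_2$ more transparent (at most $N_2$ diamond placements plus the root, per cone). One small thing to tidy: your definition $V^{\star}(\astandpoint)=\{(S,j)\mid\astandpoint\in S\}$ yields $\emptyset$ for standpoint symbols outside $\standpoints(\aformula_1\wedge\aformula_2)$, so you should set those to $\precisis^{\star}$ (or any non-empty set) to meet the model definition; this is cosmetic.
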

\begin{proof}[Proof sketch]
The proof employs  principles first used to  show the small model property
for $\Sfive$~\citep[Chapter 6]{Blackburn&deRijke&Venema01},
for counting logics~\citep[Chapter 1]{Mikulas95}, and for $\PSL$ itself~\citep[Section 4.4]{GomezAlvarez19}.

Given a \PSL model $\amodel^{\dag}$ and a precisification $\aprecisi^{\dag}$ such that
$\amodel^{\dag}, \aprecisi^{\dag} \models \aformula_1 \wedge \aformula_2$, we perform
four transformations to modify
$\amodel^{\dag},\aprecisi^{\dag}$
into $\amodel^{\star}, \aprecisi^{\star}$
satisfying Statements 1--3 from the theorem, where $\aprecisi^{\star}= \pair{S^{*}}{1}$. The steps are schematised below.
\begin{center}
{\tiny 
$\amodel^{\dag}, \aprecisi^{\dag}
\step{\bf Elect}
\amodel, \aprecisi
\step{\bf Select}
\amodel^f, \aprecisi^f
\step{\bf Normalise}
\amodel^{fn}, \aprecisi^{fn}
\step{\bf Populate}
\amodel^{\star}, \aprecisi^{\star}$.
}
\end{center}
Let us describe briefly the objectives of each reduction.
\begin{description}
\itemsep 0 cm
\item[(Elect)] The goal of transforming $\amodel^{\dag}, \aprecisi^{\dag}$ into
  $\amodel, \aprecisi$ is to guarantee that the standpoint symbols labelling $\aprecisi$
  are exactly those in $S^{*}$. At most one precisification is added to  $\amodel^{\dag}$
  to obtain $\amodel$.
\item[(Select)] The construction of $\amodel^f, \aprecisi^f$ from $\amodel, \aprecisi$ amounts
  to select a {\em finite} subset of precisifications from $\amodel$ to witness the satisfaction
  of all the $\Diamond$-formulae. This step is analogous to the way the
  small model property is shown for the modal logic $\Sfive$. 
\item[(Normalise)] The construction of $\amodel^{fn}, \aprecisi^{fn}$ from $\amodel^f, \aprecisi^f$
  guarantees that for any precisification $\aprecisi$ in $\amodel^{fn}$,
  the set of standpoint symbols whose valuation contains $\aprecisi$,
  belongs to $\mathbb{S}$.
  To do so, precisifications
  are  ``copied'', preserving the satisfaction of propositional variables
  but possibly updating the valuation of standpoints.
  %% symbols.
\item[(Populate)] This step to get  $\amodel^{\star}, \aprecisi^{\star}$ from
  $\amodel^{fn}, \aprecisi^{fn}$ consists in ``copying'' precisifications,
  %% preserving the full labelling,
  so that the set of precisifications can be identified with $\mathbb{S} \times \interval{1}{N}$.
  %% (see condition (1.)). 
%\qedhere
\end{description}
It can be shown that $\amodel^{\star}, \aprecisi^{\star}$ constructed in this way is possible and
satisfies all Statements 1--3 from the theorem. %%
\end{proof}%%

Interestingly, as a consequence of Theorem~\ref{theorem-psl}, we obtain the \np-membership
for \PSL-satisfiability. Indeed, given a \PSL formula $\aformula$, it suffices to non-deterministically guess a partition
$D = \pair{I^+}{I^-}$ of  formulae $\astandpoint \preceq \astandpoint'$ in $\aformula$,
and to verify satisfiability of
$\bigwedge_{(\astandpoint \preceq \astandpoint') \in I^+} ( \astandpoint \preceq \astandpoint') 
\land 
\bigwedge_{(\astandpoint \preceq \astandpoint') \in I^-}  ( \Diamond_{\astandpoint}  \ \avarprop_{\astandpoint,\astandpoint'}
\wedge \neg \Diamond_{\astandpoint'}  \ \avarprop_{\astandpoint, \astandpoint'})
\wedge
\aformula[I^+ \mapsto \top,I^- \mapsto \bot]$.
The form of the obtained formula allows us to apply Theorem~\ref{theorem-psl}.
Hence, to check satisfiability,  it remains to guess a model with
$\card{\mathbb{S}} (N_1 + N_2 +1)$ precisifications. 
The whole procedure is in \np.
Above all, Theorem~\ref{theorem-psl} is a key  to show the  converse of
Lemma~\ref{lemma-automaton}, stated below.

\begin{lemma}\label{lemma-automaton-converse}
If the language of $\aautomaton_{\aformula_D}$ is not 
non-empty, 
then 
$\aformula_D$ is $\SLTL$-satisfiable
\end{lemma}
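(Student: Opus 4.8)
The plan is to mirror the classical correctness argument for the $\LTL$ automaton, replacing ``one propositional valuation at each time point'' by ``one $\PSL$ model at each time point'', and then to glue these $\PSL$ models along the time axis into a single $\SLTL$ model. Since the language of $\aautomaton_{\aformula_D}$ is non-empty, I would fix an accepting run $B_0, B_1, \dots$ on a word $A_0 A_1 \dots$ with $A_i = B_i \cap \varprop(\aformula_D)$. Each $B_i$ is s-elementary, hence standpoint-consistent, so $\bigwedge_{\aformulabis \in B_i \cap \PSL} \aformulabis$ is $\PSL$-satisfiable. Since $\aformulater_D$ lies in $B_0$ and propagates along the run, unfolding its leading $\always$ with the maximal-consistency clauses shows that every $\astandpoint \preceq \astandpoint'$ with $(\astandpoint\preceq\astandpoint')\in I^+$ belongs to each $B_i$ and no $\neg(\astandpoint\preceq\astandpoint')$ does; so, after pulling out the inequalities and converting the rest to negation normal form, $\bigwedge_{\aformulabis \in B_i \cap \PSL} \aformulabis$ has the shape demanded by Theorem~\ref{theorem-psl} with a \emph{fixed} first conjunct $\aformula_1 \egdef \bigwedge_{(\astandpoint\preceq\astandpoint')\in I^+}(\astandpoint\preceq\astandpoint') \wedge \bigwedge_{\astandpoint\in\standpoints(\aformula_D)}(\astandpoint\preceq\astandpoint)$ (which mentions every standpoint of $\aformula_D$, so the induced $R$ and $\mathbb{S}$ do not depend on $i$). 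The numbers of standpoint symbols and of $\Diamond$-subformulae occurring in any $B_i\cap\PSL$ are uniformly bounded by their counts in $\aformula_D$, so I can fix one $N$ and apply Theorem~\ref{theorem-psl} at every position, obtaining $\PSL$ models $\amodel_i = \pair{\mathbb{S}\times\interval{1}{N}}{V_i}$ with $\amodel_i, \pair{S^{*}}{1} \models \bigwedge_{\aformulabis \in B_i \cap \PSL} \aformulabis$ for $S^{*} = R(\universalstandpoint)$, all sharing the same precisification set and, crucially, the same canonical standpoint labelling $\pair{S}{j} \in V_i(\astandpoint) \Leftrightarrow \astandpoint \in S$; only the propositional part of $V_i$ varies with $i$.

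Next I would glue these into an $\SLTL$ model $\amodel = \pair{\precisis}{\lambda}$ whose traces are indexed by the common precisification set. To keep $\pair{S}{j}\mapsto\altlmodel_{\pair{S}{j}}$ injective I reserve fresh propositional variables $\varprop'$ disjoint from $\varprop(\aformula_D)$ and pairwise distinct sets $E_{\pair{S}{j}}\subseteq\varprop'$, and set $\altlmodel_{\pair{S}{j}}(i) \egdef \{\avarprop\in\varprop(\aformula_D)\mid \pair{S}{j}\in V_i(\avarprop)\}\cup E_{\pair{S}{j}}$ (the markers are invisible to every subformula of $\aformula_D$). Then $\precisis \egdef \{\altlmodel_{\pair{S}{j}}\mid \pair{S}{j}\in\mathbb{S}\times\interval{1}{N}\}$, $\lambda(\astandpoint)\egdef\{\altlmodel_{\pair{S}{j}}\mid\astandpoint\in S\}$ for standpoints of $\aformula_D$ and $\lambda(\astandpoint)\egdef\precisis$ otherwise; since $\universalstandpoint\in S$ for every $S\in\mathbb{S}$ and $\astandpoint\in R(\astandpoint)\in\mathbb{S}$, this is a genuine $\SLTL$ model, with designated trace $\altlmodel_{\pair{S^{*}}{1}}$.

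Correctness I would establish by two nested inductions. First, \emph{local agreement}: by induction on a $\PSL$ subformula $\aformulater$ of $\aformula_D$, for all $i$ and all $\pair{S}{j}$ one has $\amodel, \altlmodel_{\pair{S}{j}}, i \models \aformulater \Leftrightarrow \amodel_i, \pair{S}{j}\models\aformulater$; the propositional case is by definition of the traces, the case $\astandpoint\preceq\astandpoint'$ uses that the bijection $\pair{S}{j}\leftrightarrow\altlmodel_{\pair{S}{j}}$ identifies $\lambda(\astandpoint)$ with $V_i(\astandpoint)$, and the cases $\Diamond_{\astandpoint}$, $\Box_{\astandpoint}$ go through precisely because in $\SLTL$ these modalities keep the time coordinate $i$ fixed, so the induction hypothesis applies at the same $i$ (with $V_i(\universalstandpoint)$ the whole precisification set and $V_i(\astandpoint)$ matched to $\lambda(\astandpoint)$ via the canonical labelling). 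Second, by induction on $\aformulabis\in\cl(\aformula_D)$, $\amodel,\altlmodel_{\pair{S^{*}}{1}},i\models\aformulabis\Leftrightarrow\aformulabis\in B_i$: for $\PSL$ formulae this follows from local agreement at $\pair{S^{*}}{1}$, from $\amodel_i,\pair{S^{*}}{1}\models\bigwedge_{\aformulater\in B_i\cap\PSL}\aformulater$, and from maximal consistency; the Boolean and $\mynext$ cases use maximal consistency and the transition relation $\delta$; and the $\aformulabis_1\until\aformulabis_2$ case is the standard $\LTL$ automaton argument in which the generalised B\"uchi acceptance condition $\acccon$ forces pending ``until'' obligations to be fulfilled. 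Since $\aformula_D\in B_0$ (an initial state), instantiating the second induction at $i=0$ yields $\amodel,\altlmodel_{\pair{S^{*}}{1}},0\models\aformula_D$, which is what we want.

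The main obstacle I expect is exactly the interaction between local agreement and the gluing: to have a \emph{single} $\lambda$ that works at every time point, the standpoint interpretations of the $\amodel_i$ must be literally identical, and this is what the normalised form of Theorem~\ref{theorem-psl} (common precisification set $\mathbb{S}\times\interval{1}{N}$ and canonical labelling) delivers; starting from the raw $\PSL$ models guaranteed by \np-membership alone, the $\amodel_i$ could carry incompatible precisification sets and labellings and no gluing would be possible. A secondary point worth highlighting is that the whole two-tier induction works only because we are inside $\LTLPSL$: each $\Diamond_{\astandpoint}$/$\Box_{\astandpoint}$ scopes over a temporal-free formula evaluated at a single fixed time coordinate, which is what lets local agreement be proved independently of the temporal structure — this is precisely the feature that was abused by $\aformula_C$ in Section~\ref{section-example-expspace} and that we have removed.
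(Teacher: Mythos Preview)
Your proposal is correct and follows essentially the same route as the paper: apply Theorem~\ref{theorem-psl} uniformly along the accepting run to get $\PSL$ models on the common carrier $\mathbb{S}\times\interval{1}{N}$ with the canonical standpoint labelling, glue them into an $\SLTL$ model indexed by that carrier, and run the standard $\LTL$ automaton induction with $\PSL$ subformulae as the base case.

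Two places where you are actually more careful than the paper are worth flagging. First, the paper silently passes from ``$\aformulabis\in B_i \Leftrightarrow \amodel_i',\pair{S^{*}}{1}\models\aformulabis$'' to ``$\aformulabis\in B_i \Leftrightarrow \amodel,\altlmodel_{\pair{S^{*}}{1}},i\models\aformulabis$'' for $\PSL$ formulae; your explicit \emph{local agreement} lemma (for all $\pair{S}{j}$, not just $\pair{S^{*}}{1}$) is exactly what justifies that passage, and your observation that the standpoint modalities keep the time coordinate fixed is the reason it can be proved purely from the time-$i$ slice. Second, the paper defines $\precisis$ as a \emph{set} of traces $\altlmodel_{\pair{S}{j}}$ and then writes ``$\altlmodel_{\pair{S}{j}}\in\lambda(\astandpoint)$ iff $\astandpoint\in S$'' without arguing that the map $\pair{S}{j}\mapsto\altlmodel_{\pair{S}{j}}$ is injective; if two precisifications with different $S$-components produced the same trace, both $\lambda$ and the $\Box_{\astandpoint}$ case of local agreement would be in trouble. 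Your fresh-marker trick with $E_{\pair{S}{j}}\subseteq\varprop'$ disjoint from $\varprop(\aformula_D)$ is a clean and harmless way to force injectivity, invisible to every subformula of $\aformula_D$. These are refinements rather than a different strategy.
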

\begin{proof}[Proof sketch]
Assume that $B_0, B_1, \dots$ is an accepting run of $\aautomaton_{\aformula_D}$  on a word
$A_0 A_1  \dots$. We construct an \SLTL model $\amodel = (\Pi, \lambda)$ and a trace
$\altlmodel \in \Pi$ such that $\amodel, \altlmodel, 0 \models \aformula_D$.
For each $i \in \Nat$, by standpoint consistency of $B_i$, 
there exists a $\PSL$ model
%% $\amodel_i=(\precisis_i,L_i)$
$\amodel_i=(\precisis_i,V_i)$
and a
precisification $\aprecisi_i \in \precisis_i$ such that
$\amodel_i, \aprecisi_i \models \bigwedge_{\aformulabis \in B_i \cap \PSL} \aformulabis$.
%% Let $N = \card{\standpoints(\aformula_D)} + \size{\aformula_D}^2 +1$.
Each formula $\bigwedge_{\aformulabis \in B_i \cap \PSL} \aformulabis$ can be written in the form
$\aformula_1 \land \aformula_2$
assumed in  Theorem~\ref{theorem-psl}.
Indeed, the only atomic formulae of the form $\astandpoint \preceq \astandpoint'$
in $\cl(\aformula_D)$ are those from $I^+$, and due to $\aformulater_D$, each
$\astandpoint \preceq \astandpoint'$ can occur only positively in $B_i$ (otherwise
$\bigwedge_{\aformulabis \in B_i \cap \PSL} \aformulabis$ is not satisfiable). 
For each $i \in \Nat$, the formula 
$\bigwedge_{\aformulabis \in B_i \cap \PSL} \aformulabis$ yields $R$, $N_1$, and $N_2$ used in Theorem~\ref{theorem-psl}.
Importantly, all these $R$ are the same, all $N_1 \leq \card{\standpoints(\aformula_D)}$, and all 
$N_2 \leq \size{\aformula_D}^2$.
Therefore we can apply 
Theorem~\ref{theorem-psl} with $N= \card{\standpoints(\aformula_D)}
+ \size{\aformula_D}^2 +1$
to obtain that
there are
$\PSL$ models
$\amodel_i' = \pair{\precisis_i'}{V_i'}$
such that
   $\precisis_i' = \mathbb{S} \times \{1, \dots ,N\}$, 
  $\amodel_i', \pair{S^{*}}{1} \models \bigwedge_{\aformulabis \in B_i \cap \PSL} \aformulabis$
  (with $S^* = R(*)$),
   and
   for all $\pair{S}{j} \in \precisis_i'$, we have
   $\set{\astandpoint \in \standpoints(\aformula_D) \mid \pair{S}{j} \in V_i'(\astandpoint)} = S$. 
   %% $L(S,j) \cap \standpoints(\aformula) = S$.
   % The value $N$ is designed to be large enough to cover all the cases
   % and it is polynomial in the size of the original formula $\aformula$.
%
We  use these models
to define an \SLTL model  $\amodel = \pair{\precisis}{\lambda}$ and $\altlmodel \in \precisis$.
We let $\precisis$ 
be $\set{\altlmodel_{\pair{S}{j}} \mid \pair{S}{j} \in \mathbb{S} \times \{1, \dots ,N \} }$
and $\lambda$ be  such that
\begin{itemize}
\itemsep 0 cm 
  \item for all $\astandpoint \in \standpoints(\aformula_D)$, $\atrace_{\pair{S}{j}} \in \lambda(\astandpoint)$
    iff $\astandpoint \in S$,
  \item for all $\pair{S}{j} \in \mathbb{S} \times \{1, \dots, N\}$, for all
    $k \geq 0$, we have
    $\atrace_{\pair{S}{j}}(k) \egdef
    \set{\avarprop \in \varprop(\aformula_D) \mid \pair{S}{j} \in V_k'(\avarprop)}$. 
    %% $\atrace_{\pair{S}{j}}(k) = L_k(S,j) \cap \varprop(\aformula_D)$.
  \end{itemize}
  Finally, we let $\altlmodel$ be $\altlmodel_{\pair{S^{*}}{1}}$.
  The construction of $\amodel$ is schematised in Figure~\ref{figure-m-construction}. 
  Observe that $\aformula_D \in B_0$.
 Hence, to prove that $\amodel, \altlmodel_{\pair{S^{*}}{1}}, 0 \models \aformula_D$,
 it suffices to show by structural induction that for all subformulae
 $\aformulabis$ of $\aformula_D$ and all $i \in \Nat$ we have
$ \aformulabis \in B_i$
iff 
$\amodel, \altlmodel_{\pair{S^{*}}{1}}, i  \models \aformulabis$.
In the basis of the induction we let  $\aformulabis$ be any subformula which does not mention \LTL connectives.
Note that for such $\aformulabis$, the equivalence follows from the construction of $\amodel'_i$.
Regarding the inductive step,
%% we observe that
since
$\aformula$ does not mention \LTL connectives in the scope of standpoint operators,  
the cases in the inductive steps are for Boolean and \LTL connectives only.
Thus, the proof is analogous as in the case of automaton construction for
\LTL~\cite{baier2008principles,demri2016temporal}.
\end{proof}%%
\begin{figure}
 \begin{center}
\scalebox{0.73}{
  \begin{tikzpicture}[node distance=1.8cm,framed]
    \node (preone) {{\large $\atrace_{\pair{S_1}{1}} \in \lambda(\universalstandpoint) \cap
      \overline{\lambda(\astandpoint_1)} \cap \overline{\lambda(\astandpoint_2)}$}};
    \node (pretwo) [below=0.65cm of preone] {{\large $\atrace_{\pair{S_1}{2}} \in \lambda(\universalstandpoint) \cap
      \overline{\lambda(\astandpoint_1)} \cap \overline{\lambda(\astandpoint_2)}$}};
    \node (prethree) [below=0.65cm of pretwo] {{\large $\atrace_{\pair{S_2}{1}} \in \lambda(\universalstandpoint) \cap
      \overline{\lambda(\astandpoint_1)} \cap \lambda(\astandpoint_2)$}};
    \node (prefour) [below=0.65cm of prethree] {{\large $\atrace_{\pair{S_2}{2}} \in \lambda(\universalstandpoint) \cap
      \overline{\lambda(\astandpoint_1)} \cap \lambda(\astandpoint_2)$}};
    \node (prefive) [below=0.65cm of prefour] {{\large $\atrace_{\pair{S_3}{1}} \in \lambda(\universalstandpoint) \cap
      \lambda(\astandpoint_1) \cap \lambda(\astandpoint_2)$}};
    \node (presix) [below=0.65cm of prefive] {{\large $\atrace_{\pair{S_3}{2}} \in \lambda(\universalstandpoint) \cap
      \lambda(\astandpoint_1) \cap \lambda(\astandpoint_2)$}};

    \node[state,label=\mbox{$p,q$},rectangle] (zeroone) [right=0.3cm of preone] {$S_1,1$};
    \node () [right=0.005in of zeroone] {$\models \aformulabis_0$};
    \node[state,label=\mbox{$q$}] (zerotwo) [right=0.3cm of pretwo]  {$S_1,2$};
    \node[state,label=\mbox{$r$}] (zerothree) [right=0.3cm of prethree] {$S_2,1$};
    \node[state,label=\mbox{}] (zerofour) [right=0.3cm of prefour]  {$S_2,2$};
    \node[state,label=\mbox{$p$}] (zerofive) [right=0.3cm of prefive] {$S_3,1$};
    \node[state,label=\mbox{$q$}] (zerosix) [right=0.3cm of presix]  {$S_3,2$};

    \node[state,label=\mbox{$r,q$},ball color=yellow,rectangle] (oneone) [right of=zeroone] {$S_1,1$};
    \node () [right=0.005in of oneone] {\textcolor{black}{$\models \aformulabis_1$}};
    \node[state,label=\mbox{$q$},ball color=yellow] (onetwo) [right of=zerotwo]  {$S_1,2$};
    \node[state,label=\mbox{$r$},ball color=yellow] (onethree) [right of=zerothree] {$S_2,1$};
    \node[state,label=\mbox{$p,q,r$},ball color=yellow] (onefour) [right of=zerofour]  {$S_2,2$};
    \node[state,label=\mbox{$$},ball color=yellow] (onefive) [right of=zerofive] {$S_3,1$};
    \node[state,label=\mbox{$p,q$},ball color=yellow] (onesix) [right of=zerosix]  {$S_3,2$};

    \node[state,label=\mbox{$q$},ball color=cyan,rectangle] (twoone) [right of=oneone] {$S_1,1$};
    \node () [right=0.005in of twoone] {\textcolor{black}{$\models \aformulabis_2$}};
    \node[state,label=\mbox{$q,r$},ball color=cyan] (twotwo) [right of=onetwo]  {$S_1,2$};
    \node[state,label=\mbox{$r$},ball color=cyan] (twothree) [right of=onethree] {$S_2,1$};
    \node[state,label=\mbox{$r$},ball color=cyan] (twofour) [right of=onefour]  {$S_2,2$};
    \node[state,label=\mbox{$p,q,r$},ball color=cyan] (twofive) [right of=onefive] {$S_3,1$};
    \node[state,label=\mbox{$q$},ball color=cyan] (twosix) [right of=onesix]  {$S_3,2$};

    \node[state,label=\mbox{$p,r$},ball color=green,rectangle] (threeone) [right of=twoone] {$S_1,1$};
    \node () [right=0.005in of threeone] {\textcolor{black}{$\models \aformulabis_3$}};
    \node[state,label=\mbox{$p,q$},ball color=green] (threetwo) [right of=twotwo]  {$S_1,2$};
    \node[state,label=\mbox{$p,r$},ball color=green] (threethree) [right of=twothree] {$S_2,1$};
    \node[state,label=\mbox{$p,q$},ball color=green] (threefour) [right of=twofour]  {$S_2,2$};
    \node[state,label=\mbox{$q$},ball color=green] (threefive) [right of=twofive] {$S_3,1$};
    \node[state,label=\mbox{$p$},ball color=green] (threesix) [right of=twosix]  {$S_3,2$};

    % \node () [right=1cm of threeone] {{\Large $\ldots$}};
    % \node () [right=0.6cm of threetwo] {{\Large $\ldots$}};
    % \node () [right=0.6cm of threethree] {{\Large $\ldots$}};
    % \node () [right=0.6cm of threefour] {{\Large $\ldots$}};
    % \node () [right=0.6cm of threefive] {{\Large $\ldots$}};
    % \node () [right=0.6cm of threesix] {{\Large $\ldots$}};

    \node () [below=0.2cm of zerosix] {$\amodel_0'$};
    \node () [below=0.2cm of onesix] {\textcolor{black}{$\amodel_1'$}};
    \node () [below=0.2cm of twosix] {\textcolor{black}{$\amodel_2'$}};
    \node () [below=0.2cm of threesix] {\textcolor{black}{$\amodel_3'$}}; 

    \path (zeroone) edge [bend left] node {} (oneone);
    \path (oneone) edge [bend left] node {} (twoone);
    \path (twoone) edge [bend left] node {} (threeone);

    \path (zerotwo) edge  node {} (onetwo);
    \path (onetwo) edge  node {} (twotwo);
    \path (twotwo) edge  node {} (threetwo);

    \path (zerothree) edge  node {} (onethree);
    \path (onethree) edge  node {} (twothree);
    \path (twothree) edge  node {} (threethree);

     \path (zerofour) edge  node {} (onefour);
    \path (onefour) edge  node {} (twofour);
    \path (twofour) edge  node {} (threefour);
    
    \path (zerofive) edge  node {} (onefive);
    \path (onefive) edge  node {} (twofive);
    \path (twofive) edge  node {} (threefive);

    \path (zerosix) edge  node {} (onesix);
    \path (onesix) edge  node {} (twosix);
    \path (twosix) edge  node {} (threesix);
    
  \end{tikzpicture}
}
\end{center}%%
\caption{Construction of $\amodel$ from  $\amodel_i'$'s with $N=2$, $S^* = S_1 = \set{\universalstandpoint}$,
  $S_2 = \set{\universalstandpoint, \astandpoint_2}$,
  $S_3 = \set{\universalstandpoint,\astandpoint_2, \astandpoint_1}$,
  and $\aformulabis_i = \bigwedge_{\aformulabis \in B_i \cap \PSL} \aformulabis$}
\label{figure-m-construction}
\end{figure}%%
Lemmas~\ref{lemma-automaton} and~\ref{lemma-automaton-converse} allow us to reduce the satisfiability  
of $\aformula_D$ to checking the nonemptiness of $\aautomaton_{\aformula_D}$ language.
This, as we show next,
leads to tight \pspace complexity bound for satisfiability checking.

\begin{theorem}
\label{theorem-ltlpsl}
\LTLPSL-satisfiability problem  is \pspace-complete. 
\end{theorem}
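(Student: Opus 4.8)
The lower bound is immediate and I would dispose of it first: $\LTL$ is syntactically a fragment of $\LTLPSL$ (an $\LTL$ formula contains no standpoint operator, hence a fortiori no $\LTL$ connective in the scope of one), and $\LTL$-satisfiability is $\pspace$-hard~\cite{Sistla&Clarke85}; so $\LTLPSL$-satisfiability is $\pspace$-hard. The real work is the matching upper bound, and for that I would simply chain the reductions already established. Given an input $\LTLPSL$ formula $\aformula$, Lemma~\ref{avormulad} says it is enough to decide whether some partition $D=(I^+,I^-)$ of $I$ makes $\aformula_D$ satisfiable; since $\size{I}\le\size{\aformula}$, such a $D$ is a polynomial-size object that can be guessed nondeterministically at the outset. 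Lemmas~\ref{lemma-automaton} and~\ref{lemma-automaton-converse} together say that $\aformula_D$ is $\SLTL$-satisfiable iff the language of the generalised B\"uchi automaton $\aautomaton_{\aformula_D}$ is non-empty. Thus the whole problem reduces to a non-emptiness test for $\aautomaton_{\aformula_D}$, and it remains to argue that this test can be carried out in polynomial space.

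Since $\aautomaton_{\aformula_D}$ has exponentially many states — the s-elementary subsets of $\cl(\aformula_D)$ — it cannot be materialised, so I would run the textbook on-the-fly lasso search for B\"uchi non-emptiness, validating states and transitions lazily. Each step is cheap for the following reasons: (i) a single state $B$ has polynomial size, as $\size{\cl(\aformula_D)}\le 4\cdot\size{\aformula_D}$; (ii) testing that $B$ is maximally consistent is a fixed list of propositional closure checks, in $\ptime$; (iii) testing that $B$ is standpoint-consistent means deciding $\PSL$-satisfiability of the polynomial-size formula $\bigwedge_{\aformulabis\in B\cap\PSL}\aformulabis$, which is $\np$-complete~\cite{alvarez2022standpoint} and hence in $\pspace$; (iv) the alphabet $\powerset{\varprop(\aformula_D)}$ need not be enumerated, because $\delta(B,A)\ne\emptyset$ forces $A=B\cap\varprop(\aformula_D)$, so the letter read from $B$ is determined by $B$ itself and checking $B'\in\delta(B,B\cap\varprop(\aformula_D))$ is again in $\ptime$; (v) the acceptance family $\acccon$ has only $O(\size{\aformula_D})$ components, one per ``until'' subformula. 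The algorithm then guesses $D$, guesses an initial s-elementary $B_0\ni\aformula_D$, walks along transitions, at some point marks the current state as the loop anchor, continues walking while maintaining an $O(\size{\aformula_D})$-bit record of which components of $\acccon$ have been visited since the anchor, and accepts on returning to the anchor with all components marked. Only a constant number of polynomial-size states plus the bitmask are kept in memory, so the procedure is nondeterministic and uses polynomial space; by Savitch's theorem $\npspace=\pspace$~\cite{Savitch70}, which gives $\LTLPSL$-satisfiability in $\pspace$ and completes the argument.

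\textbf{Main obstacle.} The delicate point — and the only place where something beyond the standard $\LTL$ automaton analysis~\cite{baier2008principles,demri2016temporal} is needed — is that the state-validity test is not purely syntactic: it contains a call to a $\PSL$-satisfiability subroutine for standpoint-consistency. This fits the polynomial-space budget only because $\PSL$-satisfiability lies in $\np\subseteq\pspace$; Theorem~\ref{theorem-psl}, or more directly the $\np$ bound it yields, is exactly what certifies this. And it is precisely the $\LTLPSL$ restriction — no $\LTL$ connective under a standpoint modality — that makes these $\PSL$ side conditions ``flat'' (bounded modal depth, a finite per-position check) and hence lets $\aautomaton_{\aformula_D}$ remain an ordinary $\LTL$-style B\"uchi automaton with nothing more than an extra per-state consistency filter; without that restriction, as Proposition~\ref{proposition-expspace-examples} shows, a single position may force exponentially many pairwise-distinct traces and no such bounded-state automaton exists.
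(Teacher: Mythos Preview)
Your proposal is correct and follows essentially the same approach as the paper: the lower bound via the $\LTL$ fragment, the upper bound by guessing $D$, invoking Lemmas~\ref{avormulad}, \ref{lemma-automaton}, and~\ref{lemma-automaton-converse}, and then running the on-the-fly B\"uchi non-emptiness check with the extra per-state $\PSL$-satisfiability test handled via its $\np$ bound. You spell out the lasso-search mechanics and the acceptance-bitmask bookkeeping in more detail than the paper does, but this is just an elaboration of the standard ``on the fly'' argument the paper cites, not a different route.
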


\begin{proof}
  The lower bound is from the fact that
  $\LTL$ is a syntactic fragment of $\LTLPSL$ and is already \pspace-hard~\citep[Theorem 4.1]{Sistla&Clarke85}. 
For the upper bound, 
assume that we want to check if $\aformula$ is satisfiable.
Our procedure starts by guessing  a partition $D$ of  subformulae of
$\aformula$ of the form $\astandpoint \preceq \astandpoint'$
  (in nondeterministic polynomial-time), and constructing  the formula $\aformula_D$.
By Lemma~\ref{avormulad}, it remains to check if $\aformula_D$ is satisfiable.
This, however, by 
Lemmas~\ref{lemma-automaton} and~\ref{lemma-automaton-converse},
reduces to checking if the language of $\aautomaton_{\aformula_D}$ is non-empty.
The size (number of states) of $\aautomaton_{\aformula_D}$ is exponentially large, but similarly as in the automata construction for \LTL,  we can use an ``on the fly'' approach to check  in \pspace non-emptiness of $\aautomaton_{\aformula_D}$ language~\cite{baier2008principles,demri2016temporal}.
The difference between our procedure and the standard one for \LTL is that we  need to check if each  constructed `on the fly'  state $B_i$ of the automaton
is standpoint-consistent.
This, however, is feasible in \np due to \np-completeness of the satisfiability problem
for $\PSL$~\cite{alvarez2022standpoint}.
\end{proof}

As a corollary of the proof, we obtain also the following result.%%
\begin{corollary}%%
If an  $\LTLPSL$ formula $\aformula$ is satisfiable,
 it is satisfied in an \SLTL model with polynomially many (in the size of $\aformula$) traces. 
\end{corollary}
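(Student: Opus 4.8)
The plan is to reread the $\SLTL$ model constructed in the proof of Lemma~\ref{lemma-automaton-converse} and merely count the traces it contains, then verify that this model still witnesses $\aformula$ itself. First I would invoke Lemma~\ref{avormulad}: if the $\LTLPSL$ formula $\aformula$ is satisfiable, then $\aformula_D$ is $\SLTL$-satisfiable for some partition $D=(I^+,I^-)$ of the set $I$ of $\preceq$-subformulae of $\aformula$. By Lemma~\ref{lemma-automaton} the language of $\aautomaton_{\aformula_D}$ is then non-empty, so it admits an accepting run $B_0,B_1,\ldots$; feeding this run into the construction from the proof of Lemma~\ref{lemma-automaton-converse} produces an $\SLTL$ model $\amodel=\pair{\Pi}{\lambda}$ whose trace set is $\Pi=\set{\altlmodel_{\pair{S}{j}}\mid \pair{S}{j}\in\mathbb{S}\times\set{1,\ldots,N}}$, where $N=\card{\standpoints(\aformula_D)}+\size{\aformula_D}^2+1$ is the value fixed there when applying Theorem~\ref{theorem-psl}. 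Hence $\card{\Pi}\le\card{\mathbb{S}}\cdot N$, with $\card{\mathbb{S}}\le\card{\standpoints(\aformula_D)}$ since $\mathbb{S}$ carries one element per standpoint symbol, and it only remains to bound this quantity polynomially in $\size{\aformula}$.

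For the bound I would observe that passing from $\aformula$ to $\aformula_D$ costs only a polynomial blow-up and, crucially, no extra standpoint symbols: the auxiliary formula $\aformulater_D$ introduces only the fresh \emph{propositional} variables $\avarprop_{\astandpoint,\astandpoint'}$, so $\standpoints(\aformula_D)=\standpoints(\aformula)\cup\set{\universalstandpoint}$ and thus $\card{\standpoints(\aformula_D)}\le\size{\aformula}+1$; moreover $\aformulater_D$ contributes only a constant amount per $\preceq$-subformula of $\aformula$, so $\size{\aformula_D}=O(\size{\aformula})$. Combining, $\card{\Pi}\le\card{\standpoints(\aformula_D)}\bigl(\card{\standpoints(\aformula_D)}+\size{\aformula_D}^2+1\bigr)=O(\size{\aformula}^3)$, which is polynomial.

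Finally I would check that $\amodel$ is a model of $\aformula$ itself and not merely of $\aformula_D$. Lemma~\ref{lemma-automaton-converse} yields $\amodel,\altlmodel_{\pair{S^{*}}{1}},0\models\aformula_D$, so $\amodel$ satisfies both conjuncts $\aformula[I^+\mapsto\top,I^-\mapsto\bot]$ and $\aformulater_D$. Since $\preceq$-statements are global, $\amodel\models\aformulater_D$ forces every inequality in $I^+$ to hold in $\amodel$ and every inequality in $I^-$ (witnessed by $\Diamond_{\astandpoint}\avarprop_{\astandpoint,\astandpoint'}\wedge\neg\Diamond_{\astandpoint'}\avarprop_{\astandpoint,\astandpoint'}$) to fail in $\amodel$; a routine substitution lemma then upgrades $\amodel,\altlmodel_{\pair{S^{*}}{1}},0\models\aformula[I^+\mapsto\top,I^-\mapsto\bot]$ to $\amodel,\altlmodel_{\pair{S^{*}}{1}},0\models\aformula$, over the very same trace set $\Pi$. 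I expect the only delicate point to be exactly this bookkeeping: that the guessed partition $D$ and the formula $\aformulater_D$ enlarge neither the number of standpoint symbols nor (beyond a polynomial) the overall formula size, so that the bound $\card{\mathbb{S}}\cdot N$ is polynomial in $\size{\aformula}$ rather than only in $\size{\aformula_D}$; everything else is a direct reading-off of constructions already established in Lemmas~\ref{avormulad}, \ref{lemma-automaton-converse} and Theorem~\ref{theorem-psl}.
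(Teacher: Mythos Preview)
Your proposal is correct and follows exactly the approach the paper intends: the corollary is stated without proof, merely as ``a corollary of the proof'' of the preceding lemmas, and your argument---counting the traces $\card{\mathbb{S}}\cdot N$ in the model built in Lemma~\ref{lemma-automaton-converse}, bounding this polynomially via $\size{\aformula_D}=O(\size{\aformula})$ and $\card{\standpoints(\aformula_D)}\le\card{\standpoints(\aformula)}$, and then recovering $\aformula$ from $\aformula_D$ using the global nature of $\preceq$-atoms---is precisely the intended unpacking. The bookkeeping you flag (that $\aformulater_D$ adds only fresh propositional variables, not standpoint symbols) is indeed the only point requiring care, and you handle it correctly.
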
%%

\section{Concluding Remarks}
\label{section-concluding-remarks}

We studied the computational properties
of standpoint linear temporal logic $\SLTL$. First, we proved that 
its satisfiability problem  is
\expspace-complete, contrary to the \pspace bound
claimed recently in~\cite[Theorem 28]{Gigante&GomezAlvarez&Lyon23}. 
To show this result, we  designed  reductions between $\SLTL$ and
the multi-dimensional modal logic $\PTLSfive$.
Furthermore, we proposed
 a fragment of $\SLTL$ which has 
 \pspace-complete satisfiability problem, as  $\LTL$~\cite{Gigante&GomezAlvarez&Lyon23}. 
This fragment  
disallows occurrences of 
temporal connectives in the scope of standpoint connectives; 
to show that the satisfiability problem for its formulae is in \pspace,  we followed the automata-based approach~\cite{Vardi&Wolper94}
(similar to the well-known technique for \LTL)
but for which correctness requires to prove a new  model-theoretic property about \PSL
(Theorem~\ref{theorem-psl}) that we find interesting for its own sake.
In future it would be also interesting to implement practical decision procedures for \SLTL and for the  \pspace fragment we have introduced,
apart from studying its expressive power.

%%%%%%%%%%%%%%%%%%%%%%%%%%%%%%%%%%%%%%%%%%%%%%%%%%%%%%%%%%%%%%%%%%%%%%%%

%%% Use this environment to include acknowledgements (optional).
%%% This will be omitted in doubleblind mode.

\begin{ack}

  We  thank the  referees for  suggestions that help
  us to improve the quality of the document. Special thanks to the referee who pointed
  us to~\cite{Barriereetal19}. 
Przemysław A Wałęga was supported by
the EPSRC projects OASIS (EP/S032347/1), ConCuR
(EP/V050869/1) and UK FIRES (EP/S019111/1), as well as SIRIUS Centre for Scalable Data Access and Samsung Research UK.
\end{ack}

%%%%%%%%%%%%%%%%%%%%%%%%%%%%%%%%%%%%%%%%%%%%%%%%%%%%%%%%%%%%%%%%%%%%%%%%

%%% Use this command to include your bibliography file.

%% \bibliography{biblio-sltl}

\newpage
\onecolumn
\appendix

%% \section*{Technical Appendix: Paper \#616 ``Computational Complexity of Standpoint LTL''}

\section*{Technical Appendix}

Due of lack of space, this technical appendix is dedicated to the proofs that could not included
in the body of the paper.

\section{Proof details for Section~\ref{section-expspace-completeness}}
\label{appendix-expspace-completeness} 

\subsection{Proof details for \Cref{reducetoSLTL}}
\label{appendix-reducetoSLTL}
\begin{proof}
%% [Proof details for \Cref{reducetoSLTL}]
In the main body of the paper, to show the first implication we defined a model $\amodel'$ and 
claimed that, by 
structural induction, we can show that for all subformulae $\aformulabis$ of $\aformula$ and
  for all $\pair{n}{w'} \in \Nat \times W$, we have
    $\amodel, \pair{n}{w'} \models \aformulabis$ iff 
    $\amodel', \altlmodel_{w'}, n \models \atranslation_1(\aformulabis)$.
We considered only one case of the induction, for a subformulae of the form $\Box \aformulabis$. Now,  we will present all the remaining cases.

First, we observe that the basis of the induction, when $\aformulabis$ is a propositional variable, follows directly from the definition of $\amodel'$.
The inductive steps for subformulae of the forms $\neg \aformulabis$ and $\aformulabis_1 \land \aformulabis_2$, follow directly from the inductive assumption.
Hence, it remains to consider the cases for subformulae of the forms $\Diamond \aformulabis$, $\mynext \aformulabis$, and $\aformulabis_1 \until \aformulabis_2$.
For $\Diamond \aformulabis$
it suffices to observe that the following statements are equivalent:
  \begin{itemize}
  \itemsep 0 cm 
  \item $\amodel, \pair{n}{w'} \models \Diamond \aformulabis$
  \item $\amodel, \pair{n}{w''} \models \aformulabis$ for some
    $w'' \in W$  \hfill 
     (by definition of $\models$)
     %for $\PTLSfive$)
  \item $\amodel', \altlmodel_{w''}, n \models
        \atranslation_1(\aformulabis)$ for some
        $w'' \in W$
        \hfill (by  induction hypothesis)
  \item $\amodel', \altlmodel', n \models
        \atranslation_1(\aformulabis)$ for some
        $\altlmodel' \in \Pi$
        \hfill (by definition of $\Pi$)
  \item $\amodel', \altlmodel_{w'}, n \models
        \Diamond_{*} \atranslation_1(\aformulabis)$
        \hfill (by definition of $\models$)% for $\SLTL$)
  \item $\amodel', \altlmodel_{w'}, n \models
        \atranslation_1(\Diamond \aformulabis)$
        \hfill (by definition of $\atranslation_1$)
  \end{itemize}
For the case $\mynext \aformulabis$, it suffices to observe that the statements below  are equivalent: 
  \begin{itemize}
  \item $\amodel, \pair{n}{w'} \models \mynext \aformulabis$
  \item $\amodel, \pair{n+1}{w'} \models \aformulabis$
    \hfill (by definition of $\models$)
  \item $\amodel', \altlmodel_{w'}, n+1 \models
        \atranslation_1(\aformulabis)$
        \hfill (by  induction hypothesis)
  \item $\amodel', \altlmodel_{w'}, n \models
        \mynext \atranslation_1(\aformulabis)$
        \hfill (by definition of $\models$)
  \item $\amodel', \altlmodel_{w'}, n \models
        \atranslation_1(\mynext \aformulabis)$
        \hfill (by definition of $\atranslation_1$)  
  \end{itemize}
Finally, for the case $\aformulabis_1 \until \aformulabis_2$, it 
suffices to observe that the statements below  are equivalent: 
  \begin{itemize}
  \item $\amodel, \pair{n}{w'} \models \aformulabis_1 \until \aformulabis_2$
  \item There is $n' \geq n$ such that
    $\amodel, \pair{n'}{w'} \models \aformulabis_2$
    and for all $n \leq n'' < n'$, we have
    $\amodel, \pair{n''}{w'} \models \aformulabis_1$
       \hfill (by definition of $\models$)
  \item There is $n' \geq n$ such that
    $\amodel', \altlmodel_{w'}, n'  \models \atranslation_1(\aformulabis_2)$
    and for all $n \leq n'' < n'$, we have
    $\amodel, \altlmodel_{w'}, n''  \models \atranslation_1(\aformulabis_1)$
    \hfill (by induction hypothesis)
  \item $\amodel', \altlmodel_{w'}, n \models
         \atranslation_1(\aformulabis_1) \until \atranslation_1(\aformulabis_2)$
        \hfill (by definition of $\models$)
  \item $\amodel', \altlmodel_{w'}, n \models
        \atranslation_1(\aformulabis_1 \until \aformulabis_2)$
        \hfill (by definition of $\atranslation_1$)  
  \end{itemize}
Therefore, we obtain that $\amodel, \altlmodel_w,0 \models \atranslation_1(\aformula)$.

For the second implication in the lemma we 
claimed that we can show, by structural induction, that  for all subformulae $\aformulabis$ of $\aformula$,
   all $\altlmodel' \in \Pi$, and  all $n \in \Nat$, we have
%  \begin{center}
    $\amodel, \altlmodel', n  \models \atranslation_1(\aformulabis)$  iff 
    $\amodel', \pair{n}{\altlmodel'} \models \aformulabis$.  
However, we showed only how to handle
the case with formulas $\aformulabis_1 \until \aformulabis_2$. 
Now we will cover the remaining cases.

The basis of the induction, when $\aformulabis$ is a propositional variable, follows directly from the definition of $\amodel'$ and the inductive steps for subformulae of the forms $\neg \aformulabis$ and $\aformulabis_1 \land \aformulabis_2$, follow directly from the inductive assumption.
Since the inductive step for $\aformulabis_1 \until \aformulabis_2$ is shown in the main body of the paper, it remains to consider the cases for subformulae of the forms $\Diamond \aformulabis$, $\Box \aformulabis$, and $\mynext \aformulabis$.
For $\Diamond \aformulabis$
it suffices to observe that the following statements are equivalent:
\begin{itemize}
  \item $\amodel, \altlmodel', n  \models
    \atranslation_1(\Diamond \aformulabis)$
  \item $\amodel, \altlmodel', n \models
        \Diamond_{*} \atranslation_1(\aformulabis)$
        \hfill (by definition of $\atranslation_1$)
  \item $\amodel, \altlmodel'', n \models
        \atranslation_1(\aformulabis)$ for some
        $\altlmodel'' \in \Pi$
        \hfill (by definition of $\models$)
  \item $\amodel', (n, \altlmodel'') \models
        \aformulabis$ for some
        $\altlmodel'' \in W$
        \hfill (by  induction hypothesis)
  \item $\amodel', (n,\altlmodel') \models
        \Diamond \aformulabis$ 
        \hfill (by definition of $\models$)
\end{itemize}          
For $\Box  \aformulabis$ we observe that the following statements are equivalent: 
\begin{itemize}
  \item $\amodel, \altlmodel', n  \models
    \atranslation_1(\Box \aformulabis)$
  \item $\amodel, \altlmodel', n \models
        \Box_{*} \atranslation_1(\aformulabis)$
        \hfill (by definition of $\atranslation_1$)
  \item $\amodel, \altlmodel'', n \models
        \atranslation_1(\aformulabis)$ for all
        $\altlmodel'' \in \Pi$
        \hfill (by definition of $\models$)
  \item $\amodel', (n, \altlmodel'') \models
        \aformulabis$ for all
        $\altlmodel'' \in W$
        \hfill (by  induction hypothesis)
  \item $\amodel', (n,\altlmodel') \models
        \Box \aformulabis$ 
        \hfill (by definition of $\models$)
\end{itemize}
Finally, for $\mynext \aformulabis$ the following are equivalent:
\begin{itemize}
  \item $\amodel, \altlmodel', n  \models
    \atranslation_1(\mynext \aformulabis)$
  \item $\amodel, \altlmodel', n \models
        \mynext \atranslation_1(\aformulabis)$
        \hfill (by definition of $\atranslation_1$)
  \item $\amodel, \altlmodel', n+1 \models
        \atranslation_1(\aformulabis)$ 
        \hfill (by definition of $\models$)
  \item $\amodel', (n+1, \altlmodel') \models
        \aformulabis$ 
        \hfill (by  induction hypothesis)
  \item $\amodel', (n,\altlmodel') \models
        \mynext \aformulabis$ 
        \hfill (by definition of $\models$)
\end{itemize}
Consequently, $\amodel', \pair{0}{\altlmodel} \models \aformula$.    
\end{proof}

\subsection{Proof details for \Cref{reducefromSLTL}}
\label{appendix-reducefromSLTL}
\begin{proof}
%% [Proof details for \Cref{reducefromSLTL}]
In the proof sketch, to show the first implication, we claim that by structural induction we can show that for all subformulae
  $\aformulabis$ of $\aformula$,
   all $\altlmodel' \in \Pi$, and all $j \in \Nat$, we have
    $\amodel, \altlmodel', j  \models \aformulabis$  iff 
    $\amodel', \pair{j}{\altlmodel'} \models \atranslation_2(\aformulabis)$.  
Indeed, if $\aformulabis$ is a propositional variable, then the equivalence holds by the definition of $\amodel'$.
If $\aformulabis$ is of the form $\astandpoint \preceq \astandpoint'$, then we observe that the following are equivalent:
  \begin{itemize}
  \item $\amodel, \altlmodel', j  \models \astandpoint \preceq \astandpoint'$
  
\item $\lambda(\astandpoint) \subseteq \lambda(\astandpoint')$  \hfill (by definition of $\models$)

  \item $\altlmodel'' \in \lambda(\astandpoint)$ implies
    $\altlmodel'' \in \lambda(\astandpoint')$  for all $\altlmodel''
    \in \Pi$
    \hfill (by set-theoretical reasoning)
    
  \item $\astandpoint \in L( (j,\altlmodel'') )$ implies
    $\astandpoint' \in L( (j,\altlmodel'') )$  for all $\altlmodel'' \in W$
    \hfill (by definition of $\amodel'$)
  \item $\amodel', \pair{j}{\altlmodel''} \models \astandpoint \implication \astandpoint'$
    for all $\altlmodel'' \in W$
    \hfill (by definition of $\models$)

 \item $\amodel', \pair{j}{\altlmodel'} \models
    \Box(\astandpoint \implication \astandpoint')$
    \hfill (by definition of $\atranslation_2$)

  \item $\amodel', \pair{j}{\altlmodel'} \models
         \atranslation_2(\astandpoint \preceq \astandpoint')$
  \end{itemize} 
Hence, the  basis of the induction holds. 

The inductive steps for formulae $\neg \aformulabis$ and $\aformulabis_1 \land \aformulabis_2$, hold directly by the inductive assumption.
It remains to consider inductive steps for formulae of the forms 
$\Diamond_{*} \aformulabis$, $\Box{*} \aformulabis$,
$\Diamond_{\astandpoint} \aformulabis$, and $\Box_{\astandpoint} \aformulabis$.
For formulae of the form $\Diamond_{\astandpoint} \aformulabis$
the equivalence holds because the statements below are equivalent:
  \begin{itemize}
  \item $\amodel, \altlmodel', j  \models \Diamond_{\astandpoint}
    \aformulabis$
  \item $\amodel,\altlmodel'', j  \models \aformulabis$ for some
    $\altlmodel'' \in \lambda(\astandpoint)$
    \hfill (by definition of $\models$)
  \item $\amodel', \pair{j}{\altlmodel''} \models
    \atranslation_2(\aformulabis)$ and
    $\amodel', \pair{j}{\altlmodel''} \models \astandpoint$ for some
    $\altlmodel'' \in W$
    \hfill \quad  (by  induction hypothesis and $\astandpoint \in
    L(\pair{n}{\altlmodel''})$, respectively)
  \item $\amodel', \pair{j}{\altlmodel'} \models
    \Diamond(\astandpoint \wedge \atranslation_2(\aformulabis))$
    \hfill (by definition of $\models$)
  \item  $\amodel', \pair{j}{\altlmodel'} \models
    \atranslation_2(\Diamond_{\astandpoint} (\aformulabis))$
    \hfill (by definition of $\atranslation_2$)
  \end{itemize} 
For formulae of the form $\Box_{\astandpoint} \aformulabis$
the equivalence holds since the following are equivalent:
  \begin{itemize}
  \item $\amodel, \altlmodel', j  \models \Box_{\astandpoint}
    \aformulabis$
  \item $\amodel,\altlmodel'', j  \models \aformulabis$ for all
    $\altlmodel'' \in \lambda(\astandpoint)$
    \hfill (by definition of $\models$)
  \item$\amodel', \pair{j}{\altlmodel''} \models \astandpoint$ implies  $\amodel', \pair{j}{\altlmodel''} \models
    \atranslation_2(\aformulabis)$
     for all
    $\altlmodel'' \in W$
    \hfill \quad  (by  induction hypothesis and $\astandpoint \in
    L(\pair{n}{\altlmodel''})$)
  \item $\amodel', \pair{j}{\altlmodel'} \models
    \Box(\astandpoint \to \atranslation_2(\aformulabis))$
    \hfill (by definition of $\models$)
  \item  $\amodel', \pair{j}{\altlmodel'} \models
    \atranslation_2(\Box_{\astandpoint} (\aformulabis))$
    \hfill (by definition of $\atranslation_2$)
  \end{itemize} 
The cases for $\Diamond_{*} \aformulabis$ and $\Box_{*} \aformulabis$ are analogous (and even simpler) than the above showed cases for
$\Diamond_{\astandpoint} \aformulabis$ and $\Box_{\astandpoint} \aformulabis$, respectively.
  Consequently, $\amodel', \pair{0}{\altlmodel} \models
  \aformulater_n \wedge \atranslation_2(\aformula)$.

For the second implication,
we claimed in the proof sketch that,  by structural induction, one can show that for all subformulae $\aformulabis$ of $\aformula$ and
  for all $\pair{j}{w'} \in \Nat \times W$, we have
    $\amodel, \pair{j}{w'} \models \atranslation_2(\aformulabis)$   iff  
    $\amodel', \altlmodel_{w'}, j \models \aformulabis$.  
If $\aformulabis$ is a propositional variable, then the equivalence holds by the definition of $\amodel'$.
If $\aformulabis$ is of the form $\astandpoint \preceq \astandpoint'$, then  we observe that the statements below are equivalent:
  \begin{itemize}
  \item $\amodel, \pair{j}{w'} \models
         \atranslation_2(\astandpoint \preceq \astandpoint')$
  \item $\amodel, \pair{j}{w'} \models
    \Box(\astandpoint \implication \astandpoint')$
    \hfill (by definition of $\atranslation_2$)
  \item $\amodel, \pair{j}{w''} \models \astandpoint \implication \astandpoint'$
    for all $w'' \in W$
    \hfill (by definition of $\models$)
  \item $\amodel, \pair{0}{w''} \models \always \astandpoint \implication
    \always \astandpoint'$
    for all $w'' \in W$
    %\hspace{0.1in} 
    \hfill (by  $\amodel, \pair{0}{w} \models \aformulater_n$)
  \item $\altlmodel_{w''} \in \lambda(\astandpoint)$ implies
    $\altlmodel_{w''} \in \lambda(\astandpoint')$  for all $w'' \in W$
    \hfill (by definition of $\lambda$)
  \item $\altlmodel' \in \lambda(\astandpoint)$ implies
    $\altlmodel' \in \lambda(\astandpoint')$  for all $\altlmodel'
    \in \Pi$
    \hfill (by definition of $\Pi$)
  \item $\lambda(\astandpoint) \subseteq \lambda(\astandpoint')$
    \hfill (by set-theoretical reasoning)
    \item $\amodel', \altlmodel_{w'}, j \models  \astandpoint \preceq \astandpoint'$ \hfill (by definition of $\models$)
  \end{itemize}  
Therefore, the  basis of the induction holds. 

The inductive steps for formulae $\neg \aformulabis$ and $\aformulabis_1 \land \aformulabis_2$, hold directly by the inductive assumption.
It remains to consider inductive steps for formulae of the forms 
$\Diamond_{*} \aformulabis$, $\Box{*} \aformulabis$,
$\Diamond_{\astandpoint} \aformulabis$, and $\Box_{\astandpoint} \aformulabis$.
For formulae of the form $\Diamond_{\astandpoint} \aformulabis$
the equivalence holds because the statements below are equivalent:
  \begin{itemize}
  \item $\amodel, \pair{j}{w'} \models    \atranslation_2(\Diamond_{\astandpoint} \aformulabis)$
  \item $\amodel, \pair{j}{w'} \models
    \Diamond(\astandpoint \wedge \atranslation_2(\aformulabis))$
 \hfill (by definition of $\atranslation_2$)
  \item $\amodel, \pair{j}{w''} \models
    \atranslation_2(\aformulabis)$ and
    $\amodel, \pair{j}{w''} \models \astandpoint$ for some
    $w'' \in W$
    \hfill (by definition of $\models$)
  \item $\amodel, \pair{j}{w''} \models
    \atranslation_2(\aformulabis)$ and
    $\amodel, \pair{0}{w''} \models \always \astandpoint$ for some
    $w'' \in W$
    \hfill (by  $\amodel, \pair{0}{w} \models \aformulater_n$)  
  \item $\amodel',\altlmodel_{w''}, j  \models \aformulabis$ for some
    $\altlmodel_{w''} \in \lambda(\astandpoint)$
    \hfill \quad  (by  induction hypothesis and  definition of $\lambda$, respectively)
  \item $\amodel', \altlmodel_{w'}, j  \models \Diamond_{\astandpoint}
    \aformulabis$
    \hfill (by definition of $\models$)
  \end{itemize}  
For formulae of the form $\Box_{\astandpoint} \aformulabis$
the equivalence holds since the following are equivalent:
  \begin{itemize}
  \item $\amodel, \pair{j}{w'} \models    \atranslation_2(\Box_{\astandpoint} \aformulabis)$
  \item $\amodel, \pair{j}{w'} \models
    \Box(\astandpoint \to \atranslation_2(\aformulabis))$
 \hfill (by definition of $\atranslation_2$)
  \item $\amodel, \pair{j}{w''} \models \astandpoint$ implies $\amodel, \pair{j}{w''} \models
    \atranslation_2(\aformulabis)$  for all 
    $w'' \in W$
    \hfill (by definition of $\models$)
  \item $\amodel, \pair{0}{w''} \models \always \astandpoint$ implies $\amodel, \pair{j}{w''} \models
    \atranslation_2(\aformulabis)$ 
     for all
    $w'' \in W$
    \hfill (by  $\amodel, \pair{0}{w} \models \aformulater_n$)  
  \item $\amodel',\altlmodel_{w''}, j  \models \aformulabis$ for all
    $\altlmodel_{w''} \in \lambda(\astandpoint)$
    \hfill \quad  (by  induction hypothesis and  definition of $\lambda$, respectively)
  \item $\amodel', \altlmodel_{w'}, j  \models \Box_{\astandpoint}
    \aformulabis$
    \hfill (by definition of $\models$)
  \end{itemize}  
The cases for $\Diamond_{*} \aformulabis$ and $\Box_{*} \aformulabis$ are analogous (and even simpler) than the above showed cases for
$\Diamond_{\astandpoint} \aformulabis$ and $\Box_{\astandpoint} \aformulabis$, respectively.
\end{proof}

\section{Proof details for Lemma~\ref{avormulad}}
\begin{proof}
Let us fix an $\LTLPSL$ formula $\aformula$ and let $I$  be the set of all subformulae of $\aformula$ of the form $\astandpoint
\preceq \astandpoint'$.
To prove the lemma it suffices to observe that the following statements are equivalent:
  \begin{itemize}
  \item $\aformula$ is satisfiable
  \item $\amodel, \altlmodel, 0 \models \aformula$ for some  $\amodel=\pair{\Pi}{\lambda}$ and  $\altlmodel \in \Pi$
   \hfill (by definition of satisfiability)
  \item $ \amodel, \altlmodel, 0 \models 
\aformula[I^+ \mapsto \top,I^- \mapsto \bot]
\wedge \left( \bigwedge_{(\astandpoint \preceq \astandpoint') \in I^+} ( \astandpoint \preceq \astandpoint') 
\land 
\bigwedge_{(\astandpoint \preceq \astandpoint') \in I^-}  ( \neg \astandpoint \preceq \astandpoint') \right) 
$ for some $\amodel=\pair{\Pi}{\lambda}$,   $\altlmodel \in \Pi$, and partition $(I^+,I^-)$ of $I$ 
\hfill (since each $\astandpoint \preceq \astandpoint' \in I$ holds in some $\altlmodel', i$ iff it holds in all $\altlmodel', i$)

\item
$ \amodel, \altlmodel, 0 \models 
\aformula[I^+ \mapsto \top,I^- \mapsto \bot]
\wedge \aformulater_D$ for some $\amodel=\pair{\Pi}{\lambda}$,   $\altlmodel \in \Pi$, and partition $D$ of $I$  \hfill (by definition of $\aformulater_D$ and  definition of an $\SLTL$ model)

\item $\amodel, \altlmodel, 0 \models \aformula_D$ for some  $\amodel=\pair{\Pi}{\lambda}$,  $\altlmodel \in \Pi$, and partition $D$ of $I$
 \hfill (by definition of $\aformula_D$)

\item $\aformula_D$ is satisfiable for some partition $D$ of $I$
   \hfill (by definition of satisfiability)
  \end{itemize}  

\end{proof}

\section{Proof details for Theorem~\ref{theorem-psl}}
\label{appendix-theorem-psl}

In Section~\ref{section-preliminaries}, a model for \PSL is defined as a pair
$\amodel = \pair{\precisis}{V}$ where $\precisis$ is a finite non-empty set of precisifications
and $V: \standpoints \cup \varprop \to \powerset{\precisis}$ (understood as a valuation).
Below, we adopt an alternative formulation (known to be equivalent when passing from valuations
to labellings) that is more practical to write a few expressions below.
We use models of the form $\pair{\precisis}{L}$ where
$L: \precisis \longrightarrow \powerset{\standpoints \cup \varprop}$ is a \defstyle{labelling}
such that
for all $\astandpoint \in \standpoints$, we have $\astandpoint \in \bigcup_{\aprecisi \in \precisis}
L(\aprecisi)$ and $\set{\aprecisi \in \precisis \mid * \in L(\aprecisi)} = \precisis$
(counterpart of $V(\astandpoint) \neq \emptyset$ for all $\astandpoint$
and $V(\universalstandpoint) = \precisis$).

\begin{proof}
Suppose that $\aformula_1 \land \aformula_2$ is satisfiable. Consequently, there is a $\PSL$ model
$\amodel^{\dag} = \pair{\precisis^{\dag}}{L^{\dag}}$ and $\aprecisi^{\dag} \in \precisis^{\dag}$ such that
\[
\amodel^{\dag}, \aprecisi^{\dag} \models \big( \bigwedge_{\astandpoint \in
 \standpoints(\aformula_1 \wedge \aformula_2)} \Diamond_{\astandpoint} \top \big)
\wedge \aformula_1 \wedge \aformula_2.
\]
The satisfaction of
$\amodel^{\dag}, \aprecisi^{\dag} \models \big( \bigwedge_{\astandpoint \in
  \standpoints(\aformula_1 \wedge \aformula_2)} \Diamond_{\astandpoint} \top \big)$
is due to the fact that for all $\astandpoint \in \standpoints$,
there is $\aprecisi_{\astandpoint} \in \precisis$
such that 
$\astandpoint \in L^{\dag}(\aprecisi_{\astandpoint})$.

We perform
four transformations to modify
$\amodel^{\dag},\aprecisi^{\dag}$
into $\amodel^{\star}, \aprecisi^{\star}$
satisfying Statements 1--3 from the theorem, where
$\aprecisi^{\star}= \pair{S^{*}}{1}$. The steps are schematised below.
\begin{center}
$\amodel^{\dag}, \aprecisi^{\dag}
\step{\bf Elect}
\amodel, \aprecisi
\step{\bf Select}
\amodel^f, \aprecisi^f
\step{\bf Normalise}
\amodel^{fn}, \aprecisi^{fn}
\step{\bf Populate}
\amodel^{\star}, \aprecisi^{\star}$.
\end{center}
Let us describe briefly the objectives of each reduction.
\begin{description}
\item[(Elect)] The goal of transforming $\amodel^{\dag}, \aprecisi^{\dag}$ into
  $\amodel, \aprecisi$ is to guarantee that the standpoint symbols labelling $\aprecisi$
  are exactly those in $S^{*}$. At most one precisification is added to  $\amodel^{\dag}$
  to obtain $\amodel$.
  %% (see condition (2.)). 
\item[(Select)] The construction of $\amodel^f, \aprecisi^f$ from $\amodel, \aprecisi$ amounts
  to select a {\em finite} subset of precisifications from $\amodel$ to witness the satisfaction
  of all the $\Diamond$-formulae. This step is analogous to the way the
  small model property is shown for the modal logic $\Sfive$. 
\item[(Normalise)] The construction of $\amodel^{fn}, \aprecisi^{fn}$ from $\amodel^f, \aprecisi^f$
  guarantees that for any precisification $\aprecisi$ in $\amodel^{fn}$,
  the set of standpoint symbols whose valuation contains $\aprecisi$,
  belongs to $\mathbb{S}$.
  To do so, precisifications
  are  ``copied'', preserving the satisfaction of propositional variables
  but possibly updating the valuation of standpoints.
  %% symbols.
\item[(Populate)] This step to get  $\amodel^{\star}, \aprecisi^{\star}$ from
  $\amodel^{fn}, \aprecisi^{fn}$ consists in ``copying'' precisifications,
  so that the set of precisifications can be identified with $\mathbb{S} \times \interval{1}{N}$.
\end{description}
Here are the details of the transformations as well as the formal justifications.

(Elect) We perform a preliminary step related to $S^{*}$ so that
the standpoint symbols holding at the witness precisification are precisely those in $S^{*}$
(in the process of transforming the model, $\aprecisi^{\dag}$ becomes $\aprecisi$).
In the case $L^{\dag}(\aprecisi^{\dag}) \cap \standpoints(\aformula_1 \wedge \aformula_2) \neq S^{*}$
(necessarily $S^* \subseteq L^{\dag}(\aprecisi^{\dag})$ because $\amodel^{\dag}, \aprecisi^{\dag} \models
\aformula_1$),
we build a new model $\amodel = \pair{\precisis}{L}$ from $\amodel^{\dag}$
with a new precisification $\aprecisi$
(i.e. $\precisis \egdef \precisis^{\dag} \uplus \set{\aprecisi}$) such that
  $L(\aprecisi) \egdef S^{*} \cup L^{\dag}(\aprecisi^{\dag}) \cap \varprop(\aprecisi^{\dag})$
  ($\aprecisi$ and $\aprecisi^{\dag}$ agree on the propositional variables)
  and $L$ and $L^{\dag}$ agree on $\precisis^{\dag}$. If
  $L^{\dag}(\aprecisi^{\dag}) \cap \standpoints(\aformula_1 \wedge \aformula_2) = S^{*}$,
  then $\amodel$ is equal to $\amodel^{\dag}$, and $\aprecisi$ to $\aprecisi^{\dag}$.
  For all subformulae $\aformulabis$
  of $\big( \bigwedge_{\astandpoint \in \standpoints(\aformula_1 \wedge \aformula_2)} \Diamond_{\astandpoint} \top \big)
  \wedge \aformula_1 \wedge \aformula_2$,
  for all $\aprecisi' \in \precisis^{\dag}$, we have
  $\amodel^{\dag}, \aprecisi' \models \aformulabis$ implies
  $\amodel, \aprecisi' \models \aformulabis$, and
  $\amodel^{\dag}, \aprecisi^{\dag} \models \aformulabis$ implies
  $\amodel, \aprecisi \models \aformulabis$.
  Consequently, $\amodel, \aprecisi \models
  \big( \bigwedge_{\astandpoint \in \standpoints(\aformula_1 \wedge \aformula_2)} \Diamond_{\astandpoint} \top \big) \wedge \aformula_1 \wedge \aformula_2$. Indeed, let us check carefully each conjunct.
  \begin{itemize}
  \item $\amodel, \aprecisi \models
    \big( \bigwedge_{\astandpoint \in \standpoints(\aformula_1 \wedge \aformula_2)}
    \Diamond_{\astandpoint} \top \big)$ because
    for all $\astandpoint \in \standpoints$,
there is $\aprecisi_{\astandpoint} \in \precisis^{\dag} \subseteq \precisis$
such that 
$\astandpoint \in L(\aprecisi_{\astandpoint}) =  L^{\dag}(\aprecisi_{\astandpoint})$.
\item For all $\astandpoint \preceq \astandpoint'$ in $\aformula_1$,
  for all $\aprecisi' \in \precisis^{\dag}$, we have
  $\astandpoint \in L(\aprecisi')$ implies
  $\astandpoint \in L^{\dag}(\aprecisi')$ ($L(\aprecisi') = L^{\dag}(\aprecisi')$),
  $\astandpoint' \in L^{\dag}(\aprecisi')$ (by $\amodel^{\dag}, \aprecisi^{\dag} \models \aformula_1$),
  $\astandpoint' \in L(\aprecisi')$ ($L(\aprecisi') = L^{\dag}(\aprecisi')$).
  Similarly,
  for all $\astandpoint \preceq \astandpoint'$ in $\aformula_1$,
  $\astandpoint \in L(\aprecisi)$ (for the unique $\aprecisi \in \precisis \setminus \precisis^{\dag}$
  if any) implies
  $\astandpoint \in S^*$ by definition of $L(\aprecisi)$.
  Consequently, there is a sequence of relations
  $\universalstandpoint \preceq \astandpoint_1,
  \ldots, \astandpoint_{n-1} \preceq \astandpoint_{n}$ in $\aformula_1$ with
  $\astandpoint_n = \astandpoint$.
  Hence, $\astandpoint' \in S^*$ too and therefore
  by definition of $L(\aprecisi)$, we also get
  $\astandpoint' \in L(\aprecisi)$.
  Hence, $\amodel, \aprecisi \models \aformula_1$. 
  
\item In order to show $\amodel, \aprecisi \models \aformula_2$, we need to show that
      for all subformulae $\aformulabis$
  of $\aformula_2$,
  for all $\aprecisi' \in \precisis^{\dag}$, we have
  $\amodel^{\dag}, \aprecisi' \models \aformulabis$ implies
  $\amodel, \aprecisi' \models \aformulabis$, and
  $\amodel^{\dag}, \aprecisi^{\dag} \models \aformulabis$ implies
  $\amodel, \aprecisi \models \aformulabis$.
  This can be done by structural induction. The base case with literals
  is easy as well as the cases in the induction step with the Boolean connectives
  $\vee$ and $\wedge$. By way of example,
  here are a few more cases.
  \begin{itemize}
  \item Suppose that $\amodel^{\dag}, \aprecisi' \models \Diamond_{\astandpoint} \aformulabis$.
    There is $\aprecisi'' \in \precisis^{\dag}$ such that
    $\astandpoint \in L^{\dag}(\aprecisi'')$ and
    $\amodel^{\dag}, \aprecisi'' \models  \aformulabis$.
    By definition of $L$ and by the induction hypothesis, we get
     $\astandpoint \in L(\aprecisi'')$ and
    $\amodel, \aprecisi'' \models  \aformulabis$.
    Consequently,  $\amodel, \aprecisi' \models \Diamond_{\astandpoint} \aformulabis$.

  \item Suppose that $\amodel^{\dag}, \aprecisi' \models \Box_{\astandpoint} \aformulabis$.
  For all $\aprecisi'' \in \precisis^{\dag}$ such that
    $\astandpoint \in L^{\dag}(\aprecisi'')$, we have 
  $\amodel^{\dag}, \aprecisi'' \models  \aformulabis$.
  Hence, by the induction hypothesis and by definition of $L$,
  for all $\aprecisi'' \in \precisis^{\dag}$ such that
    $\astandpoint \in L(\aprecisi'')$, we have 
  $\amodel, \aprecisi'' \models  \aformulabis$.
  {\em Ad absurdum}, suppose that
  $\astandpoint \in L(\aprecisi)$ and
  $\amodel, \aprecisi \not \models  \aformulabis$.
  By the induction hypothesis,
  $\amodel^{\dag}, \aprecisi^{\dag} \not \models  \aformulabis$.
  Since $S^* = L(\aprecisi) \cap \standpoints(\aformula_1 \wedge \aformula_2)
  \subseteq L^{\dag}(\aprecisi^{\dag})$,
  we get $\amodel^{\dag}, \aprecisi^{\dag} \not \models  \aformulabis$
  and $\astandpoint \in L^{\dag}(\aprecisi^{\dag})$, which leads to a contradiction
  with  $\amodel^{\dag}, \aprecisi' \models \Box_{\astandpoint} \aformulabis$.
  Hence, for all $\aprecisi'' \in \precisis$ such that
    $\astandpoint \in L(\aprecisi'')$, we have 
  $\amodel, \aprecisi'' \models  \aformulabis$.
    
  \end{itemize}
  \end{itemize}

(Select)  Now, we build a finite model $\amodel^f = \pair{\precisis^f}{L^f}$ from $\amodel$
  such that $\card{\precisis^f} \leq N_1 + N_2 + 1$ (and therefore $\precisis^f$ is finite, that is why we use
  the superscript `$f$') and $\precisis^f \subseteq \precisis$,
  $\aprecisi \in \precisis^f$ and
  $\amodel^f, \aprecisi \models
  \big( \bigwedge_{\astandpoint \in \standpoints(\aformula_1 \wedge \aformula_2)} \Diamond_{\astandpoint} \top \big)
  \wedge \aformula_1 \wedge \aformula_2$.
  In a way, in order to define $\precisis^f$, we identify a relevant set of witness precisifications from $\precisis$. 
  Before providing the construction, note that for any subformula $\Diamond_{\astandpoint}
  \aformulabis$ in
  $\big( \bigwedge_{\astandpoint \in \standpoints(\aformula_1 \wedge \aformula_2)} \Diamond_{\astandpoint} \top \big)
  \wedge \aformula_1 \wedge \aformula_2$, the propositions below are equivalent:
  \begin{itemize}
  \item for some $\aprecisi' \in \precisis$, we have
        $\amodel, \aprecisi' \models \Diamond_{\astandpoint}
    \aformulabis$,
  \item for all $\aprecisi' \in \precisis$, we have
        $\amodel, \aprecisi' \models \Diamond_{\astandpoint}
    \aformulabis$.
  \end{itemize}
  Indeed, $\Diamond_{\astandpoint} \aformulabis$ can be understood as a global statement about $\amodel$
  and does not depend on the precisification $\aprecisi'$ where it is evaluated.
  A similar property holds holds true for the subformulae of the form $\Box_{\astandpoint}$. 
  Let $\aset$ the set of subformulae of $\big( \bigwedge_{\astandpoint \in \standpoints(\aformula_1 \wedge \aformula_2)} \Diamond_{\astandpoint} \top \big)
  \wedge \aformula_1 \wedge \aformula_2$
  of the form
  $\Diamond_{\astandpoint} \aformulabis$ such that $\amodel, \aprecisi \models \Diamond_{\astandpoint} \aformulabis$.
  Remember that $\aformula_2$ is in negation normal form. Furthermore, the conjunct
  $\big( \bigwedge_{\astandpoint \in \standpoints(\aformula_1 \wedge \aformula_2)} \Diamond_{\astandpoint} \top \big)$ makes
  explicit the fact that each standpoint symbol $\astandpoint$ has at least one precisification. 
  For each formula $\Diamond_{\astandpoint} \aformulabis$ in $\aset$, we pick a precisification in $\amodel$,
  say $\aprecisi_{\Diamond_{\astandpoint} \aformulabis} \in \precisis$ such that
  $\astandpoint \in L(\aprecisi_{\Diamond_{\astandpoint} \aformulabis})$
  and $\amodel, \aprecisi_{\Diamond_{\astandpoint} \aformulabis} \models \aformulabis$.
  The set $\precisis^f$ is defined as the set below:
  \[
  \set{\aprecisi} \cup
  \set{\aprecisi_{\Diamond_{\astandpoint} \aformulabis} \mid \Diamond_{\astandpoint} \aformulabis \in \aset}.
  \]
      Note that $\card{\precisis^f} \leq N_1 + N_2 + 1$.
      We write $L^f$ to denote the restriction of $L$ to the set $\precisis^f$.
      One can show by structural induction that
      for all $\aprecisi' \in \precisis^f$, for all subformulae $\aformulabis$
      of $\big( \bigwedge_{\astandpoint \in \standpoints(\aformula_1 \wedge \aformula_2)} \Diamond_{\astandpoint} \top \big)
      \wedge \aformula_1 \wedge \aformula_2$, we have
      $\amodel, \aprecisi' \models \aformulabis$
      implies
      $\amodel^f, \aprecisi' \models \aformulabis$.
      Consequently, $\amodel^f, \aprecisi \models
      \big( \bigwedge_{\astandpoint \in \standpoints(\aformula_1 \wedge \aformula_2)} \Diamond_{\astandpoint} \top \big)
      \wedge \aformula_1 \wedge \aformula_2$.
      Indeed, let us check carefully each conjunct.
      \begin{itemize}
        
      \item $\amodel^f, \aprecisi \models
    \big( \bigwedge_{\astandpoint \in \standpoints(\aformula_1 \wedge \aformula_2)}
    \Diamond_{\astandpoint} \top \big)$ because
    for all $\astandpoint \in \standpoints$,
     there is $\aprecisi_{\Diamond_{\astandpoint} \top} \in \precisis^f \subseteq \precisis$
such that 
$\astandpoint \in L^f(\aprecisi_{\Diamond_{\astandpoint} \top}) =  L(\aprecisi_{\Diamond_{\astandpoint} \top})$.

\item For all $\astandpoint \preceq \astandpoint'$ in $\aformula_1$,
  for all $\aprecisi' \in \precisis^{f}$, we have
  $\astandpoint \in L^f(\aprecisi')$ implies
  $\astandpoint \in L(\aprecisi')$ ($L^f(\aprecisi') = L(\aprecisi')$),
  $\astandpoint' \in L(\aprecisi')$ (by $\amodel, \aprecisi \models \aformula_1$),
  $\astandpoint' \in L^f(\aprecisi')$ ($L^f(\aprecisi') = L(\aprecisi')$).
  
\item
In order to show $\amodel^f, \aprecisi \models \aformula_2$, we need to show that
      for all subformulae $\aformulabis$
  of $\aformula_2$,
  for all $\aprecisi' \in \precisis^{f}$, we have
  $\amodel, \aprecisi' \models \aformulabis$ implies
  $\amodel^f, \aprecisi' \models \aformulabis$. 
  This can be done by structural induction. The base case with literals
  is easy as well as the cases in the induction step with
  the Boolean connectives $\vee$ and $\wedge$. By way of example,
  here are a few more cases.
  \begin{itemize}
  \item Suppose that $\amodel, \aprecisi' \models \Diamond_{\astandpoint} \aformulabis$.
    There is $\aprecisi_{\Diamond_{\astandpoint} \aformulabis} \in \precisis$ such that
    $\astandpoint \in L(\aprecisi_{\Diamond_{\astandpoint} \aformulabis})$ and
    $\amodel, \aprecisi_{\Diamond_{\astandpoint} \aformulabis} \models  \aformulabis$.
    By definition of $L^f$ and by the induction hypothesis, we get
     $\astandpoint \in L^f(\aprecisi_{\Diamond_{\astandpoint} \aformulabis})$ and
    $\amodel^f, \aprecisi_{\Diamond_{\astandpoint} \aformulabis} \models  \aformulabis$.
    Consequently,  $\amodel^f, \aprecisi' \models \Diamond_{\astandpoint} \aformulabis$.

  \item Suppose that $\amodel, \aprecisi' \models \Box_{\astandpoint} \aformulabis$.
    For all $\aprecisi'' \in \precisis$ such that $\astandpoint \in
    L(\aprecisi'')$, we have $\amodel, \aprecisi'' \models  \aformulabis$.
    By the induction hypothesis and since $\precisis^f \subseteq \precisis$,
    for all $\aprecisi'' \in \precisis^f$ such that $\astandpoint \in
    L^f(\aprecisi'')$, we have $\amodel^f, \aprecisi'' \models  \aformulabis$.
    Consequently, $\amodel^f, \aprecisi' \models \Box_{\astandpoint} \aformulabis$.
  \end{itemize}
      \end{itemize}

      (Normalise) From $\amodel^f$, we build a larger model $\amodel^{fn} = \pair{\precisis^{fn}}{L^{fn}}$ towards
      the satisfaction of the condition (3.) (the superscript `$fn$' is intended to mean `finite and normalised').
      Given $\aprecisi' \in \precisis^f$, there is $S \in \mathbb{S}$
      such that $S \subseteq L^f(\aprecisi')$ (by satisfaction of $\aformula_1$ in $\amodel^f$).
      Indeed, we recall that $\mathbb{S} =
     \set{
     R(\astandpoint) 
      \mid \astandpoint \in \standpoints(\aformula_1 \land \aformula_2)}$
      and for all $\aprecisi' \in \precisis^f$, $L^{f}(\aprecisi') \neq \emptyset$.
      Since $\amodel^f, \aprecisi \models \aformula_1$, if
      $\astandpoint \in L^{f}(\aprecisi')$, then
      $\mathbb{S} \ni R(\astandpoint) \subseteq L^{f}(\aprecisi')$. 
      If $S =  L^f(\aprecisi') \cap \standpoints(\aformula_1 \wedge \aformula_2)$, then we keep $\aprecisi'$ in $\precisis^{fn}$ (and $\aprecisi'$
      is understood of the first copy of itself in $\amodel^{fn}$).
      Otherwise, let $S_1, \ldots, S_{\ell}$ be the set of
      all  strict subsets of  $L^f(\aprecisi') \cap \standpoints(\aformula_1 \wedge \aformula_2)$ in $\mathbb{S}$. Note that $\ell \leq N_1$.
       From  $\amodel^{f}$ to  $\amodel^{fn}$, we replace $\aprecisi'$
      by $\ell$ new copies $\aprecisi'_1, \ldots, \aprecisi'_{\ell}$
      such that all $i \in \{1, \dots, \ell \}$,
      $L^{fn}(\aprecisi'_i) \egdef S_i \cup L^f(\aprecisi') \cap \varprop(\aformula_1 \wedge \aformula_2)$. Two distinct copies $\aprecisi'_i$ and $\aprecisi'_j$ agree on the interpretation
      of the propositional variables from $\varprop(\aformula_2)$
      but may differ on the interpretation of the standpoint symbols. 
      One can show by structural induction that
      for all $\aprecisi' \in \precisis^f$, for all copies $\aprecisi'_i$, for all
      subformulae $\aformulabis$
      of $\big( \bigwedge_{\astandpoint \in \standpoints(\aformula_1 \wedge \aformula_2)} \Diamond_{\astandpoint} \top \big)
      \wedge \aformula_1 \wedge \aformula_2$, we have
      $\amodel^f, \aprecisi' \models \aformulabis$
      implies 
      $\amodel^{fn}, \aprecisi'_i \models \aformulabis$.
      This can be done by structural induction. The base case with literals 
      is easy as well as the cases in the induction step with
      the Boolean connectives $\vee$ and $\wedge$.
      By way of example,
      here are a few more cases.
  \begin{itemize}
  \item Suppose that $\amodel^f, \aprecisi' \models \Diamond_{\astandpoint} \aformulabis$.
    There is $\aprecisi'' \in \precisis^f$ such that
    $\astandpoint \in L^f(\aprecisi'')$ and
    $\amodel^f, \aprecisi'' \models  \aformulabis$.
    If $L^f(\aprecisi'')
    \cap \standpoints(\aformula_1 \wedge \aformula_2)
    \in \mathbb{S}$, then  by the induction hypothesis
    $\amodel^{fn}, \aprecisi''_1 \models  \aformulabis$
    and $L^f(\aprecisi'') = L^{fn}(\aprecisi'')$. 
    Otherwise, 
    let $S_1, \ldots, S_{\ell}$ be the set of
    all  strict subsets of  $L^f(\aprecisi'')
    \cap \standpoints(\aformula_1 \wedge \aformula_2)$ in $\mathbb{S}$.
    We have $S_1 \cup \cdots \cup S_{\ell} = L^f(\aprecisi'')
    \cap \standpoints(\aformula_1 \wedge \aformula_2)$.
    Let $i$ be such that $\astandpoint \in S_i$.
    Consequently,
    by the induction hypothesis,
    $\amodel^{fn}, \aprecisi''_i \models  \aformulabis$
    and $\astandpoint \in L^{fn}(\aprecisi''_i)$.
    Hence,  in all cases,
    $\amodel^{fn}, \aprecisi'_j \models \Diamond_{\astandpoint} \aformulabis$ for any $j$ ($j$ depending
    on $\aprecisi'$). 
    
  \item Suppose that $\amodel^f, \aprecisi' \models \Box_{\astandpoint} \aformulabis$.   
    For all $\aprecisi'' \in \precisis^f$ such that $\astandpoint \in
    L^f(\aprecisi'')$, we have $\amodel^f, \aprecisi'' \models  \aformulabis$.
    {\em Ad absurdum}, suppose that  not $\amodel^{fn}, \aprecisi^{\circ}_j \models  \aformulabis$
    for some $\aprecisi^{\circ}_j \in \precisis^{fn}$, $\aprecisi^{\circ} \in \precisis^f$
    and $\astandpoint \in L^{fn}(\aprecisi^{\circ}_j) \cap L^f(\aprecisi^{\circ})$.
    By the induction hypothesis, $\amodel^{f}, \aprecisi^{\circ} \not \models  \aformulabis$.
    Since $\astandpoint \in  L^f(\aprecisi^{\circ})$, this leads to a contradiction with
    $\amodel^f, \aprecisi' \models \Box_{\astandpoint} \aformulabis$.
 
  \end{itemize}

      Observe for all $\aprecisi' \in \precisis^{fn}$,
      we have $L^{fn}(\aprecisi') \cap \standpoints(\aformula_1 \wedge \aformula_2) \in
      \mathbb{S}$. Moreover, for all $S \in \mathbb{S}$,
      \[
      1 \leq \card{\set{\aprecisi' \in \precisis^{fn} :
        L^{fn}(\aprecisi') \cap \standpoints(\aformula_1 \wedge \aformula_2) = S}
      }
      \leq N_1 + N_2 +1 .
      \]

      (Populate) Finally, we may need to add extra copies of precisifications
      from $\amodel^{fn}$ to get the desired final model $\amodel^{\star} = \pair{\precisis^{\star}}{L^{\star}}$.
      Assuming $\aprecisi_1$ is the first copy of the initial precisification $\aprecisi$
      satisfying $\amodel, \aprecisi \models \big(
      \bigwedge_{\astandpoint \in \standpoints(\aformula_1 \wedge \aformula_2)} \Diamond_{\astandpoint} \top \big)
      \wedge \aformula_1 \wedge \aformula_2$, we  have 
      $\amodel^{fn}, \aprecisi_1 \models \big(
      \bigwedge_{\astandpoint \in \standpoints(\aformula_1 \wedge \aformula_2)} \Diamond_{\astandpoint} \top \big)
      \wedge \aformula_1 \wedge \aformula_2$ and $L^{fn}(\aprecisi_1) \cap \standpoints(\aformula_1 \wedge \aformula_2) = S^{*}$

      We recall that $N$ from the statement of Theorem~\ref{theorem-psl} is a natural number strictly greater than $N_1 + N_2$.  
      For all $S \in \mathbb{S}$, if
      \[
      N - \card{\set{\aprecisi' \in \precisis^{fn} :
        L^{fn}(\aprecisi') \cap \standpoints(\aformula_1 \wedge \aformula_2) = S}} = K > 0,
      \]
      we pick one element $\aprecisi'$ in the above-mentioned set, and we copy it $K$ times
      leading to the additional precisifications $\aprecisi^{\star}_1, \ldots, \aprecisi^{\star}_K$
      such that  for all $i \in \{1, \dots,K \}$,
      $L^{\star}(\aprecisi^{\star}_i) \egdef S \cup L^{fn}(\aprecisi') \cap \varprop(\aformula_1 \wedge \aformula_2)$.
      The set $\precisis^{\star}$ is made of the elements of $\precisis^{fn}$ plus the additional
      precisifications. One can check that for all $S \in \mathbb{S}$,
      there are exactly $N$ precisifications in
      $\set{\aprecisi' \in \precisis^{\star} :
        L^{\star}(\aprecisi') \cap \standpoints(\aformula_1 \wedge \aformula_2) = S}$
      and $\amodel^{\star}, \aprecisi_1 \models \big(
      \bigwedge_{\astandpoint \in \standpoints(\aformula_1 \wedge \aformula_2)} \Diamond_{\astandpoint} \top \big)
      \wedge \aformula_1 \wedge \aformula_2$.
      Indeed, as shown above, we can show that 
      for all $\aprecisi' \in \precisis^{fn}$, for all copies $\aprecisi^{\star}_i$, for all
      subformulae $\aformulabis$
      of $\big( \bigwedge_{\astandpoint \in \standpoints(\aformula_1 \wedge \aformula_2)} \Diamond_{\astandpoint} \top \big)
      \wedge \aformula_1 \wedge \aformula_2$, we have
      $\amodel^{fn}, \aprecisi' \models \aformulabis$
      implies 
      $\amodel^{\star}, \aprecisi^{\star}_i \models \aformulabis$.
      The cases for $\Diamond_{\astandpoint}$ and $\Box_{\astandpoint}$ are handled
      exactly as for the previous transformation. 
      
      Consequently, we can assume that
      $\precisis^{\star} = \mathbb{S} \times \{1, \dots,N\}$
      and $\amodel^{\star}, \pair{S^{*}}{1} \models \big(
      \bigwedge_{\astandpoint \in \standpoints(\aformula_1 \wedge \aformula_2)} \Diamond_{\astandpoint} \top \big)
      \wedge \aformula_1 \wedge \aformula_2$.
      Obviously, this is sufficient to satisfy the conditions~(1.) and~(2.) from Theorem~\ref{theorem-psl}.
      As far as condition~(3.) is concerned, it is satisfied first by construction of $\amodel^{fn}$ and then
      by the way copies are made in $\amodel^{\star}$. 
\end{proof}

\section{Proof details of Lemma~\ref{lemma-automaton-converse}}

\begin{proof} Assume that $B_0, B_1, \dots$ is an accepting run of
  $\aautomaton_{\aformula_D}$  on a word
$A_0 A_1  \dots$. We recall that
$
\aautomaton_{\aformula_D} =
( \autstates, \powerset{\varprop(\aformula_D)}, \delta, \autstates_0 , \acccon)$.
Consequently, for all $i$, we have $B_{i+1} \in \delta(B_i,A_i)$. 

From such a run, we construct an \SLTL model $\amodel = (\Pi, \lambda)$ and a trace
$\altlmodel \in \Pi$ such that $\amodel, \altlmodel, 0 \models \aformula_D$.
For each $i \in \Nat$, by standpoint consistency of $B_i$, 
there exists a $\PSL$ model
%% $\amodel_i=(\precisis_i,L_i)$
$\amodel_i=(\precisis_i,V_i)$
and a
precisification $\aprecisi_i \in \precisis_i$ such that
$\amodel_i, \aprecisi_i \models \bigwedge_{\aformulabis \in B_i \cap \PSL} \aformulabis$.
Each formula $\bigwedge_{\aformulabis \in B_i \cap \PSL} \aformulabis$ can be written in the form
$\aformula_1 \land \aformula_2$
assumed in  Theorem~\ref{theorem-psl}.
Indeed, the only atomic formulae of the form $\astandpoint \preceq \astandpoint'$
in $\cl(\aformula_D)$ are those from $I^+$, and due to $\aformulater_D$, each
$\astandpoint \preceq \astandpoint'$ can occur only positively in $B_i$. Otherwise
$\bigwedge_{\aformulabis \in B_i \cap \PSL} \aformulabis$ is not satisfiable.
Indeed, {\em ad absurdum}, suppose that some $\astandpoint \preceq \astandpoint'$
in $\cl(\aformula_D)$ occurs negatively in $B_i$, i.e.
$\neg (\astandpoint \preceq \astandpoint') \in B_i$.
By construction of $\cl(\aformula_D)$, $\astandpoint \preceq \astandpoint'$
belongs to $I^+$. However, using the definition of the temporal connective $\always$,
maximal consistency of the $B_j$'s implies that 
$\always \aformulabis \in B_j$ iff $\aformulabis \in B_j$ and $\mynext \always
\aformulabis \in B_j$.
Since $\aformulater_D \in B_0$ and by maximal consistency, we can conclude that for all
$j \in \Nat$, $\astandpoint \preceq \astandpoint' \in B_j$, which leads to a contradiction.

For each $i \in \Nat$, the formula 
$\bigwedge_{\aformulabis \in B_i \cap \PSL} \aformulabis$ yields $R$, $N_1$, and $N_2$ used in Theorem~\ref{theorem-psl}.
Importantly, all these $R$ are the same, all $N_1 \leq \card{\standpoints(\aformula_D)}$, and all 
$N_2 \leq \size{\aformula_D}^2$.
Therefore we can apply 
Theorem~\ref{theorem-psl} with $N= \card{\standpoints(\aformula_D)}
+ \size{\aformula_D}^2 +1$
to obtain that
there are
$\PSL$ models
%% $\amodel_i' = \pair{\precisis_i'}{L_i'}$
$\amodel_i' = \pair{\precisis_i'}{V_i'}$
such that
   $\precisis_i' = \mathbb{S} \times \{1, \dots ,N\}$, 
  $\amodel_i', \pair{S^{*}}{1} \models \bigwedge_{\aformulabis \in B_i \cap \PSL} \aformulabis$
  (with $S^* = R(*)$),
   and
   for all $\pair{S}{j} \in \precisis_i'$, we have
   $\set{\astandpoint \in \standpoints(\aformula_D) \mid \pair{S}{j} \in V_i'(\astandpoint)} = S$. 
   %% $L(S,j) \cap \standpoints(\aformula) = S$.
   % The value $N$ is designed to be large enough to cover all the cases
   % and it is polynomial in the size of the original formula $\aformula$.
%
We  use these models
to define an \SLTL model  $\amodel = \pair{\precisis}{\lambda}$ and $\altlmodel \in \precisis$.

We let $\precisis$ 
be $\set{\altlmodel_{\pair{S}{j}} \mid \pair{S}{j} \in \mathbb{S} \times \{1, \dots ,N \} }$
% (i.e. $\mathbb{S} \times \interval{1}{N}$ and $\precisis$ are in correspondence)
and $\lambda$ be  such that
\begin{itemize}
\itemsep 0 cm 
  \item for all $\astandpoint \in \standpoints(\aformula_D)$, $\atrace_{\pair{S}{j}} \in \lambda(\astandpoint)$
    iff $\astandpoint \in S$,
  \item for all $\pair{S}{j} \in \mathbb{S} \times \{1, \dots, N\}$, for all
    $k \geq 0$, we have
    $\atrace_{\pair{S}{j}}(k) \egdef
    \set{\avarprop \in \varprop(\aformula_D) \mid \pair{S}{j} \in V_k'(\avarprop)}$. 
    %% $\atrace_{\pair{S}{j}}(k) = L_k(S,j) \cap \varprop(\aformula_D)$.
  \end{itemize}
Finally, we let $\altlmodel$ be $\altlmodel_{\pair{S^{*}}{1}}$.
The construction of $\amodel$ is schematised in Figure~\ref{figure-m-construction}.

A few interesting properties about $\amodel$ can be noticed:
$\card{\precisis} = \card{\mathbb{S}} \times N$ and
each $\altlmodel_{\pair{S}{j}}$ is built from the propositional restriction of the valuations
$V_0, V_1, \ldots$. 

Observe that $\aformula_D \in B_0$.
Hence, to prove that $\amodel, \altlmodel_{\pair{S^{*}}{1}}, 0 \models \aformula_D (=
\aformula[I^+ \mapsto \top,I^- \mapsto \bot]
\wedge \aformulater_D)$, we need to show 
by structural induction that for all subformulae
 $\aformulabis$ of $\aformula[I^+ \mapsto \top,I^- \mapsto \bot]$ and all $i \in \Nat$ we have
$ \aformulabis \in B_i$
iff 
$\amodel, \altlmodel_{\pair{S^{*}}{1}}, i  \models \aformulabis$.

In the basis of the induction we let  $\aformulabis'$ be any subformula which does not mention \LTL connectives.
Note that for such a formula $\aformulabis'$, the equivalence follows from the construction of $\amodel'_i$.
Indeed, $\aformulabis' \in B_i$ implies $\amodel_i', \pair{S^{*}}{1} \models \aformulabis'$
by Theorem~\ref{theorem-psl}, condition (2.).
Conversely, suppose that $\amodel_i', \pair{S^{*}}{1} \models \aformulabis'$.
{\em Ad absurdum}, suppose that
$\aformulabis' \not \in B_i$, i.e. $\neg \aformulabis' \in B_i$ by maximal
consistency. Consequently, $\aformulabis'$ occurs negatively in
$\bigwedge_{\aformulabis \in B_i \cap \PSL} \aformulabis$,
and by construction of $\amodel'_i$, we get
$\amodel_i', \pair{S^{*}}{1} \not \models \aformulabis'$, which leads to a contradiction.

Regarding the inductive step,
since
$\aformula$ does not mention \LTL connectives in the scope of standpoint operators,  
the cases in the inductive steps are for Boolean and \LTL connectives only.
Thus, the proof is analogous as in the case of automaton construction for
\LTL~\cite{baier2008principles,demri2016temporal}.

In order to verify that
$\amodel, \altlmodel_{\pair{S^{*}}{1}}, 0 \models \aformulater_D$,
first note that since 
$\left( \bigwedge_{(\astandpoint \preceq \astandpoint') \in I^+} ( \astandpoint \preceq \astandpoint') 
\land 
\bigwedge_{(\astandpoint \preceq \astandpoint') \in I^-}  ( \Diamond_{\astandpoint}  \ \avarprop_{\astandpoint,\astandpoint'}
\wedge \neg \Diamond_{\astandpoint'}  \ \avarprop_{\astandpoint, \astandpoint'}) \right)
\in B_i$ for all $i$, we have that such a conjunct belongs to
each $\bigwedge_{\aformulabis \in B_i \cap \PSL} \aformulabis$.
Consequently,  for all $i$, 
we have
\[
\amodel_i', \pair{S^*}{1} \models
\bigwedge_{(\astandpoint \preceq \astandpoint') \in I^+} (\astandpoint \preceq \astandpoint')
\wedge
\bigwedge_{(\astandpoint \preceq \astandpoint') \in I^-} \neg (\astandpoint \preceq \astandpoint').
\]
Let $\astandpoint \preceq \astandpoint' \in I^+$ and suppose
that $\altlmodel_{\pair{S}{j}} \in \lambda(\astandpoint)$.
By definition of $\lambda$, we have $\astandpoint \in S$.
Since $\astandpoint \preceq \astandpoint' \in I^+$, $\astandpoint' \in R(\astandpoint)$
and therefore $\astandpoint' \in S$ too.
By definition of $\lambda$, we get $\astandpoint' \in S$
and therefore $\altlmodel_{\pair{S}{j}} \in \lambda(\astandpoint')$. 
Consequently, $\amodel, \altlmodel_{\pair{S^{*}}{1}}, 0 \models
\bigwedge_{(\astandpoint \preceq \astandpoint') \in I^+} ( \astandpoint \preceq \astandpoint') $
and therefore
 $\amodel, \altlmodel_{\pair{S^{*}}{1}}, 0 \models
\always \bigwedge_{(\astandpoint \preceq \astandpoint') \in I^+} ( \astandpoint \preceq \astandpoint')$
(because the expressions of the form $\astandpoint \preceq \astandpoint'$ state global
properties about $\amodel$).
In order to verify that
$\amodel, \altlmodel_{\pair{S^{*}}{1}}, 0 \models
\always \bigwedge_{(\astandpoint \preceq \astandpoint') \in I^-}  ( \Diamond_{\astandpoint}  \ \avarprop_{\astandpoint,\astandpoint'}
\wedge \neg \Diamond_{\astandpoint'}  \ \avarprop_{\astandpoint, \astandpoint'})$,
it is sufficient to observe that
for all $i$,
$\bigwedge_{(\astandpoint \preceq \astandpoint') \in I^-}  ( \Diamond_{\astandpoint}  \ \avarprop_{\astandpoint,\astandpoint'}
\wedge \neg \Diamond_{\astandpoint'}  \ \avarprop_{\astandpoint, \astandpoint'})
\in B_i$ and
$ \amodel_i', \pair{S^*}{1} \models
\bigwedge_{(\astandpoint \preceq \astandpoint') \in I^-}  ( \Diamond_{\astandpoint}  \ \avarprop_{\astandpoint,\astandpoint'}
\wedge \neg \Diamond_{\astandpoint'}  \ \avarprop_{\astandpoint, \astandpoint'})$ by
Theorem~\ref{theorem-psl}, condition (2.).
\end{proof}

\end{document}